%% file: manuscript.tex
\documentclass[12pt]{article}
\usepackage[letterpaper,margin=1in]{geometry}
\usepackage{setspace}
\usepackage{amssymb}
\usepackage{bbm}
\usepackage{booktabs}
\usepackage{caption}
\usepackage{graphicx}
\usepackage{amsmath}
\usepackage{float}
\usepackage{geometry}   %page block size
\usepackage{fancyhdr}   %frontpage
\usepackage{listings} 
\usepackage{xcolor}

\definecolor{revisionblue}{RGB}{0,72,160}
\newcommand{\rev}[1]{{\color{black}#1}}
\newenvironment{revision}{\begingroup\color{black}}{\endgroup}

\usepackage{amsthm}
\usepackage{mathrsfs}
\usepackage{algpseudocode}
\usepackage{algorithm}
\makeatletter
\renewcommand{\theALG@line}{\thealgorithm.\arabic{ALG@line}}
\providecommand{\theHALG@line}{}
\renewcommand{\theHALG@line}{\thealgorithm.\arabic{ALG@line}}
\makeatother
\usepackage[colorlinks,
linkcolor=blue,
anchorcolor=blue,
citecolor=blue]{hyperref}
\usepackage{comment}
\usepackage[round]{natbib}
\usepackage{enumitem}
\usepackage{bbm}
\usepackage{cleveref}
\usepackage{subcaption}

\newtheorem{theorem}{Theorem}
\newtheorem{lemma}{Lemma}
\newtheorem{definition}{Definition}
\newtheorem{proposition}{Proposition}

\newtheorem{assumption}{Assumption}
\newtheorem{remark}{Remark}

\crefname{assumption}{assumption}{assumptions}
\Crefname{assumption}{Assumption}{Assumptions}

\usepackage{titlesec}

\titlespacing*{\section}{0pt}{1.0ex plus 0.2ex minus 0.1ex}{0.5ex}
\titlespacing*{\subsection}{0pt}{0.8ex plus 0.2ex minus 0.1ex}{0.4ex}
\titlespacing*{\subsubsection}{0pt}{0.6ex plus 0.1ex minus 0.1ex}{0.3ex}

\algrenewcommand\algorithmicrequire{\textbf{Input:}}
\algrenewcommand\algorithmicensure{\textbf{Output:}}

\def\calC{{\mathcal C}}
\def\calD{{\mathcal D}}
\def\calE{{\mathcal E}}

\def\calM{{\mathcal M}}
\def\calN{{\mathcal N}}

\def\calP{{\mathcal P}}
\def\calQ{{\mathcal Q}}

\def\calS{{\mathcal S}}
\def\calT{{\mathcal T}}

\def\calW{{\mathcal W}}

\def\bcalA{{\boldsymbol{\mathcal A}}}

\def\bcalC{{\boldsymbol{\mathcal C}}}

\def\bcalM{{\boldsymbol{\mathcal M}}}

\def\bcalS{{\boldsymbol{\mathcal S}}}

\def\bcalX{{\boldsymbol{\mathcal X}}}

\def\EE{{\mathbb E}}

\def\OO{{\mathbb O}}
\def\PP{{\mathbb P}}

\def\RR{{\mathbb R}}

\def\c{{\boldsymbol c}}
\def\d{{\boldsymbol d}}
\def\e{{\boldsymbol e}}

\def\h{{\boldsymbol h}}

\def\u{{\boldsymbol u}}

\def\x{{\boldsymbol x}}
\def\y{{\boldsymbol y}}

\def\A{\mathbf{A}}
\def\B{\mathbf{B}}
\def\C{\mathbf{C}}
\def\D{\mathbf{D}}
\def\E{\mathbf{E}}

\def\H{\mathbf{H}}
\def\I{\mathbf{I}}

\def\M{\mathbf{M}}

\def\O{\mathbf{O}}
\def\P{\mathbf{P}}

\def\R{\mathbf{R}}
\def\S{\mathbf{S}}

\def\U{\mathbf{U}}
\def\V{\mathbf{V}}
\def\W{\mathbf{W}}
\def\X{\mathbf{X}}
\def\Y{\mathbf{Y}}
\def\Z{\mathbf{Z}}

\def\bY{{\boldsymbol Y}}
\def\bP{{\boldsymbol P}}

\def\bSigma{{\boldsymbol \Sigma}}
\def\bTheta{{\boldsymbol \Theta}}
\def\bDelta{{\boldsymbol \Delta}}

\def\bPi{{\boldsymbol \Pi}}

\def\btheta{{\boldsymbol \theta}}
\def\bome{\boldsymbol{\omega}}

\def\bpi{{\boldsymbol \pi}}

\def\rank{\textsf{rank}}

\def\tr{\textsf{Tr}}

\def\vec{\textsf{vec}}
\def\unvec{\textsf{mat}}

\def\hat{\widehat}
\def\tilde{\widetilde}

\newcommand\fro[1]{\left\| #1 \right\|_{\rm{F}}}
\newcommand\bfro[1]{\big\| #1 \big\|_{\rm{F}}}
\newcommand\op[1]{\left\| #1 \right\|}
\newcommand\bop[1]{\big\| #1 \big\|}

\newcommand{\inp}[2]{\left\langle #1,#2\right\rangle}

\newcommand\brac[1]{\left(#1\right)}

\newcommand\ebrac[1]{\left\{#1\right\}}
\newcommand\sqbrac[1]{\left[#1\right]}
\newcommand\ab[1]{\left|#1\right|}

\def\spacingset#1{\renewcommand{\baselinestretch}%
{#1}\small\normalsize}

\begin{document}
\spacingset{1}
\title{Mixed Membership Model of Low-rank Matrices with Multimodal Extension}
\author{David Snider$^1$, Zhongyuan Lyu$^2$, Jian Kang$^3$, Yuqi Gu$^1$}
\date{$^1$Department of Statistics, Columbia University\\
$^2$The University of Sydney Business School\\
$^3$Department of Biostatistics, University of Michigan
}
\maketitle
\begin{abstract}
\begin{revision}
Matrix-valued observations arise in multiplex networks, neuroimaging, and other domains where population-level patterns are often low-rank and subjects may express several latent patterns simultaneously. Existing tensor PCA methods provide continuous subject scores but their loading matrices can be difficult to interpret as population prototypes, while low-rank clustering yields interpretable prototypes with hard labels. We introduce a low-rank mixed membership model for matrix-valued data in which the expected value of each subject’s matrix is a convex combination of latent low-rank basis matrices. The model yields both interpretable population-level extreme profiles and continuous subject-level memberships. Our multimodal extension shares memberships across modalities with modality-specific basis matrices and can restore identifiability when one modality is insufficient. We establish identifiability under a pure-subject condition, propose a constrained least-squares estimator and scalable algorithm with spectral initialization and low-rank refinement, and derive nonasymptotic error bounds. The estimator achieves a minimax-optimal reconstruction rate up to a logarithmic factor, with separate basis and membership convergence rates under a geometric condition. Simulations corroborate the theoretical rates and show strong performance. In an analysis of Human Connectome Project functional connectivity data, the proposed method identifies interpretable brain connectivity profiles whose estimated memberships are strongly associated with cognitive phenotypes.
\end{revision}

\end{abstract}

\emph{Keywords}: \rev{Mixed-membership model; Matrix-valued data; Low-rank estimation; Tensor decomposition; Multimodal data integration; Neuroimaging data analysis.}

\spacingset{1.7}

\section{Introduction}

\begin{revision}
Modern research in the biological and social sciences increasingly generates datasets of matrix-valued observations. Examples include functional and structural brain connectivity matrices \citep{Zhang2019TensorNetwork, Liu2023joint}, commodity trade networks between countries \citep{Lyu2023LatentSpaceHigherOrder, Cai2023GeneralizedLowRankSparseTensor}, genetic and bacterial covariance matrices \citep{Stanley2016StrataMLSBM, Larremore2013, Jing2021CommunityDetectionMixtureMultilayer}, and multi-layer social networks \citep{Dong2012ClusteringMultiLayer}. In these applications, the matrix structure itself contains scientifically meaningful information that is often lost by vectorization. Accordingly, the works cited above have developed statistical methods that directly model matrix- and network-valued data while exploiting their inherent structural characteristics, such as spatial dependence, network organization, and low-dimensional representations. In particular, such methods have yielded substantial improvements in interpretability, statistical efficiency, and scalability, especially for \emph{neuroimaging studies}~\citep{Kang2018SOIR, Li2019SAVC, WuGuoKang24, Li2026BSNMani, Kang2026NeuroOpportunities}.

One common objective in the analysis of matrix-valued observations is to recover matrix-valued structures that characterize prototypical patterns in the population, and which can be interpreted scientifically and related to auxiliary covariates. A motivating application is the \emph{Human Connectome Project} (HCP)~\citep{VanEssen13}, which provides functional brain connectivity matrices collected from hundreds of healthy participants performing multiple cognitive tasks, together with rich behavioral and cognitive phenotypes. In analyzing this dataset, a central scientific question is whether the subjects can be characterized by a small number of latent connectivity networks that are both biologically interpretable and predictive of cognitive function. 

Two statistical phenomena are central to the pursuit of this objective. First, latent matrix-valued patterns are often \emph{approximately low-rank}. For multiplex network data, prototypical patterns often have the structure of a stochastic blockmodel or latent space model \citep{Holland1983, Hoff2002LatentSpace, Jing2021CommunityDetectionMixtureMultilayer, Arroyo2021COSIE}; for neuroimaging data, shared connectivity behavior among neurons belonging to the same functional brain systems can similarly lead to structured low-dimensional patterns \citep{Power11,Li2026BSNMani}. Second, individual observations often \emph{vary continuously between latent extremes}. Political beliefs, genetic ancestry, disease presentation, and cognitive phenotypes are rarely well represented by mutually exclusive classes, and they more naturally lie on a continuum between extreme latent patterns \citep{CHG24, Novembre2008Genes}. A useful method for matrix-valued data should therefore exploit low-rank structure while retaining a continuous and interpretable subject embedding.

Existing methods tend to provide only one side of this combination. Tensor PCA and related semi-symmetric tensor factorization methods produce continuous subject scores that are significantly related to phenotypic traits \citep{Zhang2019TensorNetwork, Liu2023joint, Weylandt2025SS-TPCA}. However, their factor loading matrices are reconstruction directions: later components may describe residual variation rather than coherent population prototypes, making direct scientific interpretation difficult. Low-rank clustering methods take the opposite perspective. The Low-rank Lloyd algorithm of \cite{LX25} estimates interpretable low-rank cluster centers and has sharp clustering theory, but it assigns each observation to one class. This hard-label structure is restrictive when subjects express latent patterns in varying proportions. Figure~\ref{fig:model-illustration} illustrates this tradeoff.
\end{revision}

\begin{figure}[htbp]
    \centering
    \begin{subfigure}[t]{0.29\linewidth}
        \centering
        \includegraphics[width=0.9\linewidth]{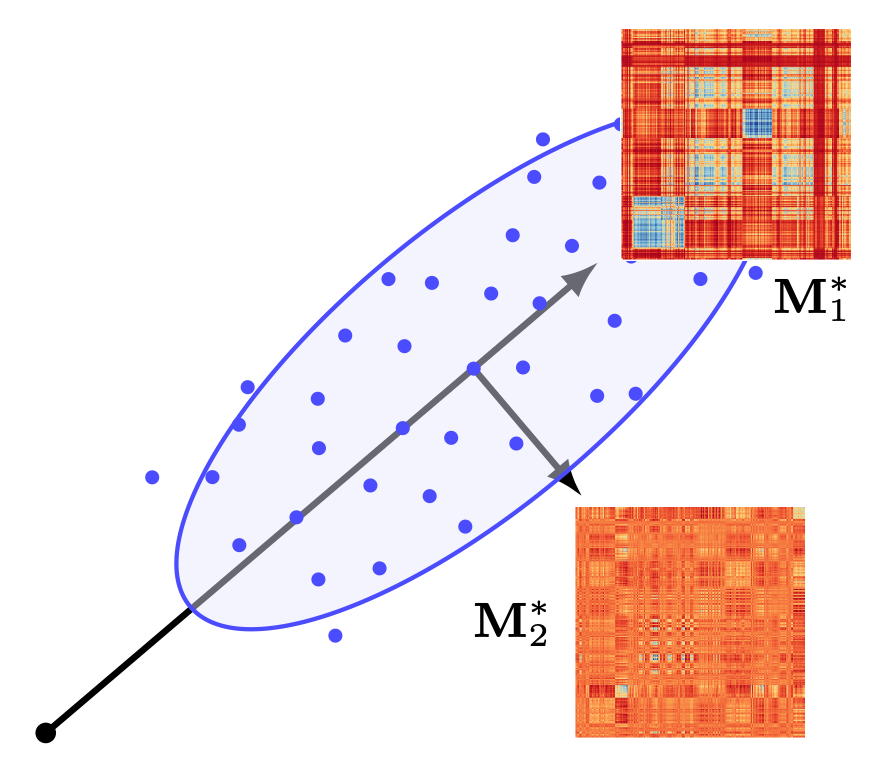}
        \subcaption{Tensor PCA}
    \end{subfigure}
    \qquad
    \begin{subfigure}[t]{0.29\linewidth}
        \centering
        \includegraphics[width=0.9\linewidth]{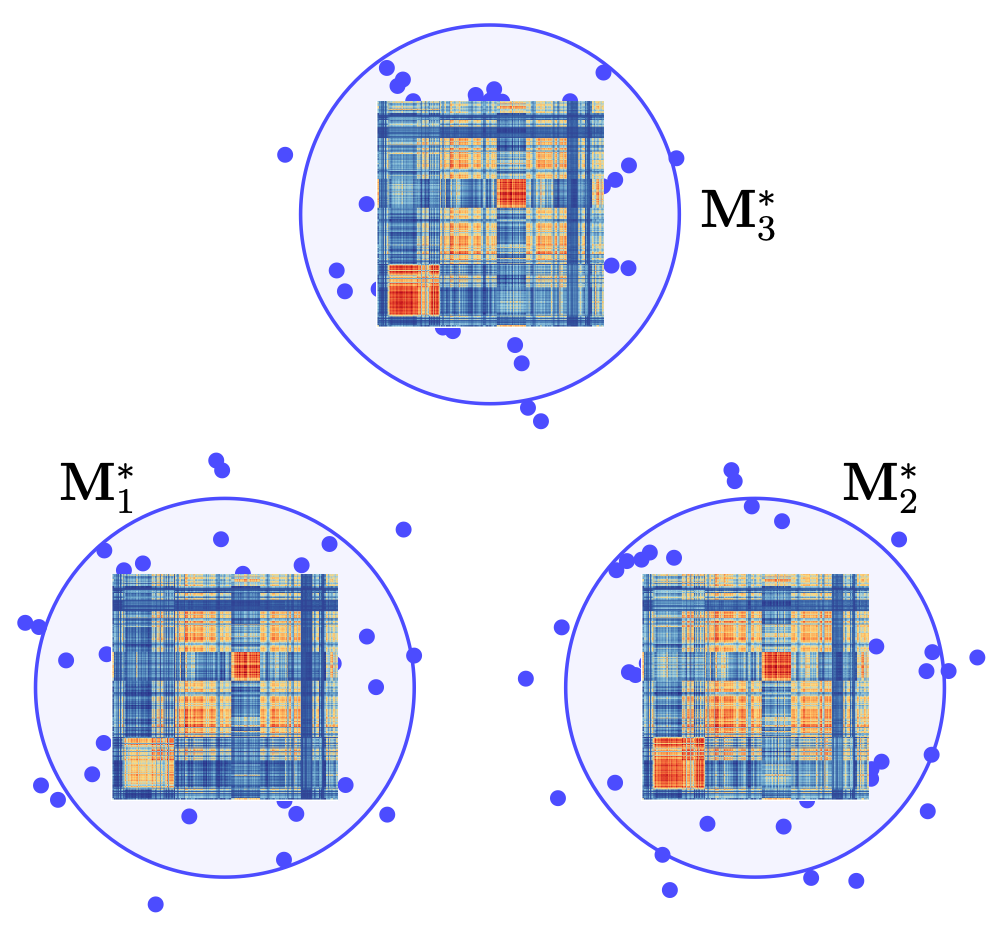}
        \subcaption{Low-rank clustering}
    \end{subfigure}
    \qquad
    \begin{subfigure}[t]{0.29\linewidth}
        \centering
        \includegraphics[width=\linewidth]{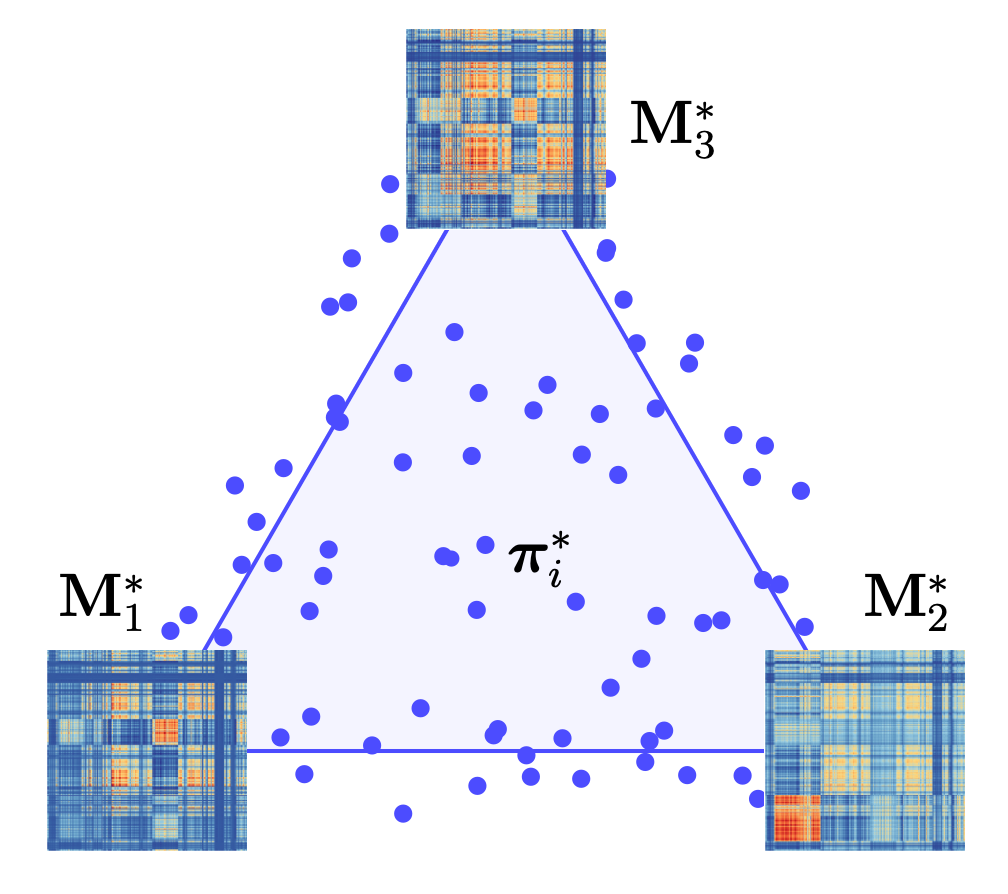}
        \subcaption{Proposed LrMMM}
    \end{subfigure}

    \caption{\rev{Comparison of tensor PCA models \citep{Liu2023joint, Weylandt2025SS-TPCA}, low-rank clustering \citep{LX25}, and the proposed LrMMM. Tensor PCA gives continuous scores with reconstruction-oriented loadings; low-rank clustering gives interpretable prototypes with hard labels; LrMMM gives interpretable low-rank extreme profiles with continuous simplex memberships.}}
    \label{fig:model-illustration}
\end{figure}

We address this gap by proposing a low-rank mixed membership model for matrix-valued observations. Mixed membership models represent each observation as a convex combination of latent extreme profiles, a perspective that underlies topic models such as latent Dirichlet allocation \citep{Blei2003latent} and grade-of-membership models in survey analysis \citep{Erosheva02}. We extend this perspective to matrix-valued data by modeling each subject's mean matrix as a convex combination of $K$ latent low-rank basis matrices. The basis matrices play the role of interpretable population-level extremes, while the convex weights provide a continuous embedding of each subject in the probability simplex. Thus, unlike tensor PCA, the latent matrices are intended to characterize population profiles directly. Unlike clustering, subjects are allowed to express several profiles simultaneously.

We also introduce the extension of the proposed model for multimodal studies. In many scientific settings, multiple sources of information describing each observation are available; multimodal methods have successfully exploited such additional data to improve signal and generate new insights in biology and social sciences \citep{Lock2013JIVE, Feng2018angle, Liu2023joint}. We focus on the setting in which each observation contributes several matrices, such as functional connectivity networks collected under different tasks or multiple network modalities measured on the same population. Our multimodal extension shares one membership vector across modalities while allowing each latent profile to have modality-specific low-rank basis matrices. This structure pools information across data sources without forcing the same matrix pattern to appear in every modality.

This paper makes four primary contributions. First, we introduce the mixed membership framework for low-rank matrix-valued data, bridging the gap between tensor factorization methods that provide continuous subject representations and clustering methods that estimate interpretable population prototypes. Second, we extend the framework to multimodal matrix-valued data by sharing subject memberships across modalities while allowing modality-specific basis matrices, and establish identifiability under mild conditions. Third, we develop an efficient estimation procedure with provable statistical guarantees, including minimax-optimal reconstruction error (up to a logarithmic factor) and convergence rates for both latent basis matrices and subject memberships. Finally, we demonstrate through simulations and Human Connectome Project data that the proposed approach yields interpretable latent connectivity patterns and biologically meaningful continuous subject embeddings.

The rest of the paper is organized as follows. Section~\ref{sec:Model} introduces LrMMM and its multimodal extension, and establishes model identifiability. Section~\ref{sec:Estimation} presents the constrained estimator and practical algorithms. Section~\ref{sec:StatisticalGuarantees} gives the theoretical guarantees. Section~\ref{sec:Simulations} reports simulation studies. Section~\ref{sec:DataAnalysis} analyzes Human Connectome Project data. Proofs and additional numerical details are provided in the Supplementary Material.

\section{Proposed Model and Its Identifiability}\label{sec:Model}
\subsection{Notation}

For any matrix $\A \in \RR^{n\times p}$,  we use $\mathbf{A}_{i,:}$ or $\mathbf{A}(i,:)$ to denote its $i$-th row vector, and we use $\mathbf{A}_{:,j}$ or $\mathbf{A}(:,j)$ to denote its $j$-th column vector. Let $\text{vec}(\A)\in\RR^{np}$ be the vector satisfying $\text{vec}(\A)((j-1)n + i) = \A_{ij}$, and let $\text{mat}_{n,p}$ be the function satisfying $\A = \text{mat}_{n,p}(\text{vec}(\A))$. Let $\sigma_k(\mathbf{A})$ denote the $k$-th largest singular value of $\mathbf{A}$ for $k = 1,...,\min\{n,p\}$. Let $\|\cdot\|$ denote the spectral norm (operator norm) for matrices and $\ell_2$ norm for vectors, and let $\|\cdot\|_{2,\infty}, \|\cdot\|_{\mathrm{F}}$ denote the two-to-infinity norm and Frobenius norm for matrices, respectively. Denote by $\OO_{d,r}$ the set of all $d \times r$ matrices $\U$ such that $\U^\top \U = \I_r$, where $\I_r$ is the $r \times r$ identity matrix. For matrices $\A \in \RR^{n_1 \times p_1}, \B \in \RR^{n_2 \times p_2}$, their Kronecker product is defined as the matrix $\A \otimes \B \in \RR^{n_1n_2 \times p_1p_2}$ satisfying $(\A\otimes\B)((i_1-1)n_2 + i_2, (j_1-1)p_2 + j_2) = \A(i_1, j_1)\B(i_2, j_2)$ for $i_k \in [n_k]$, $j_k \in [p_k]$, $k=1,2$. 

An order-p tensor is a $p$-dimensional array. For tensor $\bcalA \in \RR^{d_1 \times ... \times d_p}$, define $\text{vec}(\bcalA)\in\RR^{d_1d_2...d_p}$ such that $\text{vec}(\bcalA)(i_1 + (i_2 - 1)d_1 + (i_3 - 1)d_1d_2 + ... + (i_p-1)d_1d_2...d_{p-1}) = \bcalA(i_1, i_2, ..., i_p)$, and define $\bcalA(:,...,:, i_p)$ to be the order $(p-1)$ tensor satisfying $(\bcalA(:,...,:, i_p))(i_1, ..., i_{p-1}) = \bcalA(i_1, ..., i_p)$. Define the mode-$1$ unfolding of $\bcalA$, denoted $M_1(\bcalA)$, as the matrix in $\RR^{d_1 \times  d_2....d_p}$ whose $(i_1)$-th row is $\text{vec}(\bcalA(i_1,:,...,:))$; we define the mode-$k$ unfolding analogously. The mode-1 marginal multiplication between $\bcalA$ and a matrix $\mathbf{U}^\top \in \RR^{r_1 \times d_1}$ results in a tensor in $\RR^{r_1 \times d_2 \times ... \times d_p}$, whose elements are
$(\A \times_1 \U^\top)(j_1,i_2,..., i_p)
:= \sum_{i_1=1}^{d_1} \A(i_1,...,i_3)\U(i_1,j_1), \forall j_1 \in [r_1],\, i_2 \in [d_2],..., i_p \in [d_p]$. Mode $k$ multiplication is defined analogously.

We use $n\vee p:= \max(n,p)$ and $n \wedge p := \min(n,p)$. For matrices $\A, \B$, we say $\A \preceq \B$ (resp. $\A \succeq \B$) if and only if all singular values of $\B - \A$ (resp. $\A - \B$) are nonnegative. For two sequences $\{a_N\}, \{b_N\}$, we use $a_N \lesssim b_N$ (resp. $a_N \gtrsim b_N$) if and only if there exists some constant $C>0$ independent of $N$ such that $a_N \le C b_N$ (resp. $b_N \le C a_N$), and use $a_N \asymp b_N$ if and only if $a_N \lesssim b_N$ and $a_N \gtrsim b_N$ hold simultaneously. We use $a_N \ll b_N$ if and only if $a_N / b_N \to 0$. Define $\Delta_K:=\{\mathbf{x} \in \RR^K : \sum_i \mathbf{x}_i = 1, \text{ } \mathbf{x}_i \geq 0 \text{ } \forall i \}$, and define $S_K$ as the set of permutations on $K$ elements.

\subsection{Low-rank Mixed Membership Model}\label{sec:Model-Intro}

Consider a dataset with $n$ subjects, and suppose that each subject's data comes in the form of a real-valued $d_1 \times d_2$ matrix. Denote the matrix corresponding to the $i$-th subject as $\X_i$. At the level of the population, we assume there are $K$ latent \textit{extreme profiles}, each representing a prototypical matrix-valued pattern. In particular, the pattern corresponding to extreme profile $k$ is given by a deterministic, unknown \textit{basis matrix} $\M^*_k \in \RR^{d_1 \times d_2}$, which we assume to be a low-rank matrix with rank $r_k$ (where $r_k\ll d_1,d_2$). The $k$-th basis matrix represents the expectation of a matrix corresponding to a subject belonging solely to the $k$-th extreme profile. That is, if subject $i$ belongs solely to extreme profile $k \in [K]$, then $\X_i$ satisfies $\EE[\X_i] = \M^*_k$.
At the level of the individual, each subject $i$ is associated with a deterministic, unknown vector of mixed membership scores $\bpi^*_i = (\bpi^*_{i1}, ..., \bpi^*_{iK})$, with $\bpi^*_{ik} \geq 0$ and $\sum_{k=1}^K \bpi^*_{ik}=1$. The membership score $\bpi_{ik}^*$ denotes the extent to which subject $i$ partially belongs to extreme latent profile $k$. In particular, the expectation of the $i$-th subject's matrix is a convex combination of the basis matrices with weights determined by the membership scores. Altogether, the Low-rank Mixed Membership Model (LrMMM) assumes that: \begin{align}
    \X_i = \sum_{k=1}^K \bpi^*_{ik} \M_k^* + \tilde{\E}_i, \quad \text{ for } i=1,...,n,
    \label{model}
\end{align}
where $(\tilde{\E}_i)_{i=1}^n$ is a set of $n$ independent, mean-zero noise matrices of size $d_1 \times d_2$, each of which satisfies the sub-Gaussian error condition specified in Assumption \ref{ass:noise}. The goal of an analysis using the Low-rank Mixed Membership Model is to estimate the membership scores and basis matrices and thereby uncover the prototypical patterns in the population, along with the extent to which each individual displays those patterns. 

\begin{assumption}[Noise Distribution]
    \label{ass:noise}
    The entries of $\tilde{\E}_i$ are independent sub-Gaussian random variables. In particular, denoting the sub-Gaussian norm as $\|X\|_{\psi_2} = \inf \{ c>0: \EE [e^{X^2/c^2}] \leq 2 \}$, we have $\|\tilde{\E}_i(j_1, j_2)\|_{\psi_2} \leq \sigma$ for all $i, j_1, j_2$.
\end{assumption}

The principal difference between our model and the models used by other linear dimensionality reduction methods for matrix-valued observations lies in the structure of the embedding obtained by subject-level coefficients, given in our case by $(\bpi^*_i)_{i=1}^n$. The low-rank clustering method of \cite{LX25} models each subject's coefficients $\bpi^*_i$ to be a standard basis vector ascribing membership entirely to one class, whereas Tensor PCA methods model each subject's coefficients to be a real-valued vector \citep{Tucker1966some}. Figure \ref{fig:model-illustration} provides an illustration of these modeling differences, with subjects' data represented by blue dots.

In the context of the mixed membership modeling literature, the LrMMM is related to the mixed membership model of \cite{ChenGu24} for vector-valued observations. In their model, subject i's data $\y_i$ is a real-valued $p$-dimensional vector modeled as a convex combination of $K$ prototypical vectors: $\y_i = \sum_{k=1}^K \bpi^*_{ik} \btheta^*_k + \boldsymbol{\epsilon}_i$. Their computationally efficient spectral estimation method relies on the following structure inherent to their model. If $\Y$ is the $n\times p$ matrix whose $i$-th row is $\y_i$, $\bPi^*$ is the $n \times K$ matrix whose $i$-th row is $\bpi^*_i$, and $\bTheta^*$ is the $p \times K$ matrix whose $k$-th column is $\btheta^*_k$, then their model satisfies \begin{align}
    \EE[\Y] = \bPi^* \bTheta^{*\top} = \U^* \bSigma^* \V^{*\top},
    \label{spectral-unfolded}
\end{align}
where $\U^*$, $\bSigma^*$, and $\V^*$ are matrices from the compact-form SVD of $\EE[\Y]$. Their method applies rank-$K$ truncated SVD of $\Y$ to estimate $\U^*$, $\bSigma^*$, and $\V^*$, and applies additional post-processing and regression steps to estimate the membership vectors and extreme profile-level parameters. Their method's good performance relies on the ability of truncated SVD of $\Y$ to accurately estimate the compact form SVD of $\EE[\Y]$ provided the signal-to-noise ratio is sufficiently strong \citep{chen2021SMDS}. If we let $\y_i = \text{vec}(\X_i)$, $\btheta^*_k = \text{vec}(\M^*_k)$ and $p=d_1d_2$, we observe that our model can also be written in the form of \eqref{spectral-unfolded}, with the exception that our model enforces additional low-rankness at the level of the extreme profile-level parameters by requiring that $\text{rank}(\text{mat}_{d_1, d_2}(\btheta^*_k)) = r_k$. Accordingly, we will use the compressed notation $\bPi^*$, $\bTheta^*$ and $\Y$ to refer to our model's parameters when doing so is notationally convenient.

Our model may also be compared to the recently proposed Tensor Mixed-Membership Blockmodel of \cite{AZ25}. When applied to a dataset of matrix-valued observations, their model can be regarded as a special case of our model that additionally assumes the basis matrices have mixed-membership stochastic blockmodel structure of the form $\M^*_k = \bPi^*_1 \S_k^* (\bPi^*_2)^\top$ for $k=1,...,K$ \citep{Airoldi2008mixed}. Our model is more flexible and general than their method, as illustrated in Section \ref{sec:DataAnalysis}. Furthermore, our model is also readily extended for multimodal studies, which is discussed next.

\subsection{Multimodal Extension}

Suppose for each subject $i$, we have access to $M$ matrices, each representing data from a different ``modality," where the $m$-th matrix $\X_{i,m}$ is a $d_{1,m} \times d_{2,m}$ matrix, that is, the matrices may differ in size across modalities. The Multimodal Low-rank Mixed Membership Model (m-LrMMM) is written as
\begin{equation}
    \X_{i,m} = \sum_{k=1}^K \bpi^*_{ik} \M_{k,m}^* + \tilde{\E}_{i,m} \text{ for } i=1,...,n, \text{ } m=1,...,M,
    \label{mm-model}
\end{equation}
where $\bpi_i^*$ satisfy the same conditions as in (\ref{model}), $\M_{k,m}^*$ is a $d_{1,m} \times d_{2,m}$ matrix of rank $r_{k,m}$, and $\tilde{\E}_{i,m}$ is a $d_{1,m} \times d_{2,m}$ mean-zero noise matrix.
Similar to model (\ref{model}), model (\ref{mm-model}) retains only $K$ extreme latent profiles, and each subject $i$ retains one $K$-dimensional vector of membership scores $\bpi_i^*$. However, each extreme latent profile $k$ is allowed to have $M$ different basis matrices $(\M_{k,m}^*)_{m=1}^M$, each representing the prototypical pattern of the extreme latent profile within that modality. Note that setting $M=1$ recovers model (\ref{model}).

\subsection{Identifiability}

To ensure valid and reproducible estimation, it is essential to establish model identifiability. Conventionally, a model is said to be identifiable if there is a one-to-one mapping from the parameter space to the set of likelihood functions. However, factor models often require different notions of identifiability due to their inherent rotational ambiguity \citep{BaiLi2012}. In the case of LrMMM, there is inherent permutation ambiguity. Suppose the true parameters of the model are $(\bpi^*_i)_{i=1}^n$, $(\M^*_k)_{k=1}^K$. Let $\sigma$ be a permutation on the set of $K$ elements, and define parameter sets $(\bpi'_i)_{i=1}^n$ and $(\M'_k)_{k=1}^K$ such that $\bpi'_{ik} = \bpi^*_{i\sigma(k)}$ for all $i,k$ and $\M'_k = \M^*_{\sigma(k)}$ for all $k$. Then notice that $\EE[\X_i] = \sum_{k=1}^K \bpi^*_{ik} \M^*_k = \sum_{k=1}^K \bpi'_{ik} \M'_k$ for all $i$. That is, if the labels of the parameters reconstructing the denoised dataset are permuted, they still reconstruct the same denoised dataset. Following \cite{ChenGu24}, we adopt the following notion of identifiability, which accounts for such permutation ambiguity. 
\begin{definition}\label{def:exp-ident}
    A LrMMM model given by parameter sets $(\bpi^*_i)_{i=1}^n$ and $(\M^*_k)_{k=1}^K$ is said to be identifiable if for any other parameter sets $(\bpi'_i)_{i=1}^n$ and $(\M'_k)_{k=1}^K$ satisfying $\sum_{k=1}^K \bpi^*_{ik} \M^*_k = \sum_{k=1}^K \bpi'_{ik} \M'_k$ for all $i$, there exists a permutation $\sigma \in S_K$ such that $\bpi^*_{ik} = \bpi'_{i\sigma(k)}$ and $\M^*_k = \M'_{\sigma(k)}$.
\end{definition}

\noindent An LrMMM model satisfying \Cref{def:exp-ident} has the property that any other parameter set that also reconstructs the denoised dataset must have identical parameters after permutation of the labels. We now introduce an assumption under which we state our identifiability result. We say that subject $i$ is a ``pure subject" for extreme latent profile $k$ if $\bpi^*_i = \e_k$, the $k$-th standard basis vector.

\begin{assumption}[Pure Subjects]\label{ass:pure-subjects}
    $(\bpi^*_i)_{i=1}^n$ satisfies the property that every extreme latent profile has at least one pure subject. 
\end{assumption} 
\noindent The pure subject condition in Assumption~\ref{ass:pure-subjects} is analogous to the anchor word assumption in topic modeling \citep{KeWang24} and the pure node assumption in mixed membership stochastic block model for networks \cite{JinKL24, MaoSC2021}. Under Assumption \ref{ass:pure-subjects}, the following identifiability result holds as a result of Theorem 2 in \cite{ChenGu24}.
\begin{proposition}
\label{prop:exp-id}
    Suppose Assumption \ref{ass:pure-subjects} holds and one of the following two conditions hold: (a) $\text{rank}(\bTheta^*) = K$, or (b) both $\text{rank}(\bTheta^*) = K-1$ and no column of $\bTheta^*$ is an affine combination of other columns of $\bTheta^*$. Then model (\ref{model}) is identifiable. 
\end{proposition}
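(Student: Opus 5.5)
The plan is to reduce to the vector-valued mixed membership setting by vectorization and then run the usual pure-subject/simplex argument that underlies Theorem 2 of \cite{ChenGu24}. Set $\y_i=\text{vec}(\X_i)$ and $\btheta^*_k=\text{vec}(\M^*_k)$. Then the noiseless data satisfy $\EE[\Y]=\bPi^*\bTheta^{*\top}$ as in \eqref{spectral-unfolded}. Suppose a second parameter set $(\bpi'_i)$, $(\M'_k)$ reconstructs the same means. Then $\bPi^*\bTheta^{*\top}=\bPi'\bTheta'^{\top}$, and the low-rank constraint on the basis matrices plays no role: identifiability in the vectorized model implies identifiability of \eqref{model}, because $\text{vec}$ is a bijection. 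So it suffices to show that $\bPi^*\bTheta^{*\top}=\bPi'\bTheta'^\top$, with the rows of both $\bPi^*$ and $\bPi'$ in $\Delta_K$, forces equality up to a common permutation.

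\textbf{Geometry.} Every row of $\EE[\Y]$ lies in the convex hull $\calC^*=\text{conv}\{\btheta^*_1,\dots,\btheta^*_K\}$. By Assumption~\ref{ass:pure-subjects}, each vertex $\btheta^*_k$ is itself a row of $\EE[\Y]$. Hence the convex hull of the rows equals $\calC^*$.

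Under (a) or (b), the points $\btheta^*_k$ are affinely independent. In case (a) they are linearly independent, hence affinely independent. Case (b) is exactly the assumption that no $\btheta^*_k$ is an affine combination of the others. So $\calC^*$ is a $(K-1)$-simplex with exactly $K$ extreme points, namely the $\btheta^*_k$.

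\textbf{Comparing the two hulls.} The rows of $\EE[\Y]$ also lie in $\calC'=\text{conv}\{\btheta'_k\}$, so $\calC^*\subseteq\calC'$. The main obstacle is the reverse direction. We must show that each $\btheta'_k$ lies in $\calC^*$; a priori the $\btheta'_k$ need not be rows of $\EE[\Y]$, since $\bPi'$ is not assumed to have pure rows.

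\emph{Linear span.} The row space of $\EE[\Y]$ is $\text{span}(\bTheta^*)$, which has dimension $\text{rank}(\bTheta^*)$. The same row space is contained in $\text{span}(\bTheta')$.
\begin{itemize}[leftmargin=*]
\item In case (a) this span has dimension $K$, so the $\btheta'_k$ are linearly independent and span the same space.
\item In case (b) one uses the affine hull instead: $\text{aff}(\calC^*)$ has dimension $K-1$ and is contained in $\text{aff}\{\btheta'_k\}$, whose dimension is at most $K-1$. So the two affine hulls coincide and the $\btheta'_k$ are affinely independent.
\end{itemize}

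\emph{Reverse inclusion.} In either case, write the coefficients of $\btheta'_k$ in the barycentric (respectively, linear) coordinates of the $\btheta^*_j$. This defines a $K\times K$ matrix $\A$ with $\bTheta'=\bTheta^*\A^\top$, where $\A$ is invertible and its rows sum to one. From $\bPi^*\bTheta^{*\top}=\bPi'\A\bTheta^{*\top}$ and the independence of the $\btheta^*_j$ (linear in (a), affine in (b) combined with the row-sum-one property), we get $\bPi^*=\bPi'\A$.

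Now take the pure rows of $\bPi^*$. For each $k$ there is a row $\bpi'_{i_k}\in\Delta_K$ with $\bpi'_{i_k}\A=\e_k$. Stacking these rows gives a matrix $\B$ with nonnegative entries and unit row sums such that $\B\A=\I$. Hence $\A^{-1}=\B$ is a nonnegative stochastic matrix. Then $\bTheta^*=\bTheta'\B^\top$ expresses each $\btheta^*_k$ as a convex combination of the $\btheta'_j$, which matches $\calC^*\subseteq\calC'$.

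\emph{Forcing a permutation.} Since $\bPi'=\bPi^*\B$ must also have nonnegative rows, the pure rows of $\bPi^*$ show that $\B$ itself is nonnegative. That is already known; the real constraint comes from $\A=\B^{-1}$. Every row of $\bPi'=\bPi^*\A^{-1}$ is nonnegative, but that gives nothing beyond what we have. The cleaner route is to return to the geometry: $\calC'\supseteq\calC^*$, while each $\btheta'_k$ is a convex combination of the rows of $\EE[\Y]$. The latter holds because $\bTheta'^\top=\A\bTheta^{*\top}$ and the rows of $\A$ must be nonnegative. To establish that nonnegativity I would invoke the argument of Theorem 2 in \cite{ChenGu24}: a nonnegative matrix whose inverse is also nonnegative and stochastic must be a permutation matrix.

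Granting this, $\A$ is a permutation matrix $\P_\sigma$. Then $\btheta'_{\sigma(k)}=\btheta^*_k$ and $\bPi'=\bPi^*\P_\sigma^\top$. Unvectorizing gives $\M'_{\sigma(k)}=\M^*_k$ and $\bpi'_{i\sigma(k)}=\bpi^*_{ik}$, as Definition~\ref{def:exp-ident} requires.

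I expect the nonnegativity of $\A$ to be the step that needs the most care. My first attempt would be to cite \cite{ChenGu24} directly, since the vectorized model is exactly theirs.
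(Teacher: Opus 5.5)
Your reduction (vectorize, observe that the rank constraints play no role, then argue as in Theorem 2 of \cite{ChenGu24}) is the paper's route; the paper gives nothing beyond that citation. The gap is in the step you flagged: nothing in your argument shows that the rows of $\A$ are nonnegative. The fact you cite, that a nonnegative matrix with a nonnegative stochastic inverse is a permutation, assumes that nonnegativity, so invoking it is circular. Your geometric route likewise just asserts it.

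Under the reading you adopt, where the competing $\bPi'$ need not contain pure rows, this nonnegativity fails, and so does the proposition. Take $K=2$ and $\M^*_k=\u\boldsymbol{b}_k^\top$ with $\boldsymbol{b}_1,\boldsymbol{b}_2$ linearly independent, so $\text{rank}(\bTheta^*)=2$ and $r_k=1$. For $t>0$ set $\M'_1=(1+t)\M^*_1-t\M^*_2$ and $\M'_2=(1+t)\M^*_2-t\M^*_1$, both still rank one. Set $\bpi'_i=\bpi^*_i\B$ with $\B=(1+2t)^{-1}\bigl(\begin{smallmatrix}1+t & t\\ t & 1+t\end{smallmatrix}\bigr)$. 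Then every $\bpi'_i\in\Delta_2$ and $\sum_k\bpi'_{ik}\M'_k=\sum_k\bpi^*_{ik}\M^*_k$ for all $i$, yet this is not a relabeling. Here $\A=\bigl(\begin{smallmatrix}1+t & -t\\ -t & 1+t\end{smallmatrix}\bigr)$ has negative entries while $\A^{-1}=\B$ is nonnegative and stochastic. That is exactly the configuration your argument never rules out.

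The missing idea is that Definition~\ref{def:exp-ident}, like the definition in \cite{ChenGu24}, must be read with the competing parameters in the same parameter space. That is, $\bPi'$ also satisfies Assumption~\ref{ass:pure-subjects}, which is how constraint (c) of the estimator and the class $\Delta_K^n$ are set up. With that, the step is one line. If $\bpi'_j=\e_k$, then row $j$ of $\bPi^*=\bPi'\A$ says that $\bpi^*_j$ equals the $k$-th row of $\A$, so that row lies in $\Delta_K$. Hence $\A$ and $\A^{-1}=\B$ are both nonnegative, so $\A$ is a monomial matrix, and unit row sums make it a permutation matrix. Geometrically, each $\btheta'_k$ is then a row of $\EE[\Y]$, so $\calC'\subseteq\calC^*$; with $\calC^*\subseteq\calC'$ the two simplices coincide and share their vertices.

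The rest of your sketch is fine: affine independence under (a) and (b), the construction of $\A$, $\bPi^*=\bPi'\A$, and $\B=\A^{-1}$ being stochastic. In case (a) you should derive the unit row sums of $\A$ rather than assert them, e.g.\ from $\A\mathbf{1}=\B^{-1}\mathbf{1}=\mathbf{1}$.
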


Now, we provide a first glance at the relative strength of this multimodal model through an identifiability result. Firstly, we introduce some notation. Let $\bTheta_m^*$ be the $d_{1,m} d_{2,m} \times K$ matrix whose $k$-th column is $\text{vec}(\M_{k,m}^*)$. Let $p_m = d_{1,m}d_{2,m}$ and $p_0 = \sum_{m=1}^M p_m$. Let $\bTheta_0^* = (\bTheta_1^{*\top}, ..., \bTheta_M^{*\top})^\top$, that is, $\bTheta_0^*$ is the long $p_0 \times K$ matrix given by vertical concatenation of the matrices $(\bTheta_m^*)_{m=1}^M$. The full definition of identifiability of the multimodal model, which is the natural generalization of Definition \ref{def:exp-ident}, is presented in Section~S.3.1 of the Supplement.

\begin{proposition}
\label{prop:mm-exp-id}
    Consider model (\ref{mm-model}) with parameters $(\bPi^*, \bTheta_0^*)$. Suppose $\bPi^*$ satisfies Assumption \ref{ass:pure-subjects} and one of the following two conditions hold: (a) $\text{rank}(\bTheta_0^*) = K$, or (b) both $\text{rank}(\bTheta_0^*) = K-1$ and no column of $\bTheta_0^*$ is an affine combination of other columns of $\bTheta^*$. Then the model is identifiable. 
\end{proposition}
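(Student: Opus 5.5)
The plan is to reduce the multimodal statement to the single-modality identifiability result, Proposition~\ref{prop:exp-id}, which in turn rests on Theorem~2 of \cite{ChenGu24}. The key observation is that stacking modalities turns model (\ref{mm-model}) into a vector-valued mixed membership model. For each subject $i$, define the long vector
\[
\y_i = \big(\text{vec}(\X_{i,1})^\top, \ldots, \text{vec}(\X_{i,M})^\top\big)^\top \in \RR^{p_0}.
\]
With this stacking, $\EE[\y_i] = \sum_{k=1}^K \bpi^*_{ik}\, \bTheta_0^*(:,k)$, and the stacked mean matrix is $\EE[\Y_0] = \bPi^* \bTheta_0^{*\top}$. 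This has exactly the form (\ref{spectral-unfolded}), with $p=p_0$. I would first check that identifiability of the multimodal model, as defined in Section~S.3.1, is equivalent to identifiability of this stacked vector model in the sense of \cite{ChenGu24}.

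\textbf{Equivalence of the two notions.} Suppose alternative parameters $(\bPi', (\M'_{k,m}))$ satisfy
\[
\sum_k \bpi'_{ik}\M'_{k,m} = \sum_k \bpi^*_{ik}\M^*_{k,m} \quad \text{for all } i,m.
\]
Vectorizing each modality and stacking gives $\bPi'\bTheta_0'^{\top} = \bPi^*\bTheta_0^{*\top}$. Conversely, a permutation $\sigma$ that matches $\bPi'$ to $\bPi^*$ and the columns of $\bTheta_0'$ to those of $\bTheta_0^*$ also matches each block $\bTheta'_m$ to $\bTheta^*_m$. Applying $\text{mat}_{d_{1,m},d_{2,m}}$ blockwise then recovers $\M'_{\sigma(k),m} = \M^*_{k,m}$ for every $m$, with the same permutation across all modalities. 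So the shared-permutation requirement is automatic, because one permutation acts on whole columns of $\bTheta_0$.

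\textbf{Applying the vector-valued result.} Assumption~\ref{ass:pure-subjects} is a condition on $\bPi^*$ alone, so it transfers to the stacked model unchanged. Conditions (a) and (b) are stated directly in terms of $\bTheta_0^*$, which plays the role of $\bTheta^*$ in the vector model; I read ``$\bTheta^*$'' in (b) as $\bTheta_0^*$. Theorem~2 of \cite{ChenGu24}, equivalently Proposition~\ref{prop:exp-id} applied with $p=p_0$, then gives identifiability of the stacked model. Note that the low-rank constraint on the basis matrices only shrinks the parameter space. It therefore cannot break identifiability, and the proof never needs to use it.

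\textbf{Main obstacle.} The only potential subtlety is to confirm that Proposition~\ref{prop:exp-id} and the \cite{ChenGu24} theorem do not depend on $p$ being of the form $d_1d_2$. If they did, one would need to argue directly. In that case I would rerun the standard pure-subject argument:
\begin{enumerate}
\item Under (a) or (b), the rows of $\bPi^*\bTheta_0^{*\top}$ lie in the convex hull of the $K$ columns of $\bTheta_0^*$, and these columns are exactly its vertices; the rank and affine-independence conditions guarantee they are extreme points.
\item Any alternative representation must also contain the data points in the convex hull of its own $K$ profiles.
\item The pure subjects place the $K$ true vertices among the data points, which forces the alternative profiles to coincide with them up to permutation.
\item Memberships are then uniquely determined, because barycentric coordinates with respect to affinely independent vertices are unique.
\end{enumerate}
Affine independence holds under both (a) and (b).
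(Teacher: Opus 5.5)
Your proposal is correct and follows the route the paper takes implicitly. The paper writes out no separate proof, but it defines $\bTheta_0^*$ precisely so that $\EE[\Y_0]=\bPi^*\bTheta_0^{*\top}$ is a vector-valued mixed membership model with $p=p_0$; Theorem~2 of \cite{ChenGu24} then applies just as for Proposition~\ref{prop:exp-id}, and a single column permutation of $\bTheta_0^*$ gives the shared permutation across all $\M^*_{k,m}$.

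Your ``main obstacle'' is not an issue, since that theorem holds for arbitrary $p$. If you ever did rely on your direct fallback, step~3 also needs the alternative $\bPi'$ to have pure subjects of its own (part of the parameter space): the true pure subjects alone give only $\mathrm{conv}(\bTheta_0^*)\subseteq\mathrm{conv}(\bTheta_0')$, not equality.
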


Proposition \ref{prop:mm-exp-id} shows that the condition under which model (\ref{mm-model}) is identifiable is a relaxation of the condition under which model (\ref{model}) is identifiable according to Proposition \ref{prop:exp-id}. For example, suppose that matrix $\bTheta_1^*$, which contains the vectorized basis matrices from the first modality, is not full rank. Then the unimodal model that uses only that modality may not be identifiable, since Proposition \ref{prop:exp-id} no longer holds. But after incorporating data from multiple modalities, as long as the concatenation $\bTheta_0^*$ is full rank, the model is identifiable. Hence, signals that are not identifiable within one modality may become identifiable after incorporating data across multiple modalities. 

\section{Estimation}\label{sec:Estimation}

\subsection{Two-step Estimator}
Our estimator takes inspiration from the K-means optimization problem, and in particular, its low-rank adaptation in \citep{LX25}. In their work, each subject has a class membership vector $s_i \in [K]$ denoting membership entirely to one of $K$ classes. Their algorithm solves the problem $\min \sum_{i=1}^n \sum_{k=1}^K \mathbbm{1}(s_i = k) || \X_i - \M_k||_F^2$ subject to the constraint $\text{rank}(\M_k) = r_k\; (\forall k)$. We solve an analogous problem that relaxes discrete membership to continuous membership and enforces the pure subjects constraint. In particular, we first introduce an estimator that relies on oracle knowledge of the indices of the pure subjects, and then we demonstrate that when we lack such oracle knowledge, we can recover the same estimator with high probability when using an estimate of the pure subject indices.

Suppose we have oracle access to the pure subject indices $S^* = (S^*_1, ..., S^*_K) \subset [n]$. Consider the optimization problem:
\begin{align}\label{eq:estimator-oracle}
\begin{split}
    \min_{(\bpi_i), (\M_k)} \text{ }  \sum_{i=1}^n \|\X_i - \sum_{k=1}^K \bpi_{ik} \M_k \|_F^2 \quad \text{s.t. } \quad & \text{(a) } \text{rank}(\M_k) = r_k\;(\forall k), \\ 
    & \text{(b) } \bpi_i \in \Delta_K\;(\forall i), \quad \text{(c) } \bpi_{S^*_k} = \e_k\;(\forall k).
\end{split}
\end{align}
\noindent
Constraint (a) enforces the low-rankness constraint, constraint (b) enforces the simplex constraint on the membership vectors, and constraint (c) enforces the pure subjects constraint by constraining the subject known to be a pure subject for extreme profile $k$ to belong entirely to extreme profile $k$.

When we lack oracle knowledge of the indices of the pure subjects, they can be estimated exactly with high probability using the method of \cite{CHG24} provided certain conditions hold. In short, we obtain estimate $\hat{S} = (\hat{S}_1, ..., \hat{S}_K) \subset [n]$ of the pure subject indices by performing rank-K truncated SVD of $\Y$ to obtain left-singular subspace $\hat{\U}$, and then obtain estimate $\hat{S}$ given by the indices selected by the vertex-hunting algorithm called Successive Projection Algorithm (SPA) applied on the rows of $\hat{\U}$. To provide some intuition on their estimation procedure, we first share a Lemma from \cite{CHG24} which relates the terms of \eqref{spectral-unfolded}:
\begin{lemma}
    \label{lem:convex-hull}
    Let Assumption \ref{ass:pure-subjects} hold, let $\text{rank}(\bTheta^*)=K$, and let $S^* = (S^*_1, ..., S^*_K)$ denote the indices of the pure subjects, with $S^*_k$ being a pure subject from extreme latent profile $k$. Then 
    \begin{align}\label{eq:subspace-simplex}
        \U^* = \bPi^* \U^*_{S^*, :}
    \end{align}
\end{lemma}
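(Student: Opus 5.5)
The plan is to use the compact SVD relation \eqref{spectral-unfolded}, $\EE[\Y]=\bPi^*\bTheta^{*\top}=\U^*\bSigma^*\V^{*\top}$, and then restrict to the rows indexed by the pure subjects. Under Assumption \ref{ass:pure-subjects}, the submatrix $\bPi^*_{S^*,:}$ (rows $S^*_1,\dots,S^*_K$ in order) is the $K\times K$ identity, since row $S^*_k$ equals $\e_k^\top$.

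First, I will note that $\rank(\EE[\Y])=K$. Indeed, $\bPi^*$ contains $\I_K$ as a submatrix, so $\rank(\bPi^*)=K$. Together with $\rank(\bTheta^*)=K$, the $n\times K$ and $p\times K$ factors are both of full column rank, so their product has rank $K$. Hence $\U^*\in\OO_{n,K}$, $\bSigma^*$ is $K\times K$ and invertible, and $\V^*\in\OO_{p,K}$.

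Second, I express $\U^*$ through the factors: $\U^*=\EE[\Y]\V^*(\bSigma^*)^{-1}=\bPi^*\bTheta^{*\top}\V^*(\bSigma^*)^{-1}=\bPi^*\B$, where $\B:=\bTheta^{*\top}\V^*(\bSigma^*)^{-1}\in\RR^{K\times K}$. Taking the rows indexed by $S^*$ gives $\U^*_{S^*,:}=\bPi^*_{S^*,:}\B=\I_K\B=\B$. Substituting back yields $\U^*=\bPi^*\U^*_{S^*,:}$, which is \eqref{eq:subspace-simplex}.

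There is no real obstacle. The only points needing care are that the compact SVD has exactly $K$ components, so that $\bSigma^*$ is invertible, and the row ordering of $S^*$, which guarantees that $\bPi^*_{S^*,:}$ is exactly $\I_K$ rather than a permutation of it.
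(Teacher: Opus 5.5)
Your argument is correct: writing $\U^* = \bPi^*\bTheta^{*\top}\V^*(\bSigma^*)^{-1}$ and restricting to the rows indexed by $S^*$, where $\bPi^*_{S^*,:}=\I_K$, gives the identity directly. The paper does not prove this lemma itself but imports it from \cite{CHG24}, and your derivation is essentially the standard argument behind that result.
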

Lemma \ref{lem:convex-hull} implies that the rows of left-singular subspace $\U^*$ of $\EE[\Y]$ lie in a simplex whose vertices are given by the indices of the pure subjects. Thus, if we had access to $\U^*$, we could apply a vertex-hunting algorithm on the rows of $\U^*$ to exactly identify the indices of the pure subjects. In reality, we lack access to $\U^*$, but we can estimate it well using the left singular subspace $\hat{\U}$ obtained by rank-$K$ truncated SVD of $\Y$ provided some signal-to-noise conditions hold. The vertex-hunting algorithm SPA is then applied on $\hat{\U}$. In fact, a consequence of the $l_{2,\infty}$ analysis in \cite{CHG24} of the error of $\hat{\U}$ is a uniform consistency result on the estimation of all membership vectors. We use this result to guarantee exact recovery of the pure subject indices with high probability provided the pure subjects and non-pure subjects are sufficiently separated in membership space. Hence, the following estimator, which uses no oracle knowledge, is equal to \eqref{eq:estimator-oracle} with high probability:
\begin{equation}\label{eq:estimator-twostep}
\begin{split}
    \min_{(\bpi_i), (\M_k)} \text{ }  \sum_{i=1}^n \|\X_i - \sum_{k=1}^K \bpi_{ik} \M_k \|_F^2 \quad \text{s.t. } \quad & \text{(a) } \text{rank}(\M_k) = r_k\;(\forall k), \\ 
    & \text{(b) } \bpi_i \in \Delta_K\;(\forall i), \quad \text{(c) } \bpi_{\hat{S}_k} = \e_k\;(\forall k).
\end{split}
\end{equation}
Estimator \eqref{eq:estimator-twostep} differs from estimator \eqref{eq:estimator-oracle} only in constraint (c), which in this case uses the estimated pure subject indices $\hat{S}$ instead of the true pure subject indices $S^*$. Since the ``two-step estimator" in \eqref{eq:estimator-twostep} requires no oracle knowledge, we propose to use this estimator instead of \eqref{eq:estimator-oracle}, noting that it performs just as well as \eqref{eq:estimator-oracle} with high probability.

\subsection{Estimation Algorithm}

Just as in the case of K-means clustering and its low-rank counterpart \citep{Lloyd1982, LX25}, solving the optimization problem given by \eqref{eq:estimator-twostep} directly is intractable. Like the original K-means clustering problem, there is nonconvexity in the objective function, and like the low-rank K-means problem, there is additional nonconvexity in the basis parameters. In line with these past works, we propose an alternating algorithm for use in practice. The empirical consistency results presented in Section \ref{sec:Simulations} justify our proposed algorithm. The full algorithm, including both initialization and iterative updates, is presented in Algorithm \ref{alg:Algorithm}. 

At a high level, our algorithm uses a spectral initialization of the membership vectors $(\hat{\bpi_i}^{(0)})_{i=1}^n$. Our algorithm then alternates between obtaining approximate updates to the basis parameters (``Problem 1") and membership parameters (``Problem 2"): 
\begin{enumerate}
    \item
    $\min_{(\M_k)_{k=1}^K} 
    \sum_{i=1}^n \left\| \X_i - \sum_{k=1}^K \hat{\bpi}_{ik}^{(t-1)} \M_k \right\|_F^2 $
    subject to $\operatorname{rank}(\M_k)=r_k$ for all $k$.

    \item
    $\min_{(\bpi_i)_{i=1}^n}
 \sum_{i=1}^n \left\| \X_i - \sum_{k=1}^K \bpi_{ik}\hat{\M}_k^{(t)} \right\|_F^2
    $
    subject to $\bpi_i \in \Delta_K$ for all $i$, and
    there exist $(S_1,\dots,S_K)$ such that $\bpi_{S_k,:}=\e_k$ for all $k$.
\end{enumerate}

Our initialization starts with an estimate $\hat{\U}$ of the left singular subspace of $\EE[\Y]$, and proceeds according to the same method used in \citep{CHG24} to estimate the membership parameters, which is detailed in Algorithm \ref{alg:GoM}. To estimate $\U^*$, instead of using truncated SVD of $\Y$, we apply the tensor decomposition method Higher-Order Orthogonal Iteration (HOOI) on an aggregated data tensor $\bcalX \in \RR^{d_1 \times d_2 \times n}$ which satisfies by $\bcalX(:,:,i) = \X_i$. This choice exploits the tensor structure of our data to estimate $\U^*$ more efficiently, since it has been proven that HOOI estimates mode-wise subspaces more accurately than ``Higher-Order SVD," which is SVD of the mode-wise unfolding ($\Y$ in our case), 
when the dataset has the structure of a low-rank tensor \citep{zhang2018tensor}. The choice to use HOOI was also made by \cite{AZ25} in their study of a related mixed-membership model with tensor structure. Full details underlying this choice are described in Section~S.3.2 of the Supplement. Finally, once we have singular subspace estimate $\hat{\U}$, we use the method of \cite{CHG24}. In particular, we apply SPA to identify the pure subject indices $\hat{S}$ and, taking the lead from \eqref{eq:subspace-simplex}, estimate $\bPi^*$ using $\tilde{\bPi} = \hat{\U} \hat{\U}^{-1}_{\hat{S},:}$. This estimate $\tilde{\bPi}$ is post-processed by projecting all rows onto $\Delta_K$ in the final step to obtain $(\hat{\bpi}^{(0)}_i)_{i=1}^n$. Full details are provided in Algorithm \ref{alg:GoM}.

\begin{algorithm}
\caption{(GoM) Estimation of Membership Matrix of Grade-of-Membership Model via Successive Projection Algorithm}
\label{alg:GoM}
\begin{algorithmic}[1] % The number tells where the line numbering should start
    \Require Estimated singular subspace $\hat{\U}$
    \Ensure Estimated membership matrix $\hat{\bPi}$
    \State $\Z = \hat{\U}$
    \For{$k=1$ to $K$}
        \State $\hat{S}_k = \text{argmax}(\|\Z_{i,:}\|_2^2: i \in [n])$
        \State $\u = \Z_{\hat{S}_k,:}/\|\Z_{\hat{S}_k,:}\|_2$
        \State $\Z = \Z(\I_K - \u\u^\top)$
    \EndFor
    \State $\tilde{\bPi} = \hat{\U} (\hat{\U}_{\hat{S},:})^{-1}$
    \State $\hat{\bPi}_{i,:} = \text{argmin}_{\pi \in \Delta_K} \| \pi - \tilde{\bPi}_{i,:}\|_2^2$ and $\hat{\bPi} = (\hat{\bPi}_{i,:})_{i=1}^n$
    \State \Return $\hat{\bPi}$
\end{algorithmic}
\end{algorithm}

As for the basis matrix updates, we observe that Problem 1 
is non-convex. Following \citep{LX25}, we approximate this update by first minimizing $\sum_{i=1}^n || \X_i - \sum_{k=1}^K \hat{\bpi}^{(t-1)} \M_k ||_F^2 $ without enforcing the low-rankness constraint, which amounts to Ordinary Least Squares, and then apply truncated rank-$r_k$ SVD on each solution $\M_k$ to enforce the low-rankness constraint. The form of the OLS problem after unfolding is $\text{argmin}_{\bTheta} \| \Y - \bPi^{(t-1)} \bTheta^\top||_F^2$.

\begin{algorithm}
\caption{LrMMM}
\label{alg:Algorithm}
\begin{algorithmic}[1] % The number tells where the line numbering should start
    \Require Dataset $(\X_i)_{i=1}^n \subset \RR^{d_1 \times d_2}$, number of latent structures $K$, ranks of latent structures $(r_1,...,r_K)$, rank of concatenated left subspaces $r_U$, rank of concatenated right subspaces $r_V$, number of iterations $T$
    \Ensure Estimated basis matrices $( \hat{\M}_k )_{k=1}^K$, estimated membership vectors $(\hat{\bpi}_i)_{i=1}^n$
    \State Form tensor $\bcalX\in \RR^{d_1 \times d_2 \times n}$ with $\bcalX(:,:,i) = \X_i$
    \State $(\hat{\bcalC}, \hat{\W}_1, \hat{\W}_2, \hat{\W}_3) = \text{HOOI}(\bcalX, \text{Tucker Rank} = (r_U, r_V, K))$ \label{line:HOOI}
    \State $\hat{\bPi}^{(0)} = \text{GoM}(\hat{\W}_3)$
    \For{$t = 1$ to $T$}
        \State $\tilde{\M}^{(t)}_k = \text{mat}_{d_1,d_2}(\tilde{\bTheta}^{(t)}_{:,k})$ where $\tilde{\bTheta}^{(t)} = M_3(\bcalX)^\top \hat{\bPi}^{(t-1)} (\hat{\bPi}^{(t-1)\top} \hat{\bPi}^{(t-1)})^{-1}$ \label{line:OLS}
        \State $\hat{\M}^{(t)}_k = \text{SVD}(\tilde{\M}^{(t)}_k, \text{rank} = r_k)$ \label{line:proximal}
        \State $\tilde{\bPi}_{i,:}^{(t)} = \text{argmin}_{\pi \in \Delta^{K-1}} \| \X_i - \sum_k \pi_k \hat{\M}^{(t)}_k\|_F^2$ and $\tilde{\bPi}^{(t)} = (\tilde{\bPi}_{i,:}^{(t)})_{i=1}^n$ \label{alg:proj-step}
        \State $\hat{\bPi}^{(t)} = \text{GoM}(\tilde{\bPi}^{(t)})$
    \EndFor
    \State $\hat{\M}_k = \hat{\M}^{(T)}_k \text{ } (\forall k)$, and  $\hat{\bpi}_i \leftarrow \hat{\bPi}^{(T)}_{i,:} \text{ }(\forall i)$
    \State \Return $(\hat{\M}_k)_{k=1}^K, (\hat{\bpi}_i)_{i=1}^n$
\end{algorithmic}
\end{algorithm}

As for the updates to the membership vectors, Problem 2
notably differs in structure from \eqref{eq:estimator-twostep} in the way that it enforces the pure subjects constraint. Instead of using the estimated indices of the pure subjects obtained in the initialization as done by constraint (c) of \eqref{eq:estimator-twostep}, the estimated pure subject indices are allowed to update in each iteration. We make this choice based on our observation that in simulations, iteratively updating the pure subject indices leads to improved accuracy. This change comes at the cost of nonconvexity of the membership update. Our solution is to first solve $\min_{(\bpi_i)_{i=1}^n} \sum_{i=1}^n \left\| \X_i - \sum_{k=1}^K \bpi_{ik}\hat{\M}_k^{(t)} \right\|_F^2$ subject to the simplex constraint $\bpi_i \in \Delta_K \text{ } (\forall i)$ but with no enforcement of the pure subjects constraint. This amounts to $n$ separate quadratic programming problems. We write the solution to this problem as $(\tilde{\bpi}^{(t)})_{i=1}^n = \tilde{\bPi}^{(t)}$, and we regard those $\tilde{\bpi}^{(t)}_i$ closest to the standard basis vectors as those which the update deems most likely to be pure subjects. Given the simplex structure of $\tilde{\bPi}^{(t)}$, it is reasonable to use SPA to identify the most extreme points. We observe in practice that the difference between applying SPA on the left singular subspace of $\tilde{\bPi}^{(t)}$ or directly on $\tilde{\bPi}^{(t)}$ is negligible. Hence, for simplicity, we apply the same GoM method applied previously on $\hat{\U}$ (Algorithm \ref{alg:GoM}), directly on $\tilde{\bPi}^{(t)}$ itself to re-estimate the pure subjects and obtain update $\hat{\bPi}^{(t)}$.

\subsection{Multimodal Algorithm}\label{sec:MultimodalAlgorithm}

\begin{algorithm}[t]
\caption{Multimodal-LrMMM}
\label{alg:Algorithm-MM}
\begin{spacing}{1}
% \footnotesize
\begin{algorithmic}[1] % The number tells where the line numbering should start
    \Require $((\X_{i,m})_{m=1}^M)_{i=1}^n$, $K$, $(r_{k,m})_{k,m}$, $(r_{U,m})_m$, $(r_{V,m})_m$, $T$, weights $\gamma = (\gamma_1, ..., \gamma_M)$
    \Ensure Estimated basis matrices $(( \hat{\M}_{k,m} )_{k=1}^K)_{m=1}^M$, estimated membership vectors $(\hat{\bpi}_i)_{i=1}^n$
    \For{$m=1$ to $M$}\label{alg-mm:first-step}
    \State Form tensor $\bcalX_m \in \RR^{d_{1,m} \times d_{2,m} \times n}$ with $\bcalX_m(:,:,i) = \X_{i,m}$ \label{alg-mm:form-tensor}
    \State $(\hat{\bcalC}_m, \hat{\W}_{1,m}, \hat{\W}_{2,m}, \hat{\W}_{3,m}) = \text{HOOI}(\bcalX_m, \text{Tucker Rank} = (r_{U,m}, r_{V,m}, K))$ \label{alg-mm:HOOI}
    \State $\tilde{\bPi}_m^{(0)} = \text{GoM}(\hat{\W}_{3,m})$ \label{alg-mm:GoM}
    \State Solve $\P_m = \text{argmin}_{\P} \|\tilde{\bPi}_1^{(0)} - \tilde{\bPi}_m^{(0)}\P\|_F^2$ and align $\tilde{\bPi}_m^{(0)} \leftarrow \tilde{\bPi}_m^{(0)} \P_m$ \label{alg-mm:align}
    \EndFor
    \State $\hat{\bPi}^{(0)} = \text{GoM}((1/\sum_m \gamma_m) \sum_{m=1}^M \gamma_m \tilde{\bPi}_m^{(0)})$ \label{alg-mm:renormalize-init}
    \For{$t = 1$ to $T$}
        \For{$m = 1$ to $M$}
            \State $\tilde{\M}^{(t)}_{k,m} = \text{mat}_{d_{1,m}, d_{2,m}} ((\tilde{\bTheta}^{(t)}_{m})_{:,k})$, where $\tilde{\bTheta}_m^{(t)} = M_3(\bcalX_m)^\top \hat{\bPi}^{(t-1)} (\hat{\bPi}^{(t-1)\top} \hat{\bPi}^{(t-1)})^{-1}$\label{alg-mm:theta-tilde}
            \State $\hat{\M}_{k,m}^{(t)} = \text{SVD}(\tilde{\M}^{(t)}_{k,m}, \text{rank} = r_{k,m})$ 
        \EndFor
        \State $\tilde{\bPi}_{i,:}^{(t)} = \text{argmin}_{\pi \in \Delta^{K-1}} \sum_{m=1}^M \gamma_m \| \X_{i,m} - \sum_k \pi_k \hat{\M}_{k,m}^{(t)} \|_F^2$ and $\tilde{\bPi}^{(t)} = (\tilde{\bPi}_{i,:}^{(t)})_{i=1}^n$ \label{alg-mm:proj-step}
        \State $\hat{\bPi}^{(t)} = \text{GoM}(\tilde{\bPi}^{(t)})$
    \EndFor
    \State $\hat{\M}_{k,m} = \hat{\M}^{(T)}_{k,m} \text{ } (\forall k, m)$, and  $\hat{\bpi}_i \leftarrow \hat{\bPi}^{(T)}_{i,:} \text{ }(\forall i)$
    \State \Return $((\hat{\M}_{k,m})_{k=1}^K)_{m=1}^M, (\hat{\bpi}_i)_{i=1}^n$
\end{algorithmic}
\end{spacing}
\end{algorithm}

We extend our algorithm to the multimodal case in Algorithm \ref{alg:Algorithm-MM}, minimizing the following objective function subject to our parameter constraints: \begin{align*}
    \sum_{m=1}^M \gamma_m \|\X_{i,m} - \sum_{k=1}^K \bpi_{ik} \M_{k,m} \|_F^2.
\end{align*} 
There are a few high-level differences between Algorithms \ref{alg:Algorithm} and \ref{alg:Algorithm-MM}. First, nonnegative weights $\gamma = (\gamma_1, ..., \gamma_M)$ are used to govern the influence of each modality on the objective function. As a default recommendation, we suggest letting $\gamma_m = 1/(d_{1,m}d_{2,m})$ so that the mean-squared error of each modality is weighted equally. Second, we obtain separate initializations for each modality and then combine them by applying Algorithm \ref{alg:GoM} to the weighted average after aligning labels. Third, when iterating, the basis estimation steps naturally decouple by modality, and the membership estimation is solved via a weighted objective incorporating all modalities. Full details are presented in Section~S.3.3 of the Supplement.

As a final remark, we note that the Algorithm \ref{alg:Algorithm-MM} can be adapted to combine datasets for which in some modalities, the subjects' data is vector-valued, and in other modalities, the subjects' data is matrix-valued. Details of this adaptation are available in Section~S.3.4 of the Supplement.

\section{Theoretical Guarantees}\label{sec:StatisticalGuarantees}

\subsection{Reconstruction Error Rate}

In this subsection, we characterize the error rate of the two-step estimator defined in \eqref{eq:estimator-twostep} with respect to the reconstruction of the expectation of the dataset. As a prerequisite, we prove we can accurately estimate the pure subjects using the method of \cite{CHG24}. Firstly, we state the appropriate signal-to-noise assumption and separation condition on the true membership scores. To do so, we first define the incoherence parameters $\mu_1:= (n/K)\|\U^* \|^2_{2,\infty}$, $\mu_2:= (p/K)\|\V^*\|^2_{2,\infty}$, where $\U^*$ and $\V^*$ are from \eqref{spectral-unfolded}. 

\begin{assumption}[Signal-to-Noise Condition]
    \label{ass:snr}
    Assume that $K \ll n\wedge p$ and \begin{gather}
        \frac{c_0n}{K}\I_K\preceq \bPi^{\star\top}\bPi^\star \preceq \frac{C_0n}{K}\I_K,  \label{membership-spread} \\
        c_0s_\star^2\I_K \preceq \bTheta^{\star\top}\bTheta^\star \preceq C_0s_\star^2\I_K, \label{basis-spread} \\
        \max \left\{ \sqrt{\frac{\mu_1 K}{n}}, \sqrt{\frac{\mu_2 K}{p}} \right\} \lesssim (\log(n \vee p))^{-11}, \label{snr-incoherence} \\
        \sigma \lesssim \frac{s_\star \sqrt{n/K}}{\sqrt{n} + \sqrt{p}}. \label{snr-sub-Gaussian}
    \end{gather}
\end{assumption}

Condition (\ref{membership-spread}) is a standard assumption ensuring that the membership scores are evenly distributed throughout the simplex. This condition is satisfied with high probability in the common setting where $K$ is fixed, and the rows of $\bPi^*$ are independently and identically distributed according to a Dirichlet distribution with parameter vector $\alpha$, where $\alpha$ does not change as $n \to \infty$. Condition (\ref{basis-spread}) ensures $\bTheta^*$ scales appropriately relative to the magnitude of the noise $\sigma$ and matrix size $p$, where the relation to $\sigma$ and $p$ comes from (\ref{snr-sub-Gaussian}). A standard choice for $s_\star$ is $\sqrt{p}$, however our condition also allows for flexibility in choice of $s_\star$. Condition (\ref{snr-incoherence}) assumes the incoherence degrees are sufficiently controlled, and (\ref{snr-sub-Gaussian}) guarantees a lower bound on the signal strength of the matrix product $\bPi^*\bTheta^{*\top}$. Assumption \ref{ass:snr} is related to Assumption 3 in \cite{CHG24}. In Section \ref{sec:Simulations}, we provide examples of reasonable generative models that satisfy these conditions.
\begin{assumption}[Separation Condition]
    \label{ass:separation}
    Suppose Assumption \ref{ass:pure-subjects} holds, and denote the set of all pure subjects as $S^*$. There exists a universal constant $C$ such that \begin{gather}
        \kappa^2(\bPi^*) \sigma_1(\bPi^*) \xi_1  \leq C \min_{j \not\in S^*, i \in S^*} \| \bpi_i^* - \bpi_j^* \|_2,\label{eq:sep-constraint}\\
        \kappa^2(\bPi^*) \sigma_1(\bPi^*) \xi_1 \ll 1/\sqrt{K},
    \end{gather}
    where $\xi_1$ is defined as \begin{align*}
        \xi_1 := \frac{\sigma \sqrt{n \log(n \vee p) }}{\sigma_K^*} \sqrt{\frac{\mu_1 K}{n}} + \frac{\kappa^* \sigma^2 p}{\sigma_K^{*2}} \sqrt{\frac{\mu_1 K}{n}} + \frac{\sigma^2 (\log(n \vee p))^{12} \sqrt{n + p}}{\sigma_K^{*2}} \sqrt{\frac{\mu_2 K}{p}}.
    \end{align*}
\end{assumption}
Assumption \ref{ass:separation} is a separation condition for pure subjects. It requires the pure subjects to be sufficiently separated from non-pure subjects in membership space. This condition is satisfied if the gap $\min_{j \not\in S^*, i \in S^*} \| \bpi_i^* - \bpi_j^* \|_2$ is a sufficiently large constant, or if the gap converges to 0 at a rate depending on the terms in the statement. \eqref{eq:sep-constraint} is an additional constraint on the scaling of the gap, which is used by \cite{CHG24} to obtain uniform $l_{2,\infty}$ bounds on the error of their estimate of the membership matrix. We make this assumption to utilize the same result and as a consequence, achieve accurate estimation of the pure subjects. Proposition \ref{prop:est-Shat} formalizes our result.

\begin{proposition}\label{prop:est-Shat}
    Suppose Assumptions \ref{ass:noise}, \ref{ass:pure-subjects}, \ref{ass:snr}, and \ref{ass:separation} hold. Denote by $\hat{S}$ the ordered indices of the pure subjects identified by the method in \cite{CHG24}. With probability at least $1-O((n \vee p)^{-10})$, there exists a permutation $\sigma \in S_K$ such that $\hat{S}_k = S^*_{\sigma(k)}$ for all $k$.
    \label{prop:gomForPS}
\end{proposition}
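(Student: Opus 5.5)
The plan is to reduce exact recovery of $S^*$ to a uniform (row-wise) perturbation bound on the spectral estimate, and then read off from SPA's greedy selection rule that the selected rows must be pure. The argument has four steps.

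\emph{Step 1: row-wise subspace bound.} Under Assumptions \ref{ass:noise} and \ref{ass:snr}, I will invoke the $\ell_{2,\infty}$ perturbation theory used in \cite{CHG24} (their leave-one-out analysis of the rank-$K$ truncated SVD of $\Y$). On an event $\calE$ of probability at least $1-O((n\vee p)^{-10})$, this gives an orthogonal $\O\in\OO_{K,K}$ with
\[
\|\hat{\U}\O-\U^*\|_{2,\infty}\lesssim \xi_1\sqrt{\mu_1K/n}.
\]
The quantity $\xi_1$ in Assumption \ref{ass:separation} is precisely the rate appearing there, up to the incoherence factor.

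\emph{Step 2: SPA robustness.} By Lemma \ref{lem:convex-hull}, the rows of $\U^*$ lie in the simplex with vertices $\U^*_{S^*,:}$, and $\sigma_K(\U^*_{S^*,:})\gtrsim 1/\sigma_1(\bPi^*)$. The robustness theorem for SPA (Gillis--Vavasis, as used in \cite{CHG24}) then says that on $\calE$ the selected rows $\hat{\U}_{\hat S_k,:}$ are within $O(\kappa^2(\bPi^*)\,\|\hat\U\O-\U^*\|_{2,\infty})$ of distinct vertices $\U^*_{S^*_{\sigma(k)},:}$ for some permutation $\sigma$. This step only yields \emph{approximate} vertices; it does not yet say the selected indices are pure.

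\emph{Step 3: from approximate to exact recovery.} The key observation is that each selected index $j=\hat S_k$ satisfies $\|\U^*_{j,:}-\U^*_{S^*_{\sigma(k)},:}\|\le 2\delta$, where $\delta$ is the bound from Step 2. Using $\U^*_{j,:}=\bpi_j^{*\top}\U^*_{S^*,:}$, this converts to membership space:
\[
\|\bpi^*_j-\e_{\sigma(k)}\|_2\le \sigma_K(\U^*_{S^*,:})^{-1}\,2\delta\lesssim \kappa^2(\bPi^*)\,\sigma_1(\bPi^*)\,\xi_1 .
\]
Suppose $j\notin S^*$. Then the left side is at least $\min_{j\notin S^*,i\in S^*}\|\bpi^*_i-\bpi^*_j\|_2$, so inequality \eqref{eq:sep-constraint} gives a contradiction once the constants are tracked correctly; in particular, the constant $C$ in Assumption \ref{ass:separation} must be small enough relative to the SPA constant. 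Hence $j\in S^*$. Since pure subjects of the same profile share an identical row of $\U^*$, $\U^*_{j,:}=\U^*_{S^*_{\sigma(k)},:}$ forces $j$ to be a pure subject of profile $\sigma(k)$. Identifying $S^*_{\sigma(k)}$ with that pure subject, which suffices under the permutation-invariant statement, completes the argument.

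\emph{Step 4: distinctness.} The $\hat S_k$ correspond to distinct profiles because the vertices are separated by at least $\sqrt2\,\sigma_K(\U^*_{S^*,:})\gg \delta$; this uses the second condition $\kappa^2\sigma_1\xi_1\ll 1/\sqrt K$.

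The main obstacle is Step 1 together with the constant bookkeeping in Step 3. We must verify that the row-wise bound of \cite{CHG24} applies verbatim under Assumption \ref{ass:snr}, which controls $\sigma_K^*$ and $\kappa^*$ via \eqref{membership-spread} and \eqref{basis-spread}, and that the SPA perturbation constant is compatible with the constant $C$ in \eqref{eq:sep-constraint}.
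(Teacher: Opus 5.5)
\textbf{Verdict.} Your argument is correct, but it takes a somewhat different route from the paper's.

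\textbf{How the paper argues.} The paper never invokes the subspace bound or SPA robustness directly. It cites Theorem 2 of \cite{CHG24}, which already gives the row-wise bound $\|\hat\bPi-\bPi^*\P\|_{2,\infty}\le C\kappa^2(\bPi^*)\sigma_1(\bPi^*)\xi_1$ for the membership estimate $\hat\bPi=\hat\U\hat\U_{\hat S,:}^{-1}$. It then uses the identity $\hat\bPi_{\hat S,:}=\I_K$. Evaluating the row bound at $i=\hat S_k$ gives $\|\bpi^*_{\hat S_k}-\e_{\sigma(k)}\|_2\le C\kappa^2\sigma_1\xi_1$, where $\sigma$ is the permutation encoded by $\P$. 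Separation \eqref{eq:sep-constraint} then rules out non-pure indices. So the paper reaches your Step 3 inequality in one line, and your Step 4 comes for free because $\P$ is a permutation matrix.

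\textbf{What your route does differently.} You re-derive, at the selected rows only, the part of that theorem that is needed. This goes through the $\ell_{2,\infty}$ subspace bound, Gillis--Vavasis robustness, and the map back to membership space through $(\U^*_{S^*,:})^{-1}$, whose norm is exactly $\sigma_1(\bPi^*)$.

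\textbf{What each buys.} Your route gains transparency. It shows that only the subspace-level result of \cite{CHG24} is needed. It shows that the condition $\kappa^2\sigma_1\xi_1\ll 1/\sqrt K$ is precisely SPA's noise tolerance together with the vertex-matching requirement. It also makes explicit that the constant in \eqref{eq:sep-constraint} must be compared with the SPA constant. The paper imposes the same requirement implicitly when it writes ``as long as the gap exceeds $C\kappa^2\sigma_1\xi_1$''. The cost is extra bookkeeping with no gain in the probabilistic input: in both routes that input is the leave-one-out analysis of \cite{CHG24}.

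\textbf{Two small fixes.} First, $\xi_1$ already carries the incoherence factors $\sqrt{\mu_1K/n}$ and $\sqrt{\mu_2K/p}$; it is itself the $\ell_{2,\infty}$ rate for $\hat\U$. Step 1 should therefore read $\|\hat\U\O-\U^*\|_{2,\infty}\lesssim\xi_1$, not $\xi_1\sqrt{\mu_1K/n}$. Since your Step 3 bound is already stated as $\kappa^2\sigma_1\xi_1$, nothing downstream changes.

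Second, in Step 3 you never have exact equality $\U^*_{j,:}=\U^*_{S^*_{\sigma(k)},:}$ before knowing the profile of $j$. Argue instead as follows. Once $j\in S^*$, we have $\bpi^*_j=\e_l$ for some $l$. Then $\|\U^*_{S^*_l,:}-\U^*_{S^*_{\sigma(k)},:}\|\le 2\delta<\sqrt2\,\sigma_K(\U^*_{S^*,:})$ forces $l=\sigma(k)$. This uses the same vertex separation as your Step 4.
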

Proposition \ref{prop:est-Shat} shows that the estimated pure-subject set $\widehat S$ equals $S^*$, up to permutation, with high probability. Consequently, the two-step estimator \eqref{eq:estimator-twostep} is equal to the oracle estimator \eqref{eq:estimator-oracle} with probability tending to one. The remaining theoretical analysis can therefore be reduced to the oracle problem.

\begin{theorem}\label{thm:main}
    Suppose Assumptions \ref{ass:noise}, \ref{ass:pure-subjects}, \ref{ass:snr}, and \ref{ass:separation} are satisfied. Then for the two-step estimator \eqref{eq:estimator-twostep} given by parameters $((\hat{\bpi_i})_{i=1}^n, (\hat{\M}_k)_{k=1}^K)$, the following holds with probability at least $1 - O((n \vee p)^{-10})$: \begin{align}
    \sum_{i=1}^n \| \EE[\X_i] - \sum_{k=1}^K \hat{\bpi}_{ik} \hat{\M}_k \|_F^2 \leq C^2\sigma^2\brac{nK+\sum_{k=1}^K r_k(d_1+d_2)}\log \brac{n\vee s_\star}.
        \label{product-rate}
	\end{align}
\end{theorem}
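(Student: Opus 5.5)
Proposition \ref{prop:est-Shat} places us, with probability $1-O((n\vee p)^{-10})$, on an event $\calE_1$ where $\hat S$ coincides with $S^*$ up to a permutation. On $\calE_1$ the two-step estimator \eqref{eq:estimator-twostep} is, after relabeling, exactly the oracle estimator \eqref{eq:estimator-oracle}. It therefore suffices to bound the reconstruction error of the oracle estimator and intersect the resulting event with $\calE_1$.

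For the oracle estimator I will use the standard basic-inequality argument for constrained least squares. Write $\Y^* = \EE[\Y] = \bPi^*\bTheta^{*\top}$, $\hat\Y = \hat\bPi\hat\bTheta^\top$ and $\E = \Y - \Y^*$. The truth $(\bPi^*,\bTheta^*)$ is feasible for \eqref{eq:estimator-oracle}, and optimality of $(\hat\bPi,\hat\bTheta)$ gives $\fro{\Y-\hat\Y}^2\le\fro{\Y-\Y^*}^2$. Rearranging yields
\begin{align*}
\fro{\hat\Y-\Y^*}^2 \le 2\inp{\E}{\hat\Y-\Y^*} \le 2\fro{\hat\Y-\Y^*}\sup_{\D\in\calT}\inp{\E}{\D},
\end{align*}
where $\calT$ is the set of unit-Frobenius-norm normalizations of differences $\hat\Y-\Y^*$ over feasible pairs. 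Each feasible $\bPi\bTheta^\top$ lies in
\begin{align*}
\calP=\Bigl\{\bPi\bTheta^\top:\ \bPi\in(\Delta_K)^n,\ \mathrm{rank}(\unvec(\bTheta_{:,k}))\le r_k\Bigr\}.
\end{align*}
Its complexity is roughly $nK$ parameters for the memberships plus $\sum_k r_k(d_1+d_2)$ for the basis matrices. The remaining task is to show that, with high probability,
\begin{align*}
\sup_{\D\in\calT}\inp{\E}{\D}\lesssim\sigma\sqrt{\brac{nK+\sum_k r_k(d_1+d_2)}\log(n\vee s_\star)}.
\end{align*}
Squaring this bound gives \eqref{product-rate}.

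The main obstacle is this supremum, because $\calP$ is bilinear and non-convex. I plan to handle it with an $\epsilon$-net and a union bound.
\begin{enumerate}
\item[(i)] Restrict to a bounded set. The pure-subject constraint forces $\hat\M_k$ to be determined by $\X_{S_k^*}$ up to fit, so on a high-probability event every $\fro{\hat\M_k}$ is at most a polynomial in $s_\star$, $\sigma$ and $p$. Condition \eqref{basis-spread} supplies the bound on the true basis norms.
\item[(ii)] Build the net. Take an $\epsilon$-net of $(\Delta_K)^n$ of log-cardinality $nK\log(1/\epsilon)$. For each $k$, take an $\epsilon$-net of bounded rank-$r_k$ $d_1\times d_2$ matrices, of log-cardinality $r_k(d_1+d_2+1)\log(R/\epsilon)$, following Cand\`es--Plan.
\item[(iii)] Approximate every element of $\calT$ by a normalized net difference. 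Bilinearity gives an approximation error polynomial in $\epsilon$ and the radius $R$. A crude control of $\fro{\E}$, which is at most of order $\sigma\sqrt{np}$ with high probability, absorbs the discretization error once $\epsilon$ is polynomially small in $n\vee s_\star$.
\item[(iv)] Apply sub-Gaussian tails. For a fixed unit $\D$, $\inp{\E}{\D}$ is $\sigma$-sub-Gaussian by Assumption \ref{ass:noise}. A union bound over the net therefore gives the claimed rate with probability $1-O((n\vee p)^{-10})$.
\end{enumerate}
The logarithms from $R/\epsilon$ become $\log(n\vee s_\star)$.

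I expect the delicate step to be the normalization in (iii), since $\hat\Y-\Y^*$ may be small. If that fails, I would fall back on a peeling argument over shells $\fro{\hat\Y-\Y^*}\in[2^j,2^{j+1}]$, or use the decomposition
\begin{align*}
\hat\bPi\hat\bTheta^\top-\bPi^*\bTheta^{*\top}=(\hat\bPi-\bPi^*)\hat\bTheta^\top+\bPi^*(\hat\bTheta-\bTheta^*)^\top
\end{align*}
and bound each piece separately. The first piece has an $nK$-dimensional direction space for each fixed $\hat\bTheta$ in the net. The second lies in the span of $\bPi^*$ tensored with a sum of low-rank matrices, with complexity $\sum_k r_k(d_1+d_2)$.

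The final bound holds on the intersection of the noise event and $\calE_1$, which has probability at least $1-O((n\vee p)^{-10})$.
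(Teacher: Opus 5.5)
Your proposal is correct. It shares the paper's skeleton: reduction to the oracle problem via Proposition~\ref{prop:est-Shat}, the basic inequality $\fro{\hat\W-\W^\star}^2\le 2\langle \E,\hat\W-\W^\star\rangle$, and a product net of $\Delta_K^n$ with low-rank nets, glued by the bilinear bound $\fro{\bPi\bTheta^\top-\bPi'\bTheta'^\top}\le R\fro{\bPi-\bPi'}+\sqrt{n}\fro{\bTheta-\bTheta'}$.

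You differ in the probabilistic step. The paper never bounds $\hat\bTheta$ globally. It instead:
\begin{itemize}
\item works with the localized class $\calT(\zeta)$ of differences of norm at most $\zeta$;
\item uses the pure-subject rows, $\fro{\bTheta-\bTheta^\star}\le\fro{\bPi\bTheta^\top-\bPi^\star\bTheta^{\star\top}}$, to obtain the radius $R(\zeta)=\fro{\bTheta^\star}+\zeta$;
\item controls $\sup_{\calT(\zeta)}\langle\E,\H\rangle$ through Dudley's entropy integral, via Dirksen's generic-chaining tail bound;
\item peels over dyadic shells $\zeta_\ell=2^\ell\zeta_0$, which is exactly your fallback.
\end{itemize}

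Your single-scale route is more elementary. A finite union bound with $t=10\log(n\vee p)$ gives the $(n\vee p)^{-10}$ probability directly. The cost is the logarithm, which is of order $\log\bigl(nKp\,(1+s_\star/\sigma)\bigr)$. It contains $\log p$, and it matches $\log(n\vee s_\star)$ only after you use \eqref{snr-sub-Gaussian} to get $p\lesssim ns_\star^2/(K\sigma^2)$. The paper's localization gives $\log n+\log\bigl(1+\fro{\bTheta^\star}/(\sigma\sqrt{D})\bigr)$ with no $\log p$. Both arguments absorb the $s_\star/\sigma$ scale into $\log(n\vee s_\star)$ in the same implicit way.

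Two steps need tightening.

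\textbf{Step (i).} $\hat\M_k$ is not determined by $\X_{S_k^\star}$, since the estimator fits all subjects. The clean bound uses the same pure-subject identity plus the basic inequality: $\fro{\hat\bTheta-\bTheta^\star}\le\fro{\hat\W-\W^\star}\le 2\fro{\E}\lesssim\sigma\sqrt{np}$ on a Bernstein event. Hence $R=\fro{\bTheta^\star}+C\sigma\sqrt{np}$ is a valid deterministic radius.

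\textbf{Step (iii).} You do not need peeling.
\begin{itemize}
\item If $\fro{\hat\W-\W^\star}\le\sigma$, the claim is trivial.
\item Otherwise, take any $\W$ with $\fro{\W-\W^\star}\ge\sigma$ and a net point $\W'$ with $\fro{\W-\W'}\le\epsilon_W\le\sigma/2$. Their normalized differences differ by at most $2\epsilon_W/\sigma$.
\item Choose $\epsilon_W\asymp\sigma/\sqrt{np}$, keeping only net points at distance at least $\sigma/2$ from $\W^\star$. The discretization error is then $2\fro{\E}\epsilon_W/\sigma\lesssim\sigma$, which is negligible against the target rate.
\end{itemize}
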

Theorem \ref{thm:main} provides a nonasymptotic bound for the reconstruction error of the expected matrices of all subjects. Up to a logarithmic factor, the rate is governed by the effective number of free parameters, $nK + \sum_{k=1}^K r_k(d_1+d_2)$.
The key point is that the contribution of each basis matrix depends on $d_1+d_2$, rather than on the ambient dimension \(d_1d_2\), reflecting the reduction in complexity induced by the low-rank structure. 

The following result shows that rate (\ref{product-rate}) is minimax optimal to a logarithmic factor. First, note that the term on the left-hand side of (\ref{product-rate}) can be written as $||\W^* - \hat{\W}||_F^2$, where $\W^* = \bPi^* \bTheta^{*\top}$ and $\hat{\W} = \hat\bPi \hat\bTheta^{\top}$. Consider the following parameter space:
\begin{align*}
	\calP_s:=\ebrac{\W=\bPi\bTheta^\top :\bPi\in\Delta_K^n,~\rank\big(\unvec_{d_1,d_2}(\bTheta_{\cdot,k})\big)\le r_k,~\forall k\in[K],~\sigma_{K}(\bTheta)\ge s},
\end{align*}
where $\Delta_K^n := \{ \A \in [0,1]^{n \times K}: \A_{i,:} \in \Delta_K (\forall i) \text{ and } \exists S \subset [n] \text{ with } |S|=K \text{ and } \A_{S,:} = \I_K \}$. Theorem \ref{thm:product-lb} shows that the upper bound obtained in Theorem \ref{thm:main} for the two-step estimator \eqref{eq:estimator-twostep} is minimax optimal up to the logarithmic factor:
\begin{theorem}\label{thm:product-lb}

Suppose $n\ge K$. Then there exists some universal constants $c,C>0$ such that 
	\begin{equation*}
		\inf_{\hat\W \in \calP_s}\sup_{\W^* \in \calP_s}\EE\fro{\hat\W-\W^\star}^2 \ge c\sigma^2 \left( nK+\sum_{k=1}^Kr_k(d_1+d_2) \right).
	\end{equation*}
	provided that $s\ge C\sigma K^{3/2}$.
\end{theorem}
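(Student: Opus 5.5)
We prove the lower bound by splitting it into two parts, one for each term. We lower bound the minimax risk separately by $c\sigma^2 nK$ and by $c\sigma^2\sum_k r_k(d_1+d_2)$; the maximum of the two is at least half their sum. In both parts we specialize the noise to i.i.d.\ $N(0,\sigma^2)$ entries, which satisfy Assumption \ref{ass:noise} up to a constant in $\sigma$. We then apply Fano's inequality together with a Varshamov--Gilbert packing. The Kullback--Leibler divergence between two hypotheses $\W,\W'$ is $\fro{\W-\W'}^2/(2\sigma^2)$, so it suffices to build packings of $\calP_s$ whose Frobenius diameter is comparable to their minimal separation.

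\emph{Basis term.} Fix a membership matrix $\bPi_0\in\Delta_K^n$ that puts $\lfloor n/K\rfloor\ge 1$ pure subjects on each profile, with every row a standard basis vector. With this choice $\bPi_0^\top\bPi_0\succeq \lfloor n/K\rfloor \I_K$ and the rows of $\W$ are copies of the columns of $\bTheta$. Take a base $\bTheta_0$ whose $k$-th basis matrix is $s'\,\e_{(k)}\e_{(k)}^\top$ placed on disjoint coordinates, or more generally any rank-$\le r_k$ matrices with $\sigma_K(\bTheta_0)\ge 2s$. Perturb $\M_k$ by $\delta\,\U_k\B_k^\top$ plus $\delta\,\A_k\V_k^\top$, where the fixed factors are $r_k$-dimensional and $\B_k\in\{\pm1\}^{d_2\times r_k}$, $\A_k\in\{\pm1\}^{d_1\times r_k}$ are taken from a Varshamov--Gilbert packing. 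Using disjoint supports keeps the rank at most $r_k$ plus the base rank, which we absorb by reserving half the rank for the base. This gives $\exp(c\sum_k r_k(d_1+d_2))$ hypotheses, with pairwise squared distances in $\W$ of order $\lfloor n/K\rfloor\delta^2\sum_k r_k(d_1+d_2)$. We choose $\delta^2\asymp \sigma^2/\lfloor n/K\rfloor$ so that Fano applies. To stay inside $\calP_s$ we need $\sigma_K(\bTheta)\ge s$, which holds once the perturbation has small operator norm relative to $s$; scaling the base up makes this automatic.

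\emph{Membership term.} Fix $\bTheta_0=s'\,[\e_1,\dots,\e_K]$ in vectorized form, where each $\e_k$ is a rank-one matrix, so the columns are orthogonal with norm $s'\ge s$. Reserve the first $K$ subjects as pure subjects and let the remaining $n-K$ rows vary. Each such row is $\bpi_i=\frac1K\mathbf 1+\delta\,\omega_i$, where $\omega_i\in\{\pm1\}^K$ is centered by projecting onto $\mathbf 1^\perp$, and $\delta\le 1/(2K)$ keeps the row in the simplex. A Varshamov--Gilbert packing over $\{\pm1\}^{(n-K)(K-1)}$ gives $\exp(c(n-K)K)$ hypotheses. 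Since $\fro{(\bPi-\bPi')\bTheta_0^\top}^2=s'^2\fro{\bPi-\bPi'}^2\asymp s'^2\delta^2 (n-K)K$, Fano needs $s'^2\delta^2\asymp\sigma^2$. Thus $\delta\asymp \sigma/s'$, and the simplex requirement $\delta\le1/(2K)$ is met by taking $s'=s\ge C\sigma K^{3/2}$. The resulting bound is $c\sigma^2(n-K)K$. When $n\ge 2K$ this is of order $nK$. When $n<2K$ we get $nK\lesssim K^2$, and this part is covered by the basis term, since $\sum r_k(d_1+d_2)\ge 2K$; alternatively, because $n\ge K$, we can perturb the pure rows' basis instead.

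The main obstacle is the membership construction: its hypotheses must satisfy the pure-subject structure of $\Delta_K^n$ and the constraint $\sigma_K(\bTheta)\ge s$ simultaneously, while still producing $(n-K)(K-1)$ effective degrees of freedom with the right Frobenius scale. The centered perturbation, combined with the condition $s\ge C\sigma K^{3/2}$, is exactly what makes $\delta$ small enough for the rows to remain in the simplex. We need to verify this scaling carefully, possibly replacing $1/K$ by a slightly tighter margin and letting constants absorb the difference.
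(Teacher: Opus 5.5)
Your overall scheme matches the paper's: Gaussian noise, two separate Fano bounds, a pure-row $\bPi_0$ for the basis term, and barycentric perturbations against an orthogonal $\bTheta_0$ for the membership term. Reserving $K$ pure rows is actually more faithful to $\calP_s$ than the paper's membership packing, whose rows all sit near $\frac1K\mathbf 1$ and so are not in $\Delta_K^n$. That care exposes a step that fails, however.

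You only obtain $c\sigma^2\{(n-K)K+\sum_k r_k(d_1+d_2)\}$, and your patch for $K\le n<2K$ does not work. The bound $\sum_k r_k(d_1+d_2)\ge 2K$ does not dominate $nK\asymp K^2$. Perturbing the pure rows' basis only reproduces the basis term. In fact no patch can work, because the statement fails near $n=K$:
\begin{itemize}
\item When $n=K$, every element of $\Delta_K^n$ is a permutation matrix, so $\W$ is just a row permutation of $\bTheta^\top$.
\item Take $r_k=1$ and $d_1=d_2=\lceil\sqrt K\rceil$. Rowwise rank-one truncated SVD has risk $\lesssim\sigma^2K(d_1+d_2)$.
\item Projecting that estimate onto the closed set $\calP_s$ at most doubles the Frobenius error, since $\W^\star\in\calP_s$.
\item The result is an estimator in $\calP_s$ with risk $\lesssim\sigma^2K^{3/2}$, which is $\ll c\sigma^2K^2$ for large $K$.
\end{itemize}
So the hypothesis should be $n\ge 2K$, or $nK$ should be replaced by $(n-K)K$. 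Under that hypothesis your argument is the right one.

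Two other steps need repair. In the basis packing, $\delta\U_k\B_k^\top+\delta\A_k\V_k^\top$ has rank up to $2r_k$ even on disjoint supports. Adding any base then exceeds $r_k$, and ``reserving half the rank'' is impossible when $r_k=1$. Instead, treat the $d_2$ and $d_1$ parts separately, as the paper does via transposition, using $\M_k=\U_k(\B_{k,0}+\delta\B_k)^\top$ with $\U_k$ fixed. This has rank at most $r_k$ automatically. Your Weyl argument for $\sigma_K(\bTheta)\ge s$ then goes through, which again improves on the paper's packing, whose $K$ columns all lie near the single matrix $\U_0\bSigma\V^\top$.

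In the membership packing, Hamming separation does not survive the projection onto $\mathbf 1^\perp$. The projection removes the component of $\omega_i-\omega_i'$ along $\mathbf 1$; for example, $\omega_i=\mathbf 1$ and $\omega_i'=-\mathbf 1$ give identical rows. So the lower half of $\fro{\bPi-\bPi'}^2\asymp\delta^2(n-K)K$ is unproved. There are two ways to fix it:
\begin{itemize}
\item Use the paper's isometric parametrization $\frac1K\mathbf 1+\alpha\U(\bome-\frac12\mathbf 1)$, with $\U$ an orthonormal basis of $\mathbf 1^\perp$. Its sup-norm requirement $\alpha\lesssim K^{-3/2}$ is exactly where the hypothesis $s\ge C\sigma K^{3/2}$ comes from.
\item Use a balanced sign pattern $(\eta_1,-\eta_1,\eta_2,-\eta_2,\dots)$, which needs no projection and only $\delta\le 1/K$.
\end{itemize}
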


\subsection{Error Rates for Membership and Basis Estimation}

While Theorem \ref{thm:main} is sufficient for signal reconstruction, scientific interpretation of the latent structure requires separate control of the estimation errors of $(\hat{\bpi}_i)_{i=1}^n$ and $(\hat{\M}_k)_{k=1}^K$. Such bounds are crucial for guaranteeing the estimator accurately identifies the prototypical patterns and membership scores of subjects. This turns out to be a more delicate problem, because a small product error $\|\hat{\bPi}\hat{\bTheta}^{\top} - \bPi^*\bTheta^{*\top} \|_F^2$ does not automatically rule out compensating perturbations in the two factors $\hat{\bPi}$ and $\hat{\bTheta}$.  To separate the two errors, we impose a local geometric condition on perturbations of $\bTheta$.  
% These results hold under an additional assumption on the shape of the parameter space, stated below. 

To that end, we define $ \calS_\theta^\star := \mathrm{span}\{\btheta_1^\star,\ldots,\btheta_K^\star\}\subset\RR^p$. For any matrix $\A=(a_1,\ldots,a_K)\in\RR^{p\times K}$, let $\calP_{\calS_\theta^\star}(\A)
:=\big(\calP_{\calS_\theta^\star}(a_1),\ldots,\calP_{\calS_\theta^\star}(a_K)\big)$
be the columnwise orthogonal projection of $\A$ onto $\calS_\theta^\star$. Fix $\rho>0$ and define the local parameter space
\begin{equation*}
    \calM(\rho):=\Big\{\bTheta\in\RR^{p\times K}:\ \max_{k\in[K]}\|\bTheta_{\cdot,k}-\bTheta_{\cdot,k}^\star\|_2\le \rho,\ \rank\big(\unvec_{d_1,d_2}(\bTheta_{\cdot,k})\big)\le r_k\ \forall k\in[K]\Big\}.
\end{equation*}
For any $\bTheta\in\calM(\rho)$, decompose each columnwise perturbation as
\begin{equation*}
    \btheta_k-\btheta_k^\star = d_k\btheta_k^\star + \h_k,
    \qquad
    d_k:=\frac{\langle\btheta_k-\btheta_k^\star,\btheta_k^\star\rangle}{\|\btheta_k^\star\|_2^2},
    \qquad
    \langle \h_k,\btheta_k^\star\rangle=0,
\end{equation*}
for $k\in[K]$. Let $\D(\bTheta):=\mathrm{diag}(d_1,\ldots,d_K)$ and $
\H(\bTheta):=(\h_1,\ldots,\h_K)$, then we can write
\begin{align*}
\bTheta = \bTheta^\star\big(\I_K+\D(\bTheta)\big)+\H(\bTheta),
\end{align*}
where the $k$th column of $\H(\bTheta)$ is orthogonal to $\btheta_k^\star$.

\begin{assumption}
    \label{ass:align} 
    There exists a  constant $\tau\in[0,1)$ such that 
    	\begin{align*}
    \tau_\rho:=\sup_{\bTheta\in \calM(\rho)}\frac{\fro{\calP_{\mathcal{S}_\theta^\star}( \H(\bTheta) )}}{\fro{\H(\bTheta)}}\le \tau.
    	\end{align*}
\end{assumption}

Assumption \ref{ass:align} is a local non-alignment condition. It requires that, within a neighborhood of radius $\rho$ around $\bTheta^\star$, the component of the basis-matrix error that is orthogonal to each signal column is not itself mostly contained in the global signal span $\calS_\theta^\star$. When $\tau_\rho$ is small, perturbations of $\bTheta$ cannot masquerade as perturbations of $\bPi$ so that errors in $\bTheta$ and $\bPi$ move the product $\bPi\bTheta^\top$ in genuinely different directions, which is exactly what is needed to decouple the two estimation errors.

The next lemma shows that Assumption \ref{ass:align} is mild when the latent basis matrices have sufficiently separated singular subspaces. Let
\begin{align*}
        s_{\min}:=\min_{k\in[K]}\sigma_{r_k}(\M_k^\star),
    \qquad
    \varepsilon_U:=\max_{j\neq k}\|\U_j^{\star\top}\U_k^\star\|,
    \qquad
    \varepsilon_V:=\max_{j\neq k}\|\V_j^{\star\top}\V_k^\star\|,
\end{align*}

\begin{lemma}\label{lem:tau-bound}
Assume that $\rho/s_{\min}<1/2$, then we have
\begin{align*}
	 \tau_\rho\le \sqrt{\frac{C_0K}{c_0}}\Big(\varepsilon_U\varepsilon_V+\varepsilon_U+\varepsilon_V+\frac{\rho}{s_{\min}}\Big).
\end{align*}
Moreover, when $\rho/ s_{\min}<\frac{c_0}{8C_0K^{1/2}} $ and $\max\ebrac{\varepsilon_U,\varepsilon_V}\le  \frac{c_0}{8C_0K^{1/2}}$, we have $\tau_\rho<\sqrt{c_0/(4C_0)}$.
\end{lemma}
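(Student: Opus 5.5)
Fix $\bTheta\in\calM(\rho)$ and write $\H=\H(\bTheta)$ with columns $\h_k\perp\btheta_k^\star$. The plan is to bound $\fro{\calP_{\calS_\theta^\star}(\H)}$ by a small multiple of $\fro{\H}$, using the fact that each $\h_k$ is (up to a small curvature correction) a tangent vector to the rank-$r_k$ manifold at $\M_k^\star$. The other directions $\btheta_j^\star$, $j\neq k$, are nearly orthogonal to that tangent space because of the subspace incoherences $\varepsilon_U,\varepsilon_V$.

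\textbf{Step 1 (reduce to inner products).} Since $\bTheta^{\star\top}\bTheta^\star\succeq c_0 s_\star^2\I_K$, the projection onto $\calS_\theta^\star$ satisfies
\[
\|\calP_{\calS_\theta^\star}(\h_k)\|_2\le (c_0s_\star^2)^{-1/2}\|\bTheta^{\star\top}\h_k\|_2 .
\]
Because $\langle\btheta_k^\star,\h_k\rangle=0$, only the terms $\langle\btheta_j^\star,\h_k\rangle$ with $j\neq k$ survive. It therefore suffices to show
\[
|\langle\btheta_j^\star,\h_k\rangle|\le \|\btheta_j^\star\|_2\,\|\h_k\|_2\Big(\varepsilon_U\varepsilon_V+\varepsilon_U+\varepsilon_V+\tfrac{\rho}{s_{\min}}\Big).
\]
Summing over $j$ and using $\sum_j\|\btheta_j^\star\|_2^2\le \tr(\bTheta^{\star\top}\bTheta^\star)\le KC_0s_\star^2$ then gives $\|\calP_{\calS_\theta^\star}(\h_k)\|_2\le\sqrt{C_0K/c_0}\,(\cdots)\|\h_k\|_2$. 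Squaring and summing over $k$ yields the stated bound on $\tau_\rho$.

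\textbf{Step 2 (tangent decomposition).} Set $\M_k=\unvec(\btheta_k)$ and $\bDelta_k=\M_k-\M_k^\star$. Let $\calP_{T_k}$ be the projection onto the tangent space $T_k=\{\U_k^\star\A^\top+\B\V_k^{\star\top}\}$. Then $\calP_{T_k}^\perp(\bDelta_k)=\calP_{T_k}^\perp(\M_k)=(\I-\U_k^\star\U_k^{\star\top})\M_k(\I-\V_k^\star\V_k^{\star\top})$. A standard rank-$r_k$ perturbation bound (Wedin, using $\|\bDelta_k\|\le\rho<s_{\min}/2$) gives
\[
\fro{\calP_{T_k}^\perp(\bDelta_k)}\lesssim \|\bDelta_k\|\,\fro{\bDelta_k}/s_{\min}\le (\rho/s_{\min})\fro{\bDelta_k}.
\]
Also, $\M_k^\star\in T_k$, so $\unvec(\h_k)=\bDelta_k-d_k\M_k^\star$ has the same normal component as $\bDelta_k$.

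The step I expect to be the main obstacle is relating this normal component to $\|\h_k\|_2$ rather than to $\|\bDelta_k\|$, since $\bDelta_k$ may contain a large multiple of $\M_k^\star$. My first attempt is to apply the curvature bound to the rescaled point $\M_k/(1+d_k)$, whose distance to $\M_k^\star$ is $\|\h_k\|_2/(1+d_k)$. Normal components scale linearly and $|d_k|\le\rho/\|\btheta_k^\star\|\le 1/2$, so this gives $\fro{\calP_{T_k}^\perp(\unvec(\h_k))}\lesssim(\rho/s_{\min})\|\h_k\|_2$, with constants absorbed into the stated form.

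\textbf{Step 3 (cross terms).} Decompose $\unvec(\h_k)=\U_k^\star\A^\top+\P_{U_k}^\perp\B\V_k^{\star\top}+\calP_{T_k}^\perp(\cdot)$. Pair this with $\M_j^\star=\U_j^\star\bSigma_j\V_j^{\star\top}$, where $\|\M_j^\star\|_{\rm F}=\|\btheta_j^\star\|_2$:
\begin{itemize}
\item $\langle\M_j^\star,\U_k^\star\A^\top\rangle=\tr(\A^\top\V_j^\star\bSigma_j\U_j^{\star\top}\U_k^\star)$, which is bounded by $\varepsilon_U\fro{\M_j^\star}\fro{\A}$;
\item the second piece is similarly bounded by $\varepsilon_V$, with the factor $\P_{U_k}^\perp$ contributing the $\varepsilon_U\varepsilon_V$ cross term when the pieces are recombined;
\item the normal piece is bounded by $(\rho/s_{\min})\|\h_k\|_2\|\btheta_j^\star\|_2$ by Cauchy--Schwarz.
\end{itemize}
Since each piece has Frobenius norm at most $\|\h_k\|_2$, this gives the inner-product bound of Step 1 and completes the proof of the first inequality.

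\textbf{Second claim.} Substitute $\max\{\varepsilon_U,\varepsilon_V\},\rho/s_{\min}\le c_0/(8C_0K^{1/2})=:\eta$. Since $\eta\le1/8$, we have $\varepsilon_U\varepsilon_V\le\eta/8$, so the bracket is less than $4\eta$. Hence
\[
\tau_\rho<\sqrt{C_0K/c_0}\cdot\frac{c_0}{2C_0K^{1/2}}=\tfrac12\sqrt{c_0/C_0}=\sqrt{c_0/(4C_0)}.
\]
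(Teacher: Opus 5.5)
Your architecture is the paper's: the same reduction $\|\calP_{\calS_\theta^\star}(\h_k)\|_2^2\le (c_0s_\star^2)^{-1}\|\bTheta^{\star\top}\h_k\|_2^2$, a block split of $\H_k:=\unvec_{d_1,d_2}(\h_k)$ relative to $(\U_k^\star,\V_k^\star)$, cross terms controlled by $\varepsilon_U,\varepsilon_V$, and a curvature bound on the normal block. Your three-piece split in Step 3 actually gives $\varepsilon_U+\varepsilon_V$ with no product term. The remark that $\P_{U_k}^\perp$ ``contributes'' $\varepsilon_U\varepsilon_V$ is unnecessary but harmless.

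\textbf{The gap is in Step 2.} The lemma has explicit constants: the coefficient of $\rho/s_{\min}$ must be $1$. Your own arithmetic for the second claim tolerates a coefficient $C$ only if $C\le 15/8$, since the bracket is then $<(2+\tfrac18+C)\eta$ against the required $4\eta$. A black-box Wedin-type bound stated with $\lesssim$, with constants ``absorbed into the stated form'', does not deliver this. For example, the common form $\fro{\calP_{T}^\perp(\bDelta)}\le 2\|\bDelta\|\fro{\bDelta}/s_{\min}$, applied at your rescaled point and scaled back, gives $2\fro{\H_k}^2/((1+d_k)s_{\min})$. That is a factor up to $4$ in front of $(\rho/s_{\min})\fro{\H_k}$, which breaks the second claim. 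You also do not verify that bound's hypothesis at the rescaled point: from $|d_k|\le 1/2$ you only get $\|\H_k\|/(1+d_k)\le 2\rho$, which can be close to $s_{\min}$ rather than below $s_{\min}/2$.

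\textbf{The fix makes the rescaling unnecessary.} Let $\H_{k,ab}$ denote the blocks of $\H_k$ in the bases $(\U_k^\star,\U_{k,\perp}^\star)$ and $(\V_k^\star,\V_{k,\perp}^\star)$. Since $\M_k=(1+d_k)\M_k^\star+\H_k$ and $\M_k^\star$ lives entirely in the $(1,1)$ block, the $(1,2)$, $(2,1)$, $(2,2)$ blocks of $\M_k$ are exactly $\H_{k,12},\H_{k,21},\H_{k,22}$. Set $\M_{11}:=\U_k^{\star\top}\M_k\V_k^\star$; then $\sigma_{\min}(\M_{11})\ge s_{\min}-\|\bDelta_k\|> s_{\min}/2$. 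Because $\rank(\M_k)\le r_k$, the Schur complement vanishes: $\H_{k,22}=\H_{k,21}\M_{11}^{-1}\H_{k,12}$. Hence
\[
\fro{\H_{k,22}}\le \frac{2}{s_{\min}}\fro{\H_{k,21}}\fro{\H_{k,12}}\le \frac{1}{s_{\min}}\fro{\H_k}^2\le\frac{\rho}{s_{\min}}\fro{\H_k},
\]
using $\fro{\H_k}\le\|\btheta_k-\btheta_k^\star\|_2\le\rho$. This is the paper's Schur-complement step. It bounds the normal block directly in terms of $\H_k$ rather than $\bDelta_k$, with constant exactly $1$. With it, your Steps 1 and 3 and your second-claim computation go through as written.
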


Lemma \ref{lem:tau-bound} gives a concrete sufficient condition for Assumption \ref{ass:align}. The quantities $\varepsilon_U$ and $\varepsilon_V$ measure how much the singular subspaces of different latent basis matrices overlap. If these overlaps are small, then the latent profiles are well separated: changing one basis matrix in a direction orthogonal to itself does not closely resemble changing another basis matrix. The ratio $\rho/s_{\min}$ can be regarded as the SNR condition, ensuring that we stay in a neighborhood where the low-rank geometry is stable.

Finally, we can now state the separate consistency results for each of $(\hat{\bpi}_i)_{i=1}^n$ and $(\hat{\M}_k)_{k=1}^K$ obtained from the two-step estimator.

\begin{theorem}\label{thm:decouple-rate}
	Suppose Assumptions \ref{ass:noise}, \ref{ass:pure-subjects}, \ref{ass:snr}, \ref{ass:separation}, and \ref{ass:align} hold with $$\rho:=C\sigma\sqrt{\brac{nK+\sum_{k=1}^K r_k(d_1+d_2)}\log(n\vee s_\star)},\qquad  \tau<\sqrt{c_0/(4C_0)}.$$ In addition, assume that  
	\begin{equation}
    \label{large-network}
    \begin{gathered}
    \frac{d_1 d_2}{n}\ge C K^2 \log (n \vee s_\star), \qquad
    d_1 \wedge d_2 \ge C K \Bigl( \sum_{k=1}^K r_k \Bigr)\log (n \vee s_\star), \\
    s_\star^2 \ge C \sigma^2 \max(K^3, nK) \left( \frac{d_1+d_2}{n}\sum_{k=1}^K r_k + K \right)\log(n \vee s_\star).
    \end{gathered}
    \end{equation}
	for some constant $C>0$ depending only on $c_0$ and $C_0$. For the two-step estimator \eqref{eq:estimator-twostep} given by parameters $((\hat{\bpi_i})_{i=1}^n, (\hat{\M}_k)_{k=1}^K)$, we have with probability at least $1-(d_1d_2)^{-10}-e^{-n\vee (d_1+d_2)}$, there exists a permutation $\sigma \in S_K$ such that
	\begin{align*}
	\frac{1}{nK} \sum_{i=1}^n \sum_{k=1}^K (\hat{\bpi}_{ik} - \bpi^*_{i\sigma(k)})^2 &\le C_1^2\sigma^2\brac{\frac{K}{s_\star^2}+\frac{(d_1+d_2)\sum_{k=1}^K r_k}{n s_\star^2}}\log (n\vee s_\star),\\
	\frac{1}{d_1d_2K}\sum_{k=1}^K \bfro{\hat{\M}_k - \M^*_{\sigma(k)}}^2 &\le C_2^2 \sigma^2\brac{\frac{K^2}{d_1d_2}+\frac{(d_1+d_2)K\sum_{k=1}^K r_k}{n d_1d_2}}\log (n\vee s_\star).
\end{align*}
\end{theorem}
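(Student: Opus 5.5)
We reduce the problem to the oracle estimator and then split the product error using the geometry of Assumption~\ref{ass:align}. By Proposition~\ref{prop:est-Shat}, on an event of probability $1-O((n\vee p)^{-10})$ we have $\hat S=S^*$ up to a permutation $\sigma$. After relabeling we may take $\sigma=\mathrm{id}$, so $\hat\bPi_{S^*,:}=\I_K=\bPi^*_{S^*,:}$. Theorem~\ref{thm:main} then gives
\[
\fro{\hat\bPi\hat\bTheta^\top-\bPi^*\bTheta^{*\top}}\le\rho .
\]
Restricting to the rows in $S^*$ gives $\fro{\hat\bTheta-\bTheta^*}\le\rho$, hence $\hat\bTheta\in\calM(\rho)$. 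This already controls the basis error by $\rho^2$. The goal is the sharper bound, which replaces $nK$ by $K^2$ and divides the $d$-term by $n$.

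Write $\bDelta_\Pi=\hat\bPi-\bPi^*$, and decompose $\hat\bTheta=\bTheta^*(\I_K+\D)+\H$ with $\D=\D(\hat\bTheta)$ and $\H=\H(\hat\bTheta)$. Then
\[
\hat\bPi\hat\bTheta^\top-\bPi^*\bTheta^{*\top}=\big[\bDelta_\Pi+\hat\bPi\D\big]\bTheta^{*\top}+\hat\bPi\H^\top .
\]
The first term lies in $\calS_\theta^*$ row-wise. The second splits into a part in $\calS_\theta^*$ and a part in its orthogonal complement, and Assumption~\ref{ass:align} says the in-span part carries at most a $\tau$ fraction of $\fro{\H}$. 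I would prove a lower bound of Pythagorean type:
\[
\fro{\hat\bPi\hat\bTheta^\top-\bPi^*\bTheta^{*\top}}^2\gtrsim \sigma_K(\hat\bPi)^2\fro{\H}^2+s_\star^2\fro{\bDelta_\Pi+\hat\bPi\D}^2 .
\]
To get it, expand the squared norm and bound the cross term $\langle(\bDelta_\Pi+\hat\bPi\D)\bTheta^{*\top},\hat\bPi\calP_{\calS}(\H)^\top\rangle$ by $\tau\sqrt{C_0/c_0}$ times the product of the two norms. The condition $\tau<\sqrt{c_0/(4C_0)}$ makes this cross term at most half of the diagonal terms. We also need $\sigma_K(\hat\bPi)^2\gtrsim n/K$, which follows from \eqref{membership-spread} by perturbation since $\fro{\bDelta_\Pi}$ is small. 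Finally, $\D$ is absorbed: restricting to the pure rows gives $\bDelta_\Pi=0$ there, so $\D$ and $\H$ on those rows are controlled, and $\fro{\hat\bPi\D}\le\sigma_1(\hat\bPi)\fro{\D}$.

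Plugging in only the crude bound $\rho^2$ does not yield the stated rates. The stated rates require a refined, localized analysis of the oracle least-squares problem. I would use the basic inequality
\[
\fro{\hat\W-\W^*}^2\le 2\langle \E,\hat\W-\W^*\rangle ,
\]
where $\E$ is the noise matrix, and decompose $\hat\W-\W^*$ into its two components above.

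For the $\bTheta$ component $\hat\bPi\H^\top$, the noise pairing is $\langle\hat\bPi^\top\E,\H^\top\rangle$. Here $\hat\bPi^\top\E$ has rows that are sums of $n$ noise rows with weight norm about $\sqrt{n/K}$. Combined with the rank-$r_k$ structure of the columns of $\H$ (the difference of two rank-$r_k$ matrices has rank at most $2r_k$), a net argument over low-rank matrices gives a supremum of order $\sigma\sqrt{(n/K)(K^2+K(d_1+d_2)\sum_k r_k)\log}$ times $\fro{\H}$. Making this bound uniform over the random $\hat\bPi$ is exactly where the conditions $d_1d_2/n\gtrsim K^2\log$ and $d_1\wedge d_2\gtrsim K\sum r_k\log$ enter.

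For the $\bPi$ component, the pairing $\langle\E\bTheta^*,\bDelta_\Pi+\hat\bPi\D\rangle$ is a sum over $n$ rows, each projected to $K$ dimensions, with the scale set by $s_\star$. This gives a contribution of order $\sigma s_\star\sqrt{nK\log}\cdot\fro{\bDelta_\Pi+\hat\bPi\D}$.

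Dividing through by the lower-bound coefficients then yields
\[
\fro{\H}^2\lesssim\sigma^2\Big(K^2+\tfrac{K(d_1+d_2)\sum_k r_k}{n}\Big)\log,
\qquad
\fro{\bDelta_\Pi}^2\lesssim\sigma^2\Big(\tfrac{nK^2}{s_\star^2}+\tfrac{K(d_1+d_2)\sum_k r_k}{s_\star^2}\Big)\log .
\]
The $\D$ contribution is handled by the same $\fro{\hat\bPi\D}$ control as before, and the lower bound on $s_\star^2$ in \eqref{large-network} guarantees the linearization is valid. Normalizing by $nK$ and $d_1d_2K$ gives the claimed bounds. The event $e^{-n\vee(d_1+d_2)}$ comes from the net arguments.

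The main obstacle is the uniform noise bound over the random $\hat\bPi$ coupled with the low-rank $\H$. It requires a careful chaining or peeling over the simplex-constrained $\bPi$ so that the $nK$ term does not leak into the basis rate.
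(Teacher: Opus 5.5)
There is a genuine gap, in two places. First, your claim that plugging in $\rho^2$ ``does not yield the stated rates'' is mistaken. That claim sends you into a localized empirical-process argument whose decisive step (a noise bound uniform over the random, simplex-constrained $\hat\bPi$, coupled with low-rank $\H$) you leave open as ``the main obstacle.'' As written, the proof therefore rests on an unproven step, and that step is unnecessary. The paper (Lemma~\ref{lem:decouple-rate}) uses only $\delta_W=\fro{\bDelta_W}\le\rho$ from Theorem~\ref{thm:main}. It combines a lower bound $\fro{\bDelta_\Pi\bTheta^{\star\top}+\bPi^\star\bDelta_\Theta^\top}^2\ge\eta\bigl(\fro{\bDelta_\Pi\bTheta^{\star\top}}^2+\fro{\bPi^\star\bDelta_\Theta^\top}^2\bigr)$, with $\eta\asymp1/K$, with $\fro{\bPi^\star\bDelta_\Theta^\top}\ge\sqrt{c_0n/K}\,\fro{\bDelta_\Theta}$ and $\fro{\bDelta_\Pi\bTheta^{\star\top}}\ge\sqrt{c_0}\,s_\star\fro{\bDelta_\Pi}$. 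This yields $\fro{\bDelta_\Theta}^2\lesssim K^2\rho^2/n$ and $\fro{\bDelta_\Pi}^2\lesssim K\rho^2/s_\star^2$. After normalizing by $d_1d_2K$ and $nK$, these are exactly the stated rates. The $n/K$ gain for the basis comes from $\sigma_K(\bPi^\star)^2$, not from a sharper noise bound. Indeed, your displayed bound on $\fro{\H}^2$ is just $K\rho^2/n$, which your own Pythagorean inequality already gives from $\rho$.

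Second, and this is the real missing idea, you never separate $\bDelta_\Pi$ from the scaling component $\D$. Absorbing $\D$ through the pure rows gives only $\fro{\D}\lesssim\rho/s_\star$, hence $\fro{\hat\bPi\D}^2\lesssim(n/K)\rho^2/s_\star^2$. This exceeds the target $K\rho^2/s_\star^2$ by a factor $n/K^2$. It also leaves the $\bTheta^\star\D$ part of the basis error at the crude level $\rho^2$, since $\fro{\hat\bTheta-\bTheta^\star}^2=\fro{\bTheta^\star\D}^2+\fro{\H}^2$ by columnwise orthogonality and you only sharpen $\fro{\H}$.

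The paper handles this with Lemma~\ref{lem:sum-lb}. Every row of $\bDelta_\Pi$ sums to zero, while the row sums of $\bPi^\star\D$ are the entries of $\bPi^\star\d$. Hence $\mathrm{dist}(\bPi^\star\D,\calD)^2=\|\bPi^\star\d\|^2/K\ge\frac{c_0}{C_0K}\fro{\bPi^\star\D}^2$, and so $\fro{\C+\B}^2\ge\frac{c_0}{2C_0K}(\fro{\C}^2+\fro{\B}^2)$ for $\C=\bPi^\star\D$ and any row-sum-zero $\B$. This is where $\eta\asymp1/K$ comes from. Together with the $\tau$ cross-term bound (Lemma~\ref{lem:cross-ub}), it controls all of $\fro{\bPi^\star\bDelta_\Theta^\top}$, i.e., both $\D$ and $\H$.

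Finally, invoking $\sigma_K(\hat\bPi)^2\gtrsim n/K$ ``since $\fro{\bDelta_\Pi}$ is small'' is circular. The paper instead expands around $\bPi^\star$, keeps $\bDelta_\Pi\bDelta_\Theta^\top$ as a remainder, and bootstraps. First, $\delta_\Theta\le\rho\le\mathfrak{C}_p/2$ (by the lower bound on $s_\star^2$) gives $\delta_\Pi\le2\delta_W/\mathfrak{C}_p$. That in turn gives $\delta_\Pi\le\mathfrak{C}_n/2$, and so $\delta_\Theta\le2\delta_W/\mathfrak{C}_n$.
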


Theorem \ref{thm:decouple-rate} states that our proposed estimator of $(\hat{\bpi}_i)_{i=1}^n$ and $(\hat{\M}_k)_{k=1}^K$ achieves a desirable rate of consistency in terms of convergence of the mean squared error to 0. Condition (\ref{large-network}) requires the size of each subject's matrix to be sufficiently large relative to the number of observed networks and the number and size of the prototypical patterns. 

The rates in Theorem \ref{thm:decouple-rate} demonstrate the utility of the low-rankness assumption in facilitating efficient parameter estimation. This is made clear by a comparison of our rates against those obtained by \cite{CHG24} for their mixed membership model of vector-valued observations, written in Corollary 1 of their work. They present their rates in terms of $l_{2,\infty}$ norm for $\bPi^*$ and $l_\infty$ norm for $\bTheta^*$. Their rates are stronger than but also immediately imply Frobenius norm rates, which we use to compare with our Frobenius norm rate. Holding $K$, $\sigma$ constant, letting $s_\star^2 = d_1 d_2$, and provided $d_1 d_2 \gtrsim n$, the MSE of basis parameters and membership parameters for both our models and theirs scales like $O(f(n) + \frac{1}{d_1d_2})$ up to a logarithmic factor, where $f(n)$ depends on error type (basis/membership) and method. Comparison of the rates is made explicit in Table \ref{tab:comparison}. Based on these results, we can say that our exploitation of low-rank structure improves basis estimation by multiplying the $1/n$ term in the rate by factor $\left(\frac{(d_1+d_2)(\sum_k r_k)}{d_1d_2}\right)$, and it improves membership error as well as long as $d_1 \wedge d_2 \gg n/(\sum_k r_k)$, i.e. the number of rows and columns of the matrix scale fast relative to the number of observations and ranks of the basis matrices.

\begin{table}[h!]
\centering
\begin{tabular}{|c|c|c|}
\hline
$f(n)$ & Basis & Membership \\ \hline
\cite{CHG24} & $1/n$ & $1/n^2$ \\ \hline
LrMMM & $(1/n)\left(\frac{(d_1+d_2)(\sum_k r_k)}{d_1d_2}\right)$ & $(1/n)\left(\frac{(d_1+d_2)(\sum_k r_k)}{d_1d_2}\right)$ \\ \hline
\end{tabular}
\caption{Comparison of $f(n)$-components of MSE. Our model exploits low-rank structure to achieve faster estimation rates.}
\label{tab:comparison}
\end{table}

\begin{remark}
    We remark that even if \Cref{ass:align} does not hold, consistent estimation of membership and basis estimation parameters still holds as a direct corollary of \Cref{thm:main} provided a stronger condition on the size of the matrices than (\ref{large-network}) holds. However, the guaranteed rate of convergence in this case is slower. Details are provided in Section~S.3.5 of the Supplement.
\end{remark}

\section{Simulation Studies}\label{sec:Simulations}

In this section, we present a simulation demonstrating consistency, rate verification, and the benefit of multimodality under a reasonable double-asymptotic setting. Section~S.4 of the Supplement contains extensive additional simulations, including a simulated neuroscience application, further rate verification, and evidence our data-generating mechanism satisfies our assumptions. We generate data from two generative models: one using Bernoulli error and the other using Normal error. At a high level, we simulate membership vectors from a thresholded Dirichlet distribution that separates $K$ pure subjects from the rest of the subjects by a constant. For the Bernoulli data, we generate basis matrices as stochastic blockmodels with randomly generated node membership and edge probabilities, while for the Normal data, we generate basis matrices using randomly obtained singular subspaces scaled to satisfy our signal-to-noise conditions. We simulate multimodal datasets with $M=5$ modalities, and apply Algorithm \ref{alg:Algorithm} to the first modality and Algorithm \ref{alg:Algorithm-MM} to the entire multimodal dataset, and we simulate under a double asymptotic regime where $d:=d_1=d_2$, and $n,d\to \infty$ with $d=n/5$. The full details of this simulation  are presented in Section~S.4 of the Supplement.

\begin{figure}[h!]
    \centering
    \captionsetup[subfigure]{skip=0pt}
    \begin{subfigure}[t]{1.0\linewidth}
        \centering
        \includegraphics[width=\linewidth]{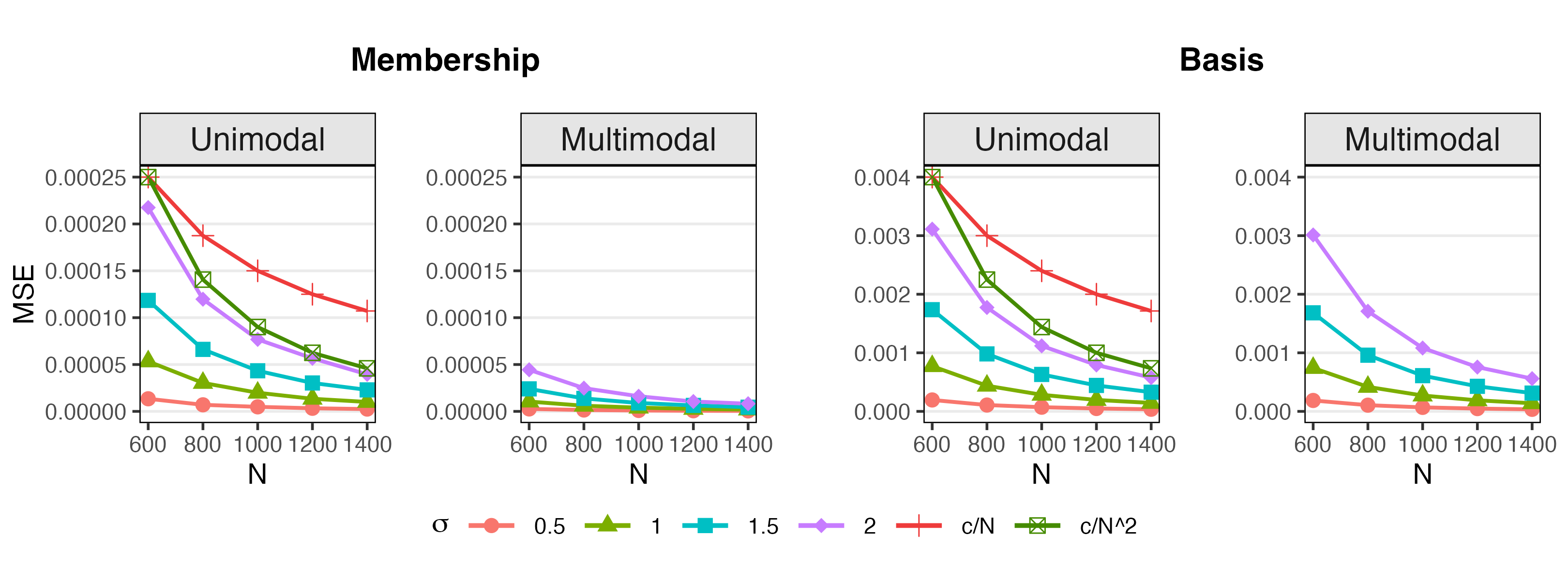}
        \subcaption{Normal}
    \end{subfigure}
    \begin{subfigure}[t]{1.0\linewidth}
        \centering
        \includegraphics[width=\linewidth]{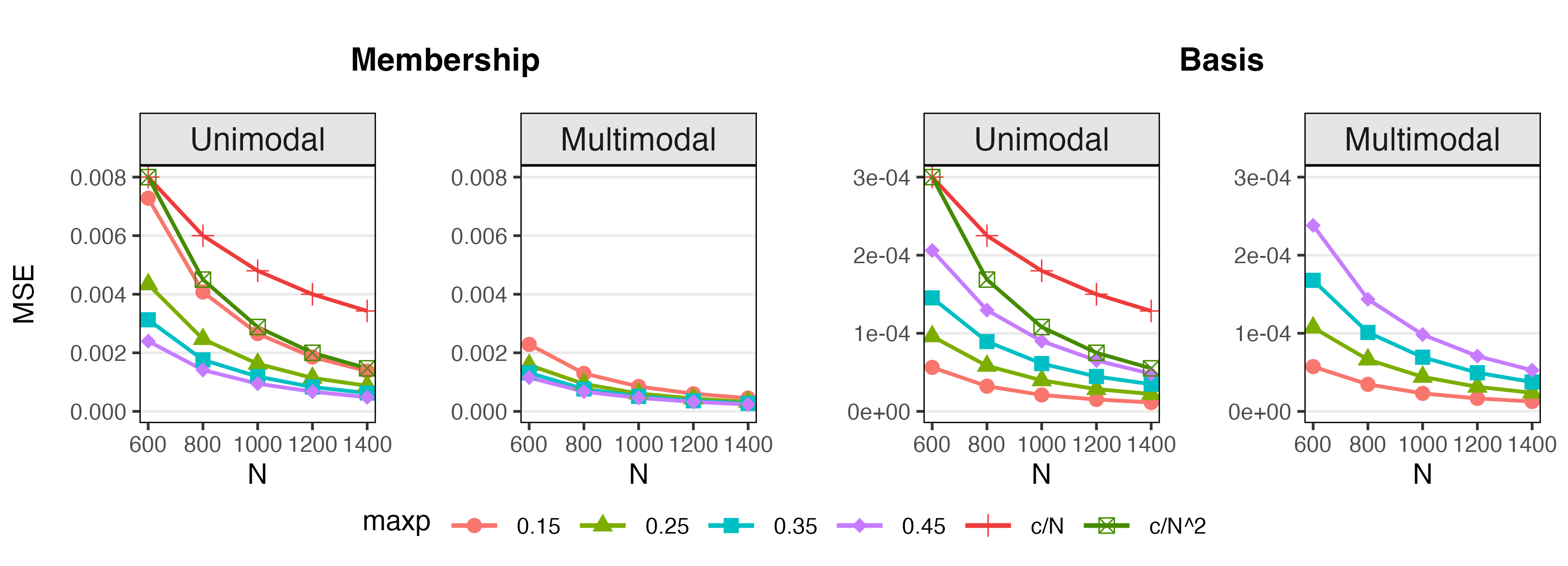}
        \subcaption{Bernoulli}
    \end{subfigure}

    \caption{Consistency and relative performance of Algorithm \ref{alg:Algorithm} (Unimodal) and Algorithm \ref{alg:Algorithm-MM} (Multimodal) in the scaling regime $d=n/5$. Unimodal plots include reference lines $c/N$ and $c/N^2$ for comparison.}
    \label{fig:Consistency}
\end{figure}

Figure \ref{fig:Consistency} summarizes the main findings. First, we observe consistency of both algorithms. Second, the multimodal Algorithm \ref{alg:Algorithm-MM} improves membership estimation relative to the unimodal Algorithm \ref{alg:Algorithm} while achieving comparable accuracy in estimating latent basis matrices, demonstrating the benefit of borrowing information across modalities. Third, estimation is generally more accurate for lower $\sigma$ (noise) and  lower sparsity (i.e., higher $\text{maxp}$), except for basis estimation in the Bernoulli case, where this pattern is reversed. Finally, Algorithm \ref{alg:Algorithm} appears to scale at a rate of $O(n^{-2})$, which aligns with the rates obtained for the two-step estimator in Theorem \ref{thm:decouple-rate} in our double-asymptotic setting, modulo a log factor. Additional simulations in the Supplement further verify these convergence rates under varying matrix dimensions and sample sizes.

\section{Application to Human Connectome Project Data}
\label{sec:DataAnalysis}

In this section, we apply our method to the HCP dataset. We first demonstrate that our method is unique among benchmark methods in achieving both biologically interpretable brain connectivity networks and validated subject-level scores simultaneously. We then illustrate how these two results together facilitate discovery of the relationships between the connectivity networks and cognition. After following the data preprocessing steps described in Section~S.5 of the Supplement, our dataset consists of binary functional connectivity matrices of $n=913$ subjects on $d=264$ ``regions of interest" (ROIs) in the brain, with each subject providing $M=8$ connectivity matrices, and each matrix corresponding to a unique task performed by the subject during data collection; we also have $W=24$ cognitive covariates for each subject. Since each of the $8$ cognitive tasks engages partially overlapping but distinct functional networks, treating each task as a separate modality allows the proposed multimodal model to distinguish stable subject-specific connectivity behavior from task-specific modulation while borrowing strength across tasks.

\subsection{Model Comparison}\label{sec:ModelComparison}

We compare Algorithms \ref{alg:Algorithm} (unimodal LrMMM or ``uLrMMM") and \ref{alg:Algorithm-MM} (multimodal mLrMMM or ``mLrMMM") with the Low-rank Lloyd (``LrLloyd") clustering algorithm of \cite{LX25}, the Tensor Mixed Membership Blockmodel (``TMMBM") of \cite{AZ25}, the Semi-symmetric Tensor PCA method (``SS-TPCA") of \cite{Weylandt2025SS-TPCA}, and the multimodal Generalized Joint Semi-symmetric Tensor PCA (``gJisstPCA") algorithm of \cite{Liu2023joint}. We fit each model with $K=3$ and $r_{k,m} = r = 3$ based on a scree plot model selection procedure; as an exception we use $K=2$ for the Tensor PCA methods SS-TPCA and gJisstPCA since a) this was the number of components chosen in the analysis of the HCP dataset in \citep{Liu2023joint}, and b) so as to compare embeddings that use the same number of parameters. Full details of our model selection procedure and model comparison procedure are in Section~S.5 of the Supplement.

\begin{figure}[htbp]
    \centering
    \makebox[\textwidth][c]{%
    \hspace*{-0.1\textwidth}%
    \begin{subfigure}[b]{0.60\textwidth}
      \centering
      \includegraphics[height=6.0cm]{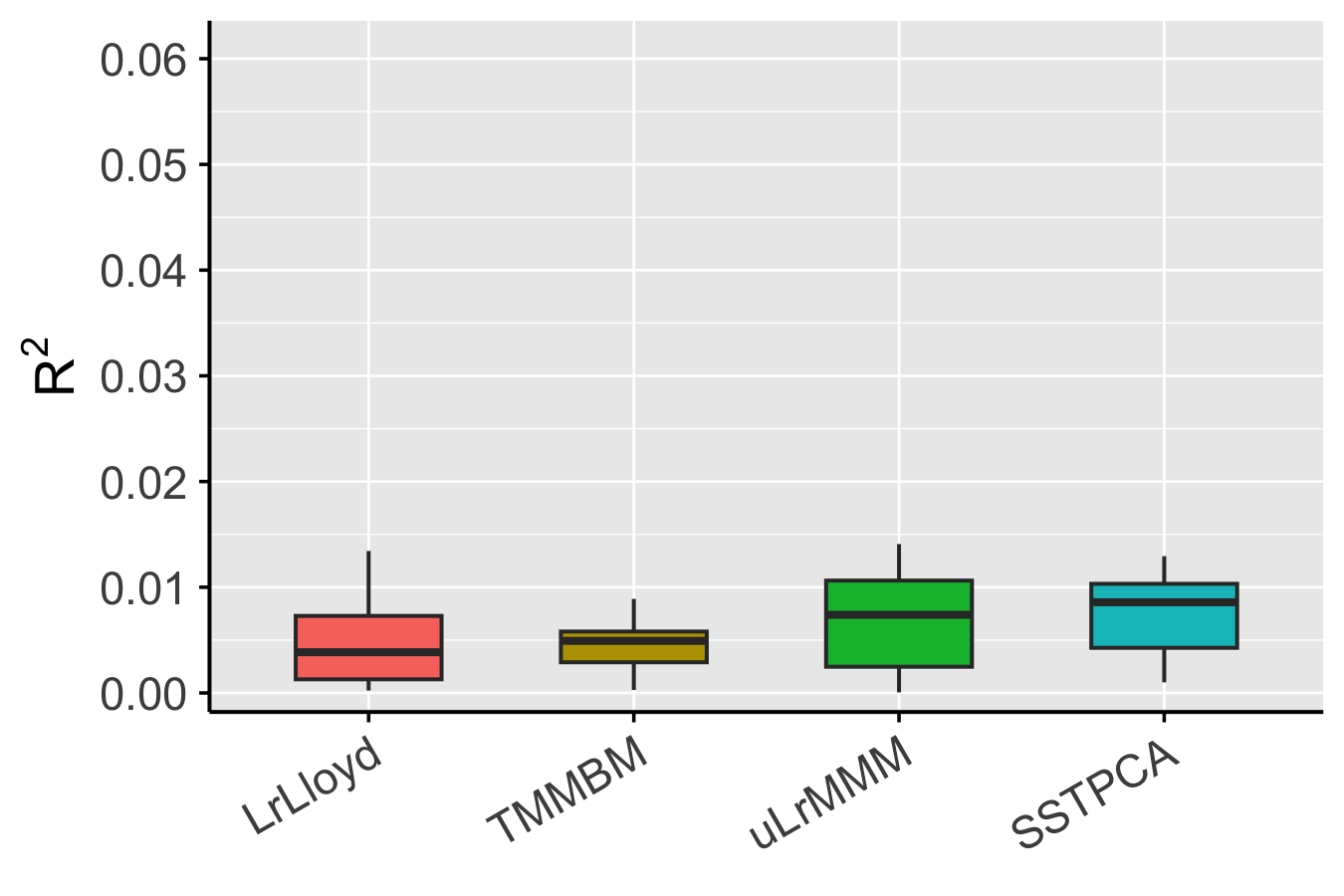}
      \subcaption{Unimodal Methods}
    \end{subfigure}%
    \hspace{0.03\textwidth}
    \begin{subfigure}[b]{0.32\textwidth}
      \centering
      \includegraphics[height=6.0cm]{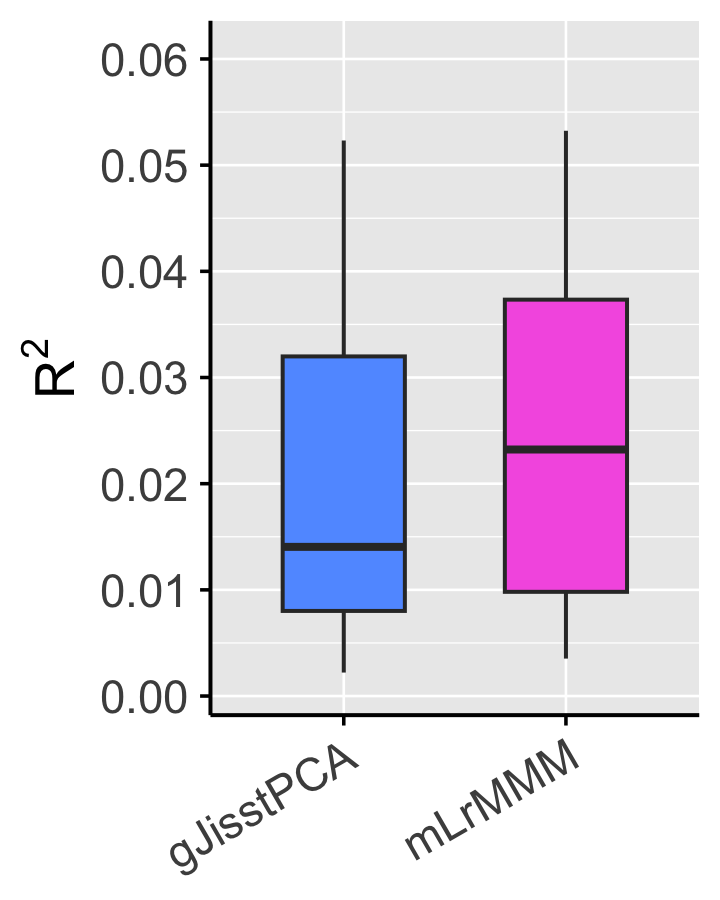}
      \subcaption{Multimodal Methods}
    \end{subfigure}%
    }

    \caption{$R^2$ values of $W=24$ phenotypic covariates regressed on embedding}
    \label{fig:RSquaredSeparate}
\end{figure}

We first assess the validation of a method's fit in terms of the relationship of its estimated subject-level latent embeddings with the auxiliary cognitive covariates. We measure the strength of this relationship in two ways: $R^2$ plots from OLS regressing covariate on embedding as pictured in Figure \ref{fig:RSquaredSeparate}, and prediction results on unseen data obtained by using these unsupervised methods as part of a supervised learning pipeline, with these results presented in Figure \ref{fig:PredictionError}. Considering the $R^2$ plots first, we observe that the multimodal methods have substantially higher $R^2$ values, illustrating the benefit of multimodality in estimating validated latent structures. We also observe that, among multimodal methods, mLrMMM achieves the highest $R^2$ values, and among unimodal methods, uLrMMM's $R^2$ values are among the highest. These results suggest our method is competitive at estimating validated latent structures. 

To obtain prediction results, we start by randomly dividing the dataset into 10 folds. For each fold, we regard that fold as the ``test set" and the remaining folds as the ``training set." We apply the method to the training set, fit an OLS model regressing covariate on embedding using the training set, project the observations in the test set onto the membership/score space of the training set, and calculate the squared error of the OLS model for each observation in the test set (details in Section~S.5 of the Supplement). When analyzing results, we account for the difference in scale between covariates by comparing the methods in terms of \textit{relative mean squared error} for each covariate, by which we refer to the mean squared error of one method divided by the mean squared error of a baseline method, which we choose to be TMMBM, along with bootstrap confidence intervals. Based on Figure \ref{fig:PredictionError}, the multimodal methods, and in particular mLrMMM, generally achieve the lowest relative MSE, while performance of unimodal methods is generally comparable. These results further indicate the competitiveness of our methods in estimating validated latent structures. However, our method's relevance for scientific discovery lies not solely in its attainment of validated embeddings, but rather in its simultaneous attainment of validated embeddings alongside interpretable basis matrices representing population-level patterns. Our method clearly outperforms other methods in this regard. 

\begin{figure}
    \centering
    \includegraphics[width=1.0\linewidth]{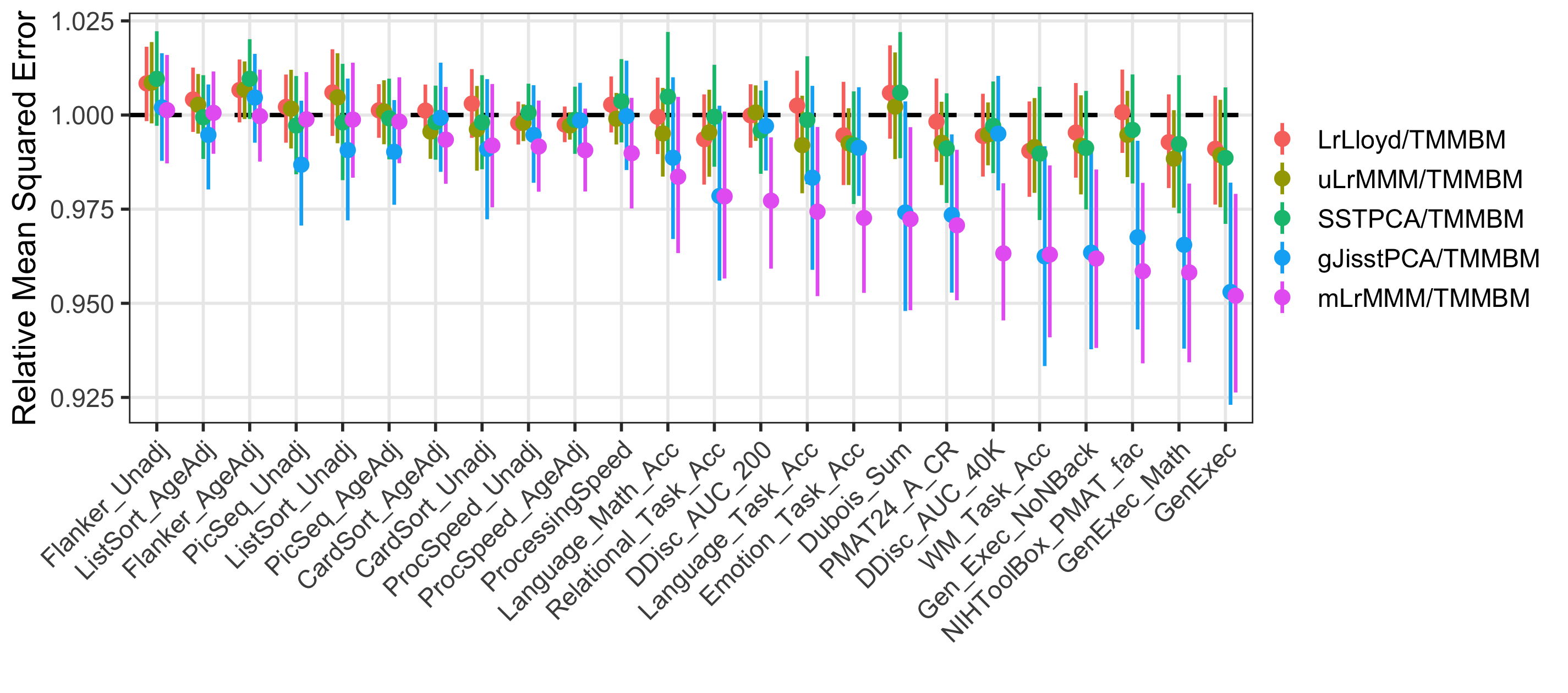}
    \caption{Relative prediction error of OLS model on unseen data}
    \label{fig:PredictionError}
\end{figure}

In Figure \ref{fig:BasisComparison}, we compare the methods in terms of the interpretability of the estimated matrix-valued patterns. Alongside the rows and columns of each basis matrix, we attach color bars that represent different functional units of the brain as identified by domain literature \citep{Power11}. As a first remark, we note that the structures identified by all methods appear to align with the functional units from domain literature. However, the second and third factor loading matrices identified by gJisstPCA lack structure and appear ``ghostly." This is an artifact of the method's goal of additive reconstruction of the dataset. The first loading matrix identifies meaningful population-level structure, which aligns with the structures identified by the other methods, while the second and third loading matrices fit to residual structure which lacks interpretability. In contrast, the emphasis of mLrMMM, LrLloyd, and TMMBM on identifying population-level patterns facilitates meaningful comparisons between each method's estimated basis matrices. For instance, in the mLrMMM results, we observe that there is higher functional connectivity in the somatomotor, cingular-opercular, and auditory regions in the second basis matrix than in the first and third. If we find that high membership in the second extreme profile is significantly associated with high levels of some cognitive covariate, then we can say that high functional connectivity in these regions is associated with high levels of that cognitive feature. 

The LrLloyd and TMMBM algorithms also appear to identify interpretable population-level structure. Moreover, the basis matrices estimated by TMMBM are highly structured. This is a useful consequence of TMMBM's assumption of shared node-wise membership structure across basis matrices, as noted in Section \ref{sec:Model-Intro}. This assumption is reasonable for this dataset since ROIs belonging to the same functional unit of the brain tend to share behavior. Nevertheless, mLrMMM is more useful than these other methods for scientific discovery since, as discussed previously, mLrMMM is much more strongly validated than either of these methods in terms of its embedding's relationship to auxiliary phenotypic covariates, as shown by Figures \ref{fig:RSquaredSeparate} and \ref{fig:PredictionError}. This difference is likely due to the multimodality of the method, and it plays a critical role in interpretation of the estimated latent structures, since it can mean the difference between identifying or not identifying a significant relationship between a population-level pattern and a trend in the auxiliary covariates. Other advantages of our model include its flexibility and generality: relative to TMMBM, our method also accommodates basis matrices that lack shared node-wise membership structure, and in cases where such shared structure is present, it facilitates observation of differential behavior within each community of nodes. And relative to LrLloyd, mLrMMM estimates a continuous embedding, thus storing more information per parameter for downstream regression; furthermore, it estimates population extremes rather than cluster means, isolating patterns in their clearest form.
Overall, the advantage of mLrMMM is as follows: of all methods considered, only mLrMMM obtains both highly validated continuous embeddings and interpretable population-level patterns in their clearest form. The attainment of both features is the key ingredient for scientific discovery, as demonstrated in the next section.

\begin{figure}[htbp]
    \centering
    \begin{subfigure}[c]{0.72\textwidth}
      \centering
      \includegraphics[width=\linewidth]{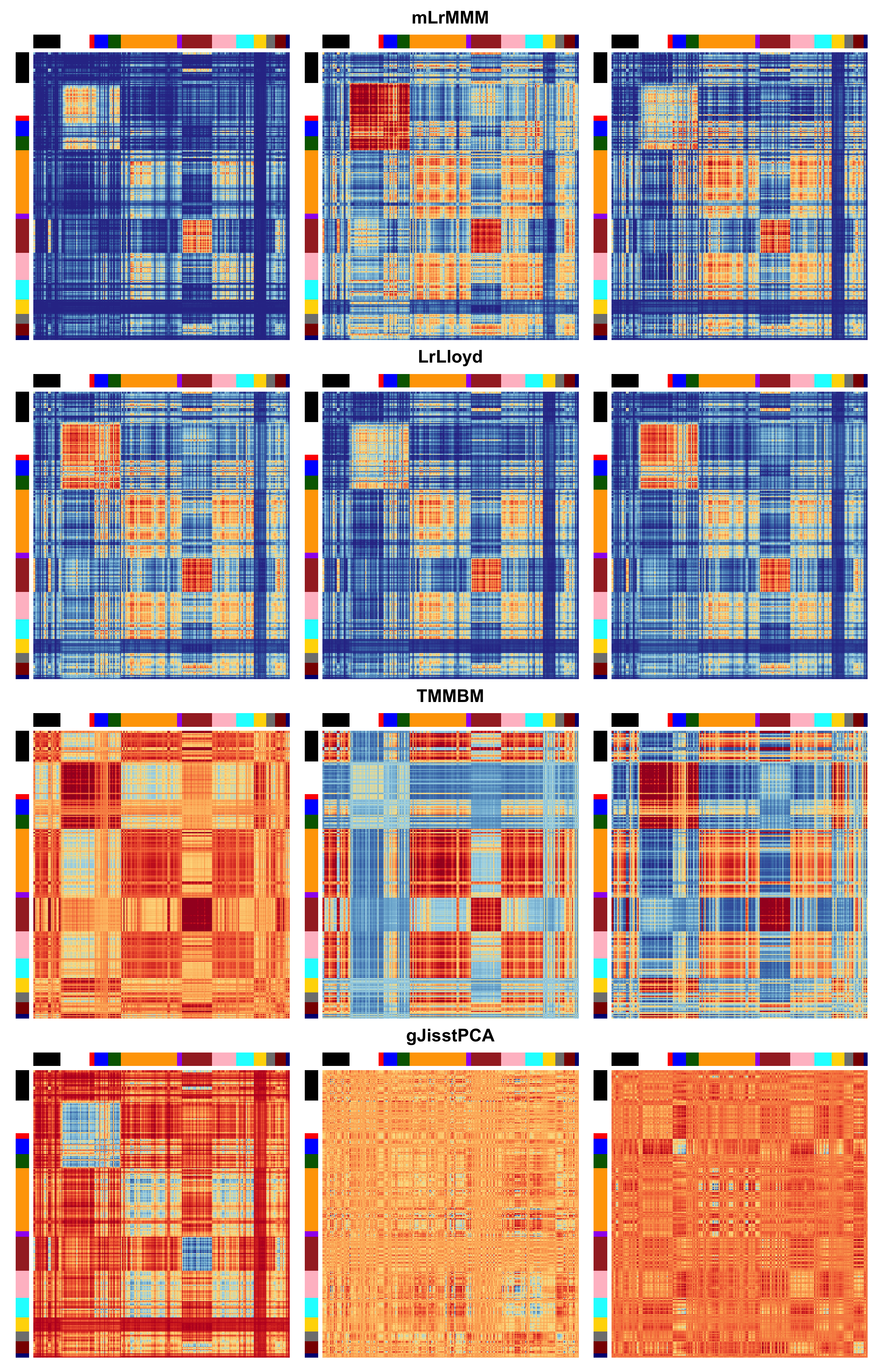}
    \end{subfigure}%
    \begin{subfigure}[c]{0.14\textwidth}
      \centering
      \includegraphics[width=\linewidth]{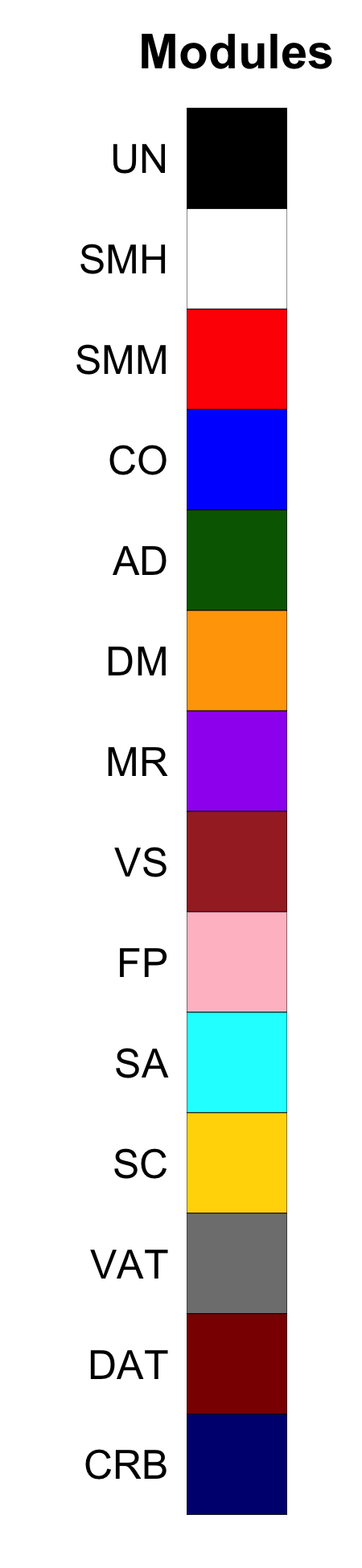}
    \end{subfigure}
    
    \caption{HCP: Prototypical matrix-valued patterns in first modality estimated by mLrMMM (ours), LrLloyd, and gJisstPCA. Here, three basis matrices are pictured from gJisstPCA since they are conveniently obtained; fitting gJisstPCA with $K=2$ yields the first two matrices in that row. Warmer color indicates higher estimated value. Results are plotted alongside functional units from domain literature, classified as follows: UN=Uncertain, SMH=Sensory/Somatomotor Hand, SMM=Sensor/Somatomotor Mouth, CO=Cingulo-opercular Task Control, AD=Auditory, DM=Default Mode, MR=Memory Retrieval, VS=Visual, FP=Fronto-parietal Task control, SA=Salience, SC=Subcortical, VAT=Ventral Attention, DAT=Dorsal Attention, CRB=Cerebellar.}
    \label{fig:BasisComparison}
\end{figure}

\begin{figure}[htbp]
    \centering
    \includegraphics[width=0.9\linewidth]{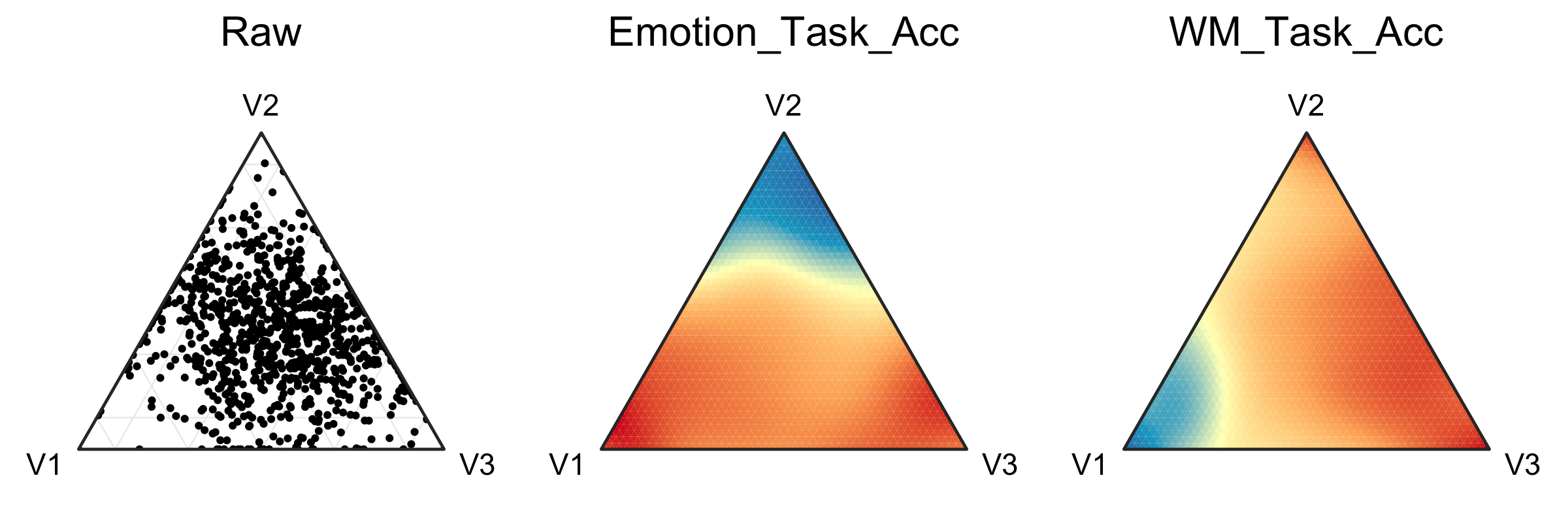}
    \caption{Membership scores of HCP participants with kernel regression results. Warmer color indicates higher value.}
    \label{fig:ternary-three}
\end{figure}

\subsection{Findings of Scientific Interest}\label{sec:findings}

Among the three prototypical connectivity profiles estimated by LrMMM (pictured in the first row of Figure \ref{fig:BasisComparison}), profile 1 represents a baseline connectivity organization with relatively weaker between-network connectivity. Profile 3 exhibits stronger interactions among the default mode, frontoparietal, attention, and cerebellar networks, and profile 2 is characterized by increased connectivity within these networks and additionally within the sensorimotor, cingulo-opercular, and auditory networks. Because each participant is represented by a convex combination of these profiles, the estimated memberships quantify the extent to which an individual’s brain organization resembles each connectivity archetype. 

We now uncover the relationship between these brain architectures and cognition by regressing each available auxiliary covariate on the estimated membership scores. We first visualize these regression results using Nadaraya-Watson kernel regression with Gaussian kernel, and also perform ordinary least squares with FDR-corrected p-values to assess the significance of these relationships. Figure \ref{fig:ternary-three} depicts the raw membership scores alongside kernel regression results for two covariates representing performance on emotional processing and working memory tests. We observe that higher membership in the second extreme profile is associated with worse performance on the emotional processing task, and OLS confirms the significance of this relationship (FDR $p<0.001$). Hence, we conclude that higher functional connectivity between the sensorimotor, cingulo-opercular, and auditory networks is significantly associated with worse emotional processing skills.

Next, we observe that higher membership in the second and third extreme profile is associated with improved performance on the working memory task, and OLS confirms the significance of these relationships (FDR $p=0.001$, FDR $p<0.001$ respectively). Hence, we associate higher functional connectivity between the default, fronto-parietal, salience, ventral attention, dorsal attention, and cerebellar modes with higher working memory skills. The same pattern identified for working memory is also present at significant levels for measurements of mathematical skills, language skills, and general executive functioning. Hence, we associate higher levels of functional connectivity between these regions with higher cognitive skills in a general sense. Complete regression results and additional details of our analysis are in Section~S.5 of the Supplement.

\section{Discussion}

In this work, we have proposed a methodology that facilitates discovery of meaningful latent structures in matrix-valued data. The advantages of our method lie in its ability to simultaneously extract population-level latent patterns alongside a highly interpretable continuous embedding, a recipe for productive scientific interpretation of estimated latent structure. We have also rigorously studied our methodology in terms of identifiability and estimation rates, quantifying the improvement in estimation obtained from our model's low-rank structure relevant to the closest alternative mixed membership model.

Our current theory is centered on the two-step estimator, which already captures the main statistical difficulty of the problem. Extending the analysis to the alternating refinement in Algorithm~\ref{alg:Algorithm} appears to be substantially more delicate, as it would require a one-step contraction bound for the iterative map. This is harder in the mixed-membership setting than in discrete clustering in \cite{LX25} since the membership update is continuous rather than a hard assignment, and small perturbations in the current basis estimates can still change the updated memberships and hence propagate across iterations. Another related piece is the algorithm in \cite{AZ25}, which is more amenable to a direct perturbation analysis. However, its guarantee is based on HOOI followed by simplex recovery rather than on controlling the error in an iterative refinement with simplex constraint. A full analysis of our current Algorithm~\ref{alg:Algorithm} is therefore left for future work.

\spacingset{1}
\paragraph{Supplementary Material.}
The Supplementary Material contains all proofs of the theoretical results and additional numerical results.

\vspace{-3mm}

\paragraph{Data Availability Statement.} The Human Connectome Project data analyzed in this study are available through the HCP Young Adult ConnectomeDB, subject to registration and acceptance of the HCP Data Use Terms. Simulation data are generated from the models described in the article and Supplementary Material.

\vspace{-3mm}

\paragraph{Acknowledgements.}
Data were provided by the Human Connectome Project, WU-Minn Consortium (Principal Investigators: David Van Essen and Kamil Ugurbil; 1U54MH091657) funded by the 16 NIH Institutes and Centers that support the NIH Blueprint for Neuroscience Research; and by the McDonnell Center for Systems Neuroscience at Washington University. The authors acknowledge Yuhan Geng for support in preprocessing the data.

\setlength{\bibsep}{9pt}
\bibliographystyle{apalike}
\bibliography{ref.bib}

\clearpage
\input{supplement}

\end{document}

%% file: supplement.tex
\renewcommand{\thesection}{S.\arabic{section}}
\renewcommand{\thesubsection}{S.\arabic{section}.\arabic{subsection}}
\renewcommand{\thesubsubsection}{S.\arabic{section}.\arabic{subsection}.\arabic{subsubsection}}
\renewcommand{\theequation}{S.\arabic{equation}}
\renewcommand{\thefigure}{S.\arabic{figure}}
\renewcommand{\thetable}{S.\arabic{table}}
\renewcommand{\thealgorithm}{S.\arabic{algorithm}}
\renewcommand{\thetheorem}{S.\arabic{theorem}}
\renewcommand{\thelemma}{S.\arabic{lemma}}
\renewcommand{\theproposition}{S.\arabic{proposition}}
\renewcommand{\thecorollary}{S.\arabic{corollary}}
\renewcommand{\thedefinition}{S.\arabic{definition}}
\renewcommand{\theremark}{S.\arabic{remark}}
\renewcommand{\theassumption}{S.\arabic{assumption}}

\setcounter{section}{-1}
\setcounter{subsection}{0}
\setcounter{subsubsection}{0}
\setcounter{equation}{0}
\setcounter{figure}{0}
\setcounter{table}{0}
\setcounter{algorithm}{0}
\setcounter{theorem}{0}
\setcounter{lemma}{0}
\setcounter{proposition}{0}
\setcounter{corollary}{0}
\setcounter{definition}{0}
\setcounter{remark}{0}
\setcounter{assumption}{0}

%%%%%%%%%%%%%%%%%%%%%%%%%%%%%%%%

\clearpage
\spacingset{1}

\begin{center}
{\Large\bfseries
Supplement to ``Mixed Membership Model of Low-rank Matrices with Multimodal Extension''}
\end{center}

\vspace{1em}

\spacingset{1.7}
\section{Overview}

This Supplement contains proofs of theoretical results alongside supplementary numerical results and additional details for the content in ``Mixed Membership Model of Low-rank Matrices with Multimodal Extension". Section \ref{sec:ProofsMainResults} contains proofs of the main theorems, Section \ref{sec:ProofsLemmas} contains proofs of lemmas, Section \ref{sec:AdditionalTheory} contains additional theoretical and algorithmic details, Section \ref{sec:AdditionalSimulations} contains additional simulations and details regarding simulations from the main text, and Section \ref{sec:AdditionalData} contains additional details about our real data analysis. 

\section{Proofs of Main Results}\label{sec:ProofsMainResults}
\subsection[Proof of main reconstruction theorem]{Proof of \Cref{thm:main}}\label{pf-thm:main}
Note that $\PP(\hat{S}=S^\star)\ge 1-O((n\vee p)^{-10})$ holds by \Cref{prop:est-Shat}. It suffices for us to proceed with our analysis on the event $\{ \hat{S} = S^\star \}$.
By global optimality of $(\hat\bPi,\hat\bTheta)$ and feasibility of
$(\bPi^\star,\bTheta^\star)$, we have
\begin{align*}
	\bfro{\Y-\hat\W}^2 \le \bfro{\Y-\W^\star}^2,
\end{align*}
where $\hat\W := \hat\bPi \hat\bTheta$ and $\W^\star := \bPi^\star \bTheta^\star$.
Expanding with $\Y=\W^\star+\Z$ yields 
\begin{equation}\label{eq:basic-ineq-thm1}
\fro{\bDelta}^2 \le 2\langle \Z,\bDelta\rangle,
\end{equation}
where $\bDelta := \hat\W - \W^\star$.
Define the parameter space for $\W$
$$\mathcal{W}_S:=\{\bPi\bTheta^\top:\ \bPi\in\Delta_K^n,\ \bPi_{S,:}=\I_K,\ \mathrm{rank}(\mathrm{mat}_{d_1,d_2}(\bTheta_{\cdot,k}))\le r_k,\ \forall k\in[K]\},$$
and  simply write $\calW\equiv \calW_{S^\star}$. By definition, we have  $\W^\star\in\mathcal{W}$ and $\hat\W\in\mathcal{W}$. For any fixed $\zeta>0$, define $R(\zeta):=\fro{\bTheta^\star}+\zeta$ and
\begin{align*}
	\calQ(\zeta)&:=\ebrac{\bTheta\in\RR^{ d_1d_2 \times K}:\ \fro{\bTheta}\le R(\zeta),\ \rank(\mathrm{mat}_{d_1,d_2}(\bTheta_{\cdot,k}))\le r_k,\ \forall k\in[K]}\\
	\calM_k(\zeta)&:=\ebrac{\M\in\RR^{d_1\times d_2}:\rank(\M)\le r_k,~ \fro{\M}\le R(\zeta)}.
\end{align*}
Define the local parameter space for  $\W$:
\begin{align*}
	\calW(\zeta) := \Big\{  \bPi\bTheta^\top:\ \bPi\in\Delta_K^n,\ \bPi_{S,:}=\I_K,\ \bTheta\in\calQ(\zeta) \Big\}.
\end{align*}
and the localized difference class
\begin{equation}\label{eq:Tzeta-def}
\mathcal{T}(\zeta) := \Big\{ \W-\W^\star:\ \W\in\calW(\zeta),\ \fro{\W-\W^\star}\le \zeta \Big\}.
\end{equation}
\begin{lemma}
	There exists a universal constant $C>0$ such that for any $\epsilon\in(0,1)$ and $\zeta>0$, 
	\begin{align*}
		\log N\brac{\epsilon, \calT(\zeta),\fro{\cdot}}&\le \sqbrac{nK+\sum_{k=1}^Kr_k(d_1+d_2)}\log\frac{CR(\zeta)\sqrt{nK}}{\epsilon}.
	\end{align*}
\end{lemma}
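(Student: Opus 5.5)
We bound the covering number of $\calT(\zeta)$ by that of $\calW(\zeta)$, since translation by $\W^\star$ preserves Frobenius distances and $\calT(\zeta)\subset \calW(\zeta)-\W^\star$. So it suffices to construct an $\epsilon$-net of $\calW(\zeta)=\{\bPi\bTheta^\top\}$ by covering the two factors separately and then controlling the bilinear map $(\bPi,\bTheta)\mapsto\bPi\bTheta^\top$.

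The key bound is Lipschitz control of the product. For $\bPi,\bPi'$ with rows in $\Delta_K$ we have $\op{\bPi}\le\fro{\bPi}\le\sqrt{n}$, and for $\bTheta'\in\calQ(\zeta)$ we have $\fro{\bTheta'}\le R(\zeta)$. Hence
\begin{align*}
\fro{\bPi\bTheta^\top-\bPi'\bTheta'^\top}\le \op{\bPi}\,\fro{\bTheta-\bTheta'}+\fro{\bPi-\bPi'}\,\op{\bTheta'}\le \sqrt{n}\,\fro{\bTheta-\bTheta'}+R(\zeta)\fro{\bPi-\bPi'}.
\end{align*}
It therefore suffices to take an $\epsilon/(2\sqrt n)$-net of $\calQ(\zeta)$ and an $\epsilon/(2R(\zeta))$-net of the membership set.

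For the membership factor, the rows indexed by $S^\star$ are fixed, and every other row lies in $\Delta_K\subset$ the unit $\ell_2$ ball of $\RR^K$. A volumetric argument covers $\{\bPi\}$, viewed inside a Frobenius ball of radius $\sqrt n$ in dimension at most $nK$, with at most $(C\sqrt{n}R(\zeta)/\epsilon)^{nK}$ points. The net points can be taken inside the set (up to a factor 2 in the radius), so the Lipschitz bound still applies.

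For the basis factor, we cover column by column. Each column $\mathrm{mat}(\bTheta_{\cdot,k})$ lies in $\calM_k(\zeta)$, the set of $d_1\times d_2$ matrices of rank at most $r_k$ with Frobenius norm at most $R(\zeta)$. We use the standard low-rank covering bound (Candès–Plan): $N(\delta,\calM_k(\zeta),\fro{\cdot})\le (9R(\zeta)/\delta)^{r_k(d_1+d_2+1)}$. It is proved by writing $\M=\U\bSigma\V^\top$ and netting $\U$ and $\V$ in $\|\cdot\|_{1,2}$ and $\bSigma$ on a Frobenius ball. Taking $\delta=\epsilon/(2\sqrt{nK})$ for each of the $K$ columns gives a product net with $\fro{\bTheta-\bTheta'}\le\epsilon/(2\sqrt n)$, and its log-cardinality is at most $\sum_k r_k(d_1+d_2+1)\log(CR(\zeta)\sqrt{nK}/\epsilon)$. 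The extra $+1$ in the exponent is absorbed: $\sum_k r_k\le \sum_k r_k(d_1+d_2)/2$, and the $+1$ contributions can also be folded into the $nK$ term.

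Combining the two nets gives the claimed bound with $\log(CR(\zeta)\sqrt{nK}/\epsilon)$. Since the stated bound needs only an upper estimate, the membership log term $\log(C\sqrt n R/\epsilon)$ is dominated by it. Finally, passing from a net of $\calW(\zeta)$ to an internal net of $\calT(\zeta)$ at most doubles $\epsilon$, which is absorbed into $C$.

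The main obstacle is the low-rank covering step: we need the Candès–Plan argument with a radius $R(\zeta)$ rather than 1, and we must make sure the constants stay universal. Beyond that, the only real work is bookkeeping the Lipschitz constants and confirming that $\epsilon<1$ and $R(\zeta)\sqrt{nK}\ge\epsilon$ keep all logarithms positive. If needed, we can assume WLOG that $R(\zeta)\sqrt{nK}/\epsilon\ge 1$ by enlarging $C$.
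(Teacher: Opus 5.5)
Your outline matches the paper's proof. You pass from $\calT(\zeta)$ to $\calW(\zeta)$, control $(\bPi,\bTheta)\mapsto\bPi\bTheta^\top$ by the bilinear estimate with constants $R(\zeta)$ and $\sqrt n$, and combine an $\epsilon/(2R(\zeta))$-net of the memberships with an $\epsilon/(2\sqrt n)$-net of $\calQ(\zeta)$ built columnwise at scale $\epsilon/(2\sqrt{nK})$. Your membership count $nK\log(C\sqrt n R(\zeta)/\epsilon)$ is already dominated by the target.

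The one step that fails as written is discarding the $+1$ in the Cand\`es--Plan exponent $r_k(d_1+d_2+1)$. The inequality $\sum_k r_k\le\frac12\sum_k r_k(d_1+d_2)$ only changes the coefficient in front of $\sum_k r_k(d_1+d_2)\log(CR(\zeta)\sqrt{nK}/\epsilon)$ from $1$ to $3/2$. A multiplicative factor in front of the logarithm of the unbounded ratio $R(\zeta)\sqrt{nK}/\epsilon$ cannot be moved inside it as a universal constant. Folding the $+1$ into the $nK$ term would require $\sum_k r_k$ to be dominated by slack in the membership count, which is not assumed; with your count of exactly $nK$ there is no slack at all.

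The repair is a low-rank cover whose exponent is exactly $r_k(d_1+d_2)$, obtained by merging the singular values into one factor. Write $\M=\U\B^\top$ with $\U\in\OO_{d_1,r_k}$ and $\B=\V\bSigma$, so that $\fro{\B}=\fro{\M}\le R(\zeta)$. Two facts make this work:
\begin{itemize}
\item $\fro{(\U-\bar{\U})\bSigma\V^\top}=\fro{(\U-\bar{\U})\bSigma}\le\max_j\|\U_{:,j}-\bar{\U}_{:,j}\|_2\,\fro{\bSigma}$;
\item $\op{\bar{\U}}=1$ for $\bar{\U}\in\OO_{d_1,r_k}$.
\end{itemize}
So take an internal $\delta/(2R(\zeta))$-net of $\OO_{d_1,r_k}$ in the maximum-column-norm metric, together with an internal $\delta/2$-net of the Frobenius ball of radius $R(\zeta)$ in $\RR^{d_2\times r_k}$. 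This gives $\log N(\delta,\calM_k(\zeta),\fro{\cdot})\le r_k(d_1+d_2)\log(CR(\zeta)/\delta)$, and every net point $\bar{\U}\bar{\B}^\top$ lies in $\calM_k(\zeta)$. With this in place of Cand\`es--Plan, the rest of your argument gives the stated bound.

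The paper simply invokes a standard low-rank covering bound at this point, and downstream only $D$ up to constants is used. So this is a bookkeeping repair, not a change of strategy.
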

\begin{proof}

Note that for any $\W=\bPi\bTheta^\top\in\calW$ with $\fro{\W-\W^\star}\le \zeta$,
\begin{align}
        \bfro{\bTheta - \bTheta^\star } \leq  \bfro{\bPi\bTheta^\top - \bPi^\star\bTheta^{\star\top} }\le \zeta, 
\end{align}
due to ${\bPi}_{S^\star,:} = \bPi^\star_{S^\star,:} = \I_K$. We thus arrive at $\fro{\bTheta}\le R(\zeta)$.  
By standard covering number bound for low-rank matrices (e.g.,\cite{zhang2018tensor}), 
\begin{align*}
	\log N\brac{\epsilon, \calQ(\zeta),\fro{\cdot}}\le \sum_{k=1}^K\log N\brac{\frac{\epsilon}{\sqrt{K}}, \calM_k(\zeta),\fro{\cdot}}\le \sum_{k=1}^Kr_k(d_1+d_2)\log\frac{CR(\zeta)\sqrt{K}}{\epsilon}.
\end{align*}
Note that for any $\W=\bPi\bTheta^\top\in\calW $, $\W^\prime=\bPi^\prime\bTheta^{\prime\top}\in\calW$,
\begin{align*}
	\bfro{\bPi\bTheta^\top-\bPi^\prime\bTheta^{\prime\top}}&\le 	\bfro{\bPi-\bPi^\prime}\fro{\bTheta}+\bfro{\bTheta-\bTheta^\prime}\fro{\bPi^\prime}\\
	&\le R(\zeta)\bfro{\bPi-\bPi^\prime}+\sqrt{n}\bfro{\bTheta-\bTheta^\prime}.
\end{align*}
Let $\calN_\Pi$ be an $\epsilon_\Pi$-net for $(\Delta_K^n, \fro{\cdot})$ and $\calN_\Theta$ be an $\epsilon_\Theta$-net for $(\calQ(\zeta), \fro{\cdot})$. Define 
\begin{align*}
	\calN_W:=\ebrac{\tilde\bPi\tilde\bTheta^\top:\tilde\bPi\in\calN_\Pi,\ \tilde\bTheta\in\calN_\Theta}.
\end{align*}
We claim that $\calN_W$ is a $\epsilon_W$-net for $\calW(\zeta)$ with $\epsilon_W: = R(\zeta)\epsilon_\Pi  + \sqrt{n}\epsilon_\Theta$. To see this, for any $\W=\bPi\bTheta^\top\in\calW(\zeta)$, there exist $\tilde\bPi\in\calN_\Pi$ and $\tilde\bTheta\in\calN_\Theta$ such that $\bfro{\bPi-\tilde\bPi}\le \epsilon_\Pi$ and $\bfro{\bTheta-\tilde\bTheta}\le \epsilon_\Theta$. Then, $\tilde\W:=\tilde\bPi\tilde\bTheta^\top\in\calN_W$ satisfies
\begin{align*}
	\bfro{\W-\tilde\W}\le \bfro{\bPi-\tilde\bPi}\fro{\bTheta} + \bfro{\bTheta-\tilde\bTheta}\bfro{\tilde\bPi}\le  R(\zeta) \epsilon_\Pi+ \sqrt{n}\epsilon_\Theta=\epsilon_W.
\end{align*}
Moreover, we have $\ab{\calN_W}\le \ab{\calN_\Pi}\ab{\calN_\Theta}$ by definition. Thus, we can choose 
\begin{align*}
	\epsilon_{\Pi}=\frac{\epsilon}{2R(\zeta)},\qquad \epsilon_{\Theta}=\frac{\epsilon}{2\sqrt{n}},
\end{align*}
which yields the following bound:
\begin{align*}
	\log N\brac{\epsilon, \calW(\zeta),\fro{\cdot}}&\le \log N\brac{\frac{\epsilon}{2R(\zeta)}, \Delta_K^n,\fro{\cdot}}+\log N\brac{\frac{\epsilon}{2\sqrt{n}}, \calQ(\zeta),\fro{\cdot}}\\
	&\le nK\log\frac{CR(\zeta)}{\epsilon} +\sum_{k=1}^Kr_k(d_1+d_2)\log\frac{CR(\zeta)\sqrt{nK}}{\epsilon}\\
	&\le \sqbrac{nK+\sum_{k=1}^Kr_k(d_1+d_2)}\log\frac{CR(\zeta)\sqrt{nK}}{\epsilon}.
\end{align*}
The proof is completed by noticing that
\begin{align*}
	N\brac{\epsilon, \calT(\zeta),\fro{\cdot}}&=N\brac{\epsilon, \ebrac{ \W:\ \W\in\calW(\zeta),\ \fro{\W-\W^\star}\le \zeta },\fro{\cdot}}\\
	&\le N\brac{\epsilon, \calW(\zeta),\fro{\cdot}}.
\end{align*}
\end{proof}
We will use the following localized empirical-process bound. 
\begin{lemma}[Localized empirical process bound]\label{lem:local-ep}
Let $Z$ have i.i.d.  mean-zero sub-Gaussian entries with
$\|Z_{ij}\|_{\psi_2}\le\sigma$ . Define $\mathcal{T}(\zeta)$ as in~\eqref{eq:Tzeta-def} and set $D:=nK+\sum_{k=1}^Kr_k(d_1+d_2)$.
Then there exist universal constants $c,C>0$ such that for any $t\ge 0$, with probability at least
$1-e^{-t}$, for every $\zeta>0$,
\begin{equation}\label{eq:local-ep}
\sup_{\H\in \mathcal{T}(\zeta)} \langle \Z,\H\rangle \le
C\sigma\zeta\brac{\sqrt{D\log\frac{nR(\zeta)}{\zeta }}+ \sqrt{t}}.
\end{equation}
\end{lemma}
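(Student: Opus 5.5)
The plan is to fix $\zeta$ first, bound the supremum by a chaining or one-step discretization argument using the covering-number lemma just proved, and then make the bound uniform over all $\zeta>0$ by a peeling argument.

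\textbf{Step 1 (fixed $\zeta$).} Take an $\epsilon$-net $\calN_\zeta$ of $\calT(\zeta)$ in Frobenius norm with $\epsilon=\zeta/2$. For each $\H\in\calT(\zeta)$ let $\pi(\H)\in\calN_\zeta$ satisfy $\fro{\H-\pi(\H)}\le\zeta/2$. The difference $\H-\pi(\H)$ is not itself in $\calT(\zeta)$, since the class is not a cone. So instead of the usual self-bounding trick, I would run Dudley's entropy integral directly. The process $\H\mapsto\langle\Z,\H\rangle$ has sub-Gaussian increments,
\[
\|\langle \Z,\H-\H'\rangle\|_{\psi_2}\lesssim\sigma\fro{\H-\H'},
\]
because the entries of $\Z$ are independent with $\psi_2$-norm at most $\sigma$. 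The diameter of $\calT(\zeta)$ is at most $2\zeta$. By the previous lemma,
\[
\log N(\epsilon,\calT(\zeta),\fro{\cdot})\le D\log\bigl(CR(\zeta)\sqrt{nK}/\epsilon\bigr).
\]
Dudley's tail bound then gives, with probability at least $1-e^{-u}$,
\[
\sup_{\calT(\zeta)}\langle\Z,\H\rangle\lesssim\sigma\Bigl(\int_0^{2\zeta}\sqrt{D\log\tfrac{CR\sqrt{nK}}{\epsilon}}\,d\epsilon+\zeta\sqrt{u}\Bigr).
\]
The integral evaluates to $\zeta\sqrt{D\log(CR\sqrt{nK}/\zeta)}$ up to constants; substitute $\epsilon=\zeta v$ and use $\int_0^2\sqrt{a+\log(1/v)}\,dv\lesssim\sqrt a+1$. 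Finally, absorb $\sqrt{nK}$ into $n$ inside the logarithm: $K\le n$ in the regime of interest, so $\sqrt{nK}\le n$, and the constant $C$ can be moved outside the square root.

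\textbf{Step 2 (uniformity over $\zeta$).} This is the main obstacle, because the probability statement must hold simultaneously for every $\zeta>0$. I would peel over the geometric grid $\zeta_j=2^j\zeta_0$ for $j\in\mathbb{Z}$ and apply Step 1 at $\zeta_j$ with $u=t+c\,(|j|+1)$. A union bound then gives total failure probability at most $e^{-t}\sum_j e^{-c(|j|+1)}\le e^{-t}$ after adjusting constants.

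For an arbitrary $\zeta\in(\zeta_{j-1},\zeta_j]$, monotonicity gives $\calT(\zeta)\subset\calT(\zeta_j)$, since $R$ is increasing and the localization radius grows. We also have $\zeta_j\le2\zeta$ and $R(\zeta_j)\le R(2\zeta)\le2R(\zeta)$. The logarithm changes only by a constant factor, because $nR(\zeta)/\zeta\ge n\ge1$ keeps it bounded below.

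The extra $\sqrt{|j|}$ penalty must also be absorbed into $\sqrt{D\log(nR(\zeta)/\zeta)}$. When $\zeta$ is large, $R(\zeta)/\zeta\ge1$ and $|j|\sim\log(\zeta/\zeta_0)$. When $\zeta$ is small, $\log(R/\zeta)\gtrsim\log(1/\zeta)$.

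I expect the careful choice of $\zeta_0$ to be the delicate point, for example $\zeta_0=\fro{\bTheta^\star}\vee1$. The penalty needs to be dominated by $D\log(nR(\zeta)/\zeta)$ in both tails. For very large $\zeta$, an alternative is to note that $\calT(\zeta)\subset\{\fro{\H}\le\zeta\}$, and then use the crude bound via the full class, which also has entropy at most $D\log(\cdot)$. If the peeling penalty cannot be fully absorbed, I would accept a weakened form in which $\sqrt{t}$ is replaced by $\sqrt{t+\log\log}$ terms. That weaker form still suffices for the application in the proof of Theorem~\ref{thm:main}, where the lemma is used at a single data-dependent $\zeta=\fro{\bDelta}$ combined with \eqref{eq:basic-ineq-thm1}.
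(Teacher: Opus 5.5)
Your Step 1 is essentially the paper's whole proof. The paper treats $X_\H=\langle\Z,\H\rangle$ as a sub-Gaussian process with increments bounded by $\sigma\fro{\H_1-\H_2}$, applies Dirksen's chaining tail bound with $\mathrm{diam}(\calT(\zeta))\le 2\zeta$, and evaluates the entropy integral using the covering lemma. It stops there. That argument proves the bound only for each fixed $\zeta$; the ``for every $\zeta>0$'' quantifier is never established inside the lemma. Uniformity is handled later, in the proof of Theorem~\ref{thm:main}. There the lemma is applied on the grid $\zeta_\ell=2^\ell\zeta_0$ with tail parameter $t_0+\ell$ and a union bound. The penalty $\sqrt{t_0+\ell}$ is kept explicit and is beaten by the factor $2^\ell$ in \eqref{eq:contrad}. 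Your closing remark, that a peeled or weakened form suffices for the application, is therefore exactly how the paper proceeds.

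As a proof of the literal uniform statement, your Step 2 has a concrete hole at large $\zeta$. Since $R(\zeta)/\zeta=1+\fro{\bTheta^\star}/\zeta\downarrow 1$, the term $D\log(nR(\zeta)/\zeta)$ stays of order $D\log n$ as $\zeta\to\infty$. Your penalty $|j|\approx\log_2(\zeta/\zeta_0)$, by contrast, is unbounded, so $\sqrt{c(|j|+1)}$ is not absorbed once $\log(\zeta/\zeta_0)\gg D\log n$. Noting that $R(\zeta)/\zeta\ge 1$ does not change this.

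The ``crude bound via the full class'' also fails as stated. The set $\{\W-\W^\star:\W\in\calW\}$ is unbounded, so its entropy integral is infinite.

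The missing idea is homogeneity. $\calW$ is a cone containing $0$, because rescaling $\bTheta$ preserves the rank constraints and leaves $\bPi$ untouched. Also $\fro{\bTheta}=\fro{\W_{S^\star,:}}\le\fro{\W}$. Set $G:=\sup\{\langle\Z,\W\rangle:\W\in\calW,\ \fro{\W}\le 1\}$. Then for $\zeta\ge\fro{\W^\star}$, every $\H\in\calT(\zeta)$ satisfies $\langle\Z,\H\rangle\le 2\zeta\, G+|\langle\Z,\W^\star\rangle|$. One Dudley bound for $G$ (the covering lemma with $R=1$), plus a scalar sub-Gaussian bound on $\langle\Z,\W^\star\rangle$, handles all large $\zeta$ on a single event.

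Peeling is then needed only for $\zeta\le\fro{\W^\star}\le\sqrt{n}\,\fro{\bTheta^\star}$. Anchor the grid at $\zeta_0=\fro{\bTheta^\star}$, not at $\fro{\bTheta^\star}\vee 1$: with the $\vee 1$, $R(\zeta)/\zeta$ need not grow like $\zeta_0/\zeta$ when $\fro{\bTheta^\star}\ll 1$. With this anchor, $|j|+1\lesssim\log(nR(\zeta)/\zeta)$ on that whole range, so the penalty is absorbed. With this addition your argument would prove the uniform statement, which is more than the paper's proof delivers.
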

\begin{proof}
	Define the process $X_\H:=\langle \Z,\H\rangle$ for $\H\in\calT(\zeta)$. For any $\H_1,\H_2\in\calT(\zeta)$, we have
\begin{align*}
	\|X_{\H_1}-X_{\H_2}\|_{\psi_2}&=\|\langle \Z,\H_1-\H_2\rangle\|_{\psi_2}\le C\sigma\|\H_1-\H_2\|_{\rm{F}}.
\end{align*}
Thus, $X_\H$ is a sub-Gaussian process with respect to the metric $d(\H_1,\H_2):=\|\H_1-\H_2\|_{\rm{F}}$. By Theorem 3.2 in \cite{dirksen2015tail}, we get for any $t\ge 0$, with probability at least $1-e^{-t}$,
\begin{align*}
\sup_{\H\in \mathcal{T}(\zeta)}\ab{X_\H} &\le C\sigma\brac{\int_0^{\infty}\sqrt{\log N\brac{\epsilon, \mathcal{T}(\zeta),\fro{\cdot}}}d\epsilon + \mathrm{diam}(\mathcal{T}(\zeta))\sqrt{t}},
\end{align*} 
where we have used the fact that $\mathbf{0}\in T(\zeta)$ and a bound for the $\gamma_2$-functional in terms of the covering number (e.g., \cite{talagrand2005generic}). Moreover, by definintion we have $\mathrm{diam}(\mathcal{T}(\zeta))\le 2\zeta$. Thus, the tail bound in~\eqref{eq:local-ep} follows from the following bound for the entropy integral:
\begin{align*}
	\int_0^{\infty}\sqrt{\log N\brac{\epsilon, \mathcal{T}(\zeta),\fro{\cdot}}}d\epsilon&\le C\int_0^{2\zeta}\sqrt{D\log\frac{CR(\zeta)\sqrt{nK}}{\epsilon}}d\epsilon\\
&\lesssim\sigma\zeta\sqrt{D\log\frac{nR(\zeta)}{\zeta }}.
\end{align*}
\end{proof}
We now conclude the proof of Theorem~\ref{thm:main}. Fix $t_0:=n\vee(d_1+d_2)$ and define
$\zeta_\ell := 2^\ell \zeta_0$ for $\ell\ge 0$, where $\zeta_0:=8C\sigma(\sqrt{\bar{D}}+\sqrt{t_0})\ge 1$ with  $\bar{D}:=D\brac{\log n+\log\brac{1+\frac{\fro{\bTheta^\star}}{\sigma \sqrt{D}}}}$, and $C$ is the universal constant in Lemma~\ref{lem:local-ep}.
For each $\ell\ge 0$, apply Lemma~\ref{lem:local-ep} with $\zeta=\zeta_\ell$ and tail parameter $t_0+\ell$ and  denote 
$$\calE_\ell:=\ebrac{\sup_{\H\in \mathcal{T}(\zeta_\ell)} \langle \Z,\H\rangle \le
C\sigma\zeta_\ell\brac{\sqrt{D\log\frac{nR(\zeta_\ell)}{\zeta_\ell}} + \sqrt{t_0+\ell }}}.$$ We thus have $\PP\brac{\calE_\ell^c}\le e^{-(t_0+\ell)}$, and  by a union bound over $\ell=0,1,2,\ldots$, we get
\begin{align*}
	\sum_{\ell=0}^\infty\PP\brac{\calE_\ell^c}\le \sum_{\ell=0}^\infty e^{-(t_0+\ell)}\le \frac{e^{-t_0}}{1-e^{-1}}\le 2e^{-t_0}.
\end{align*}
Therefore, with probability at least $1-2e^{-t_0}$ we have 
for all $\ell\ge 0$,
\begin{equation}\label{eq:union-shell}
\sup_{\H\in \mathcal{T}(\zeta_\ell)} \langle \Z,\H\rangle \le
C\sigma\, \zeta_\ell\brac{\sqrt{D\log\frac{nR(\zeta_\ell)}{\zeta_\ell}}+\sqrt{t_0+\ell}}.
\end{equation}
On the event~\eqref{eq:union-shell}, or equivalently, $\calE:=\bigcap_{\ell=0}^\infty\calE_\ell$, suppose for contradiction that $\fro{\bDelta}>\zeta_0$.
Choose $\ell\ge 1$ such that $\zeta_{\ell-1}<\fro{\bDelta}\le \zeta_\ell$. Then $\bDelta\in \mathcal{T}(\zeta_\ell)$
by definition of $\mathcal{T}(\zeta)$, and combining~\eqref{eq:basic-ineq-thm1} with~\eqref{eq:union-shell} gives
\begin{align*}
	\fro{\bDelta}^2
&\le 2\langle \Z,\bDelta\rangle
\le 2\sup_{\H\in\mathcal{T}(\zeta_\ell)}\langle \Z,\H\rangle
\le 2C\sigma\zeta_\ell\brac{\sqrt{D\log\frac{nR(\zeta_\ell)}{\zeta_\ell}}+\sqrt{t_0+\ell}}\\
&\le 4C\sigma\zeta_{\ell-1}\brac{\sqrt{D\log\frac{nR(\zeta_{\ell})}{\zeta_{\ell}}}+\sqrt{t_0+\ell}}\le 4C\sigma\fro{\bDelta}\brac{\sqrt{D\log\frac{nR(\zeta_{\ell})}{\zeta_{\ell}}}+\sqrt{t_0+\ell}},
\end{align*}
where we have used the definition of $\zeta_{\ell}$ and that $\fro{\bDelta}> \zeta_{\ell-1}$. Since $R(\zeta_{\ell})=\fro{\bTheta^\star}+\zeta_\ell$,  we thus arrive at
\begin{align*}
	\zeta_{\ell-1}<\fro{\bDelta}\le 4C\sigma\brac{\sqrt{D\log\frac{nR(\zeta_0)}{\zeta_0}}+\sqrt{t_0+\ell}},
\end{align*}
By the definition of $\zeta_{\ell}$, we get
\begin{equation}\label{eq:contrad-0}
	2^{\ell-1}\zeta_0<4C\sigma\brac{\sqrt{D\brac{\log n+\log\brac{1+\frac{\fro{\bTheta^\star}}{\zeta_0}}}}+\sqrt{t_0+\ell}}.
\end{equation}
In addition, 
\begin{align*}
	\log\brac{1+\frac{\fro{\bTheta^\star}}{\zeta_0}}\le \log\brac{1+\frac{\fro{\bTheta^\star}}{\sigma\sqrt{D}}},
\end{align*}
by the definition of $\zeta_0$, and thus combining with the definition of $\bar D$ and \eqref{eq:contrad-0} we attain
\begin{equation}\label{eq:contrad}
	2^{\ell}(\sqrt{\bar{D}}+\sqrt{t_0})<\sqrt{\bar{D}} +\sqrt{t_0+\ell}.
\end{equation}
But \eqref{eq:contrad} is impossible for any $\ell\ge 1$ as long as $D\ge 1$ and $t\ge 1$. This contradiction shows $\fro{\bDelta}\le \zeta_0$ on $\calE$, which yields the claimed bound after absorbing constants.

\subsection[Proof of decoupled error-rate theorem]{Proof of \Cref{thm:decouple-rate}}
Without loss of generality, we assume $\sigma=1$ and $\bP=\I_K$. We start with the following lemma.
\begin{lemma}\label{lem:decouple-rate}
	Suppose  \Cref{ass:snr} and \ref{ass:align}  hold, $\tau_\rho<\sqrt{c_0/(4C_0)}$,
    \begin{gather*}
    	\frac{1}{K^3 \log(n\vee s_\star)} \min \left( \frac{s_\star^2}{K}, \frac{ns_\star^2}{(d_1+d_2)(\sum_{k=1}^{K}r_k)} \right) \ge CC^2_0c_0^{-3}\brac{1-\tau_\rho\sqrt{\frac{C_0}{c_0}}}^{-1}
    \end{gather*}
    and
	\begin{align}\label{eq:theta-consist}
		\PP\ebrac{\frac{1}{\sqrt{s_\star^2K}}\bfro{\hat\bTheta-\bTheta^\star} \le  \frac{c^{3/2}_0}{4C_0K}\brac{1-\tau_\rho\sqrt{\frac{C_0}{c_0}}}^{1/2}}\ge 1-\zeta,
	\end{align}
    
	then we have with probability at least $1-\zeta-e^{-n\vee (d_1+d_2)}$,
	\begin{align*}
	\frac{1}{\sqrt{nK}}\bfro{\hat\bPi-\bPi^\star} &\le C_1\sqrt{ \left( \frac{K}{s_\star^2}+\frac{(d_1+d_2)\sum_{k=1}^K r_k}{n s_\star^2} \right) \log( n \vee s_\star )},\\
	\frac{1}{\sqrt{d_1d_2K}}\bfro{\hat\bTheta-\bTheta^\star} &\le C_2 \sqrt{ \left( \frac{K^2}{d_1d_2}+\frac{(d_1+d_2)K\sum_{k=1}^K r_k}{n d_1d_2} \right) \log( n \vee s_\star) },
\end{align*}
for some constants $C_1,C_2>0$ depending on $c_0$ and $C_0$.
\end{lemma}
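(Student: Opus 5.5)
The plan is to decouple the product error bound of \Cref{thm:main} into separate bounds on $\hat\bPi-\bPi^\star$ and $\hat\bTheta-\bTheta^\star$, using the local geometry in \Cref{ass:align}. Throughout I work on the intersection of the event in \eqref{eq:theta-consist} (probability $\ge 1-\zeta$) and the event from the proof of \Cref{thm:main} on which $\fro{\bDelta}\le \zeta_0\asymp \rho$. The latter holds with probability $\ge 1-2e^{-t_0}$, $t_0=n\vee(d_1+d_2)$, which matches the claimed probability. The first step is to write
\begin{align*}
\bDelta=\hat\bPi\hat\bTheta^\top-\bPi^\star\bTheta^{\star\top}
=(\hat\bPi-\bPi^\star)\bTheta^{\star\top}+\hat\bPi(\hat\bTheta-\bTheta^\star)^\top .
\end{align*}
By \eqref{eq:theta-consist}, $\hat\bTheta\in\calM(\rho)$, since each column error is at most the Frobenius error, which is small relative to $s_\star$. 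I then decompose $\hat\bTheta-\bTheta^\star=\bTheta^\star\D+\H$ with $\D=\D(\hat\bTheta)$ and $\H=\H(\hat\bTheta)$, and split $\H=\calP_{\calS_\theta^\star}(\H)+\H_\perp$. This gives
\begin{align*}
\bDelta=\big[(\hat\bPi-\bPi^\star)+\hat\bPi\D\big]\bTheta^{\star\top}+\hat\bPi\,\calP_{\calS_\theta^\star}(\H)^\top+\hat\bPi\,\H_\perp^\top .
\end{align*}
The first two terms have row space in $\calS_\theta^\star$ and the last is orthogonal to it. Pythagoras therefore yields two inequalities. The first is $\fro{\bDelta}\ge\fro{\hat\bPi\H_\perp^\top}$. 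The second is that $\fro{\bDelta}$ dominates the in-span part, $\fro{(\hat\bPi-\bPi^\star+\hat\bPi\D)\bTheta^{\star\top}+\hat\bPi\calP(\H)^\top}$.

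The second step lower bounds the orthogonal piece. I need $\sigma_K(\hat\bPi)\gtrsim\sqrt{c_0 n/K}$. This should follow from \eqref{membership-spread} plus a perturbation argument, because $\fro{\hat\bPi-\bPi^\star}$ can be controlled crudely. Concretely, the pure rows give $\hat\bTheta$ directly, and $(\hat\bPi-\bPi^\star)\bTheta^{\star\top}$ is bounded by $\fro{\bDelta}+\sqrt n\fro{\hat\bTheta-\bTheta^\star}$, combined with $\sigma_K(\bTheta^\star)\ge\sqrt{c_0}s_\star$ and \eqref{eq:theta-consist}. With this, \Cref{ass:align} gives $\fro{\H_\perp}^2\ge(1-\tau^2)\fro{\H}^2$, so
\begin{align*}
\fro{\H}\lesssim \sqrt{K/n}\,\fro{\bDelta}.
\end{align*}
The in-span piece gives a bound on $(\hat\bPi-\bPi^\star+\hat\bPi\D)$ after multiplying by $\bTheta^\star$ and using $\sigma_K(\bTheta^\star)\gtrsim s_\star$. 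The term $\hat\bPi\calP(\H)^\top$ is absorbed using the bound on $\fro{\H}$, at a cost of $\tau\sqrt{C_0/c_0}$. This is where the factor $(1-\tau_\rho\sqrt{C_0/c_0})^{-1}$ comes from.

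The main obstacle is separating $\hat\bPi-\bPi^\star$ from the scaling term $\hat\bPi\D$, which both live in the signal span. For this I will use the pure-subject constraint. Since $\hat\bPi_{S^\star,:}=\bPi^\star_{S^\star,:}=\I_K$ and rows lie in the simplex, the row sums give $(\hat\bPi-\bPi^\star)\mathbf 1=0$ while $\hat\bPi\D\mathbf 1=\hat\bPi\,\mathrm{diag}$-weighted. So $\D$ can be read off through $\E:=\hat\bPi-\bPi^\star+\hat\bPi\D$ via $\E\mathbf 1=\hat\bPi d$ with $d=(d_1,\ldots,d_K)$. It follows that $\|d\|\lesssim\sqrt{K/n}\fro{\E}$ by the singular value bound on $\hat\bPi$. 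The $\D$ term is then removed, and the triangle inequality bounds $\fro{\hat\bPi-\bPi^\star}\lesssim \fro{\bDelta}/s_\star$ up to $K$ factors. In the same way, $\fro{\hat\bTheta-\bTheta^\star}\le\fro{\bTheta^\star\D}+\fro{\H}\lesssim\sqrt{K/n}\fro{\bDelta}$ up to constants.

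Finally I substitute $\fro{\bDelta}^2\lesssim(nK+\sum_k r_k(d_1+d_2))\log(n\vee s_\star)$ from \Cref{thm:main} and normalize by $nK$ and $d_1d_2K$, taking $s_\star^2\asymp d_1d_2$ in the basis rate. This produces the stated forms. The lower bound condition on $s_\star^2$ is used to guarantee two things: that the crude preliminary bounds keep $\sigma_K(\hat\bPi)$ of order $\sqrt{n/K}$, and that the $K^3$ constants from the absorption steps are dominated.
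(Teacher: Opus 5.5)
Your argument is correct after some bookkeeping repairs, but it takes a different route from the paper.

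\textbf{Comparison with the paper.} The paper linearizes at the truth, $\bDelta=\bDelta_\Pi\bTheta^{\star\top}+\bPi^\star\bDelta_\Theta^\top+\bDelta_\Pi\bDelta_\Theta^\top$. It then proves a restricted lower bound for the linear part (Lemma~\ref{lem:overall-sum-lb}). That lemma combines the row-sum angle bound of Lemma~\ref{lem:sum-lb} with the cross-term bound of Lemma~\ref{lem:cross-ub}, which is where a condition like $\tau\sqrt{C_0/c_0}<1$ is needed. The quadratic remainder is removed by a two-stage bootstrap:
\begin{itemize}
\item \eqref{eq:theta-consist} gives $\fro{\bDelta_\Theta}\le\mathfrak{C}_p/2$, hence the $\bPi$ bound;
\item the $K^3$ signal condition then gives $\fro{\bDelta_\Pi}\le\mathfrak{C}_n/2$, hence the $\bTheta$ bound.
\end{itemize}

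You instead use the exact identity $\bDelta=\bDelta_\Pi\bTheta^{\star\top}+\hat\bPi\bDelta_\Theta^\top$, which has no remainder, and split rows orthogonally along $\calS_\theta^\star$. So \Cref{ass:align} enters only through $\fro{\H_\perp}^2\ge(1-\tau^2)\fro{\H}^2$, and the identity $\E\mathbf{1}=\hat\bPi\d$ does the job of Lemma~\ref{lem:sum-lb}.

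The price is two-sided spectral control of $\hat\bPi$, and your crude bound does supply it.
\begin{itemize}
\item Using $\|\hat\bPi\|\le\sqrt n$ and the explicit constant in \eqref{eq:theta-consist}, $\sqrt n\fro{\hat\bTheta-\bTheta^\star}/(\sqrt{c_0}s_\star)\le\frac{\sqrt{c_0}}{4C_0}\sqrt{c_0n/K}\le\frac14\sqrt{c_0n/K}$. The last step holds because $\bPi^\star\mathbf 1=\mathbf 1$ forces $c_0\le 1\le C_0$.
\item The term $\fro{\bDelta}/(\sqrt{c_0}s_\star)$ is at most $\frac14\sqrt{c_0n/K}$ under the lemma's signal condition, once its universal constant is large enough.
\item Hence $\sigma_K(\hat\bPi)\ge\frac12\sqrt{c_0n/K}$ and $\|\hat\bPi\|\lesssim\sqrt{C_0n/K}$.
\end{itemize}
You need this refined upper bound, not just $\sqrt n$, to keep $\kappa(\hat\bPi)=O(1)$ in the in-span step.

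Your route works for any $\tau$ bounded away from $1$. For the spectral step it needs only $s_\star^2\gtrsim\big(K^2+K(d_1+d_2)\sum_k r_k/n\big)\log(n\vee s_\star)$, and it avoids the bootstrap. The paper's route uses only the spectrum of $\bPi^\star$ and yields a deterministic restricted lower bound at $(\bPi^\star,\bTheta^\star)$. It pays for the cross term and the quadratic remainder with stronger conditions on $\tau$ and $s_\star$.

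\textbf{Repairs.}

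First, $\hat\bTheta\in\calM(\rho)$ does not follow from \eqref{eq:theta-consist}: that threshold is of order $s_\star/\sqrt K$ and need not be below $\rho$. It follows from the pure rows on the \Cref{thm:main} event: $\max_k\|\hat\btheta_k-\btheta_k^\star\|_2\le\fro{\hat\bTheta-\bTheta^\star}\le\fro{\bDelta}\le\rho$.

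Second, $\|\E\mathbf 1\|_2\le\sqrt K\fro{\E}$, so $\|\d\|_2\le\sqrt K\fro{\E}/\sigma_K(\hat\bPi)\lesssim K\fro{\E}/\sqrt n$, not $\sqrt{K/n}\fro{\E}$. The same loss appears in the paper as the $c_0/(C_0K)$ angle of Lemma~\ref{lem:sum-lb}. With the corrected bound:
\begin{itemize}
\item $\fro{\hat\bPi-\bPi^\star}\lesssim\sqrt K\fro{\bDelta}/s_\star$;
\item $\fro{\hat\bTheta-\bTheta^\star}\le\sqrt{C_0}s_\star\|\d\|_2+\fro{\H}\lesssim K\fro{\bDelta}/\sqrt n$.
\end{itemize}
These are exactly the two stated rates; the $K^2/(d_1d_2)$ term comes from this factor $K$.

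Third, no assumption $s_\star^2\asymp d_1d_2$ is needed. The $\bTheta$ bound is free of $s_\star$, and normalizing by $\sqrt{d_1d_2K}$ gives the stated form directly. Also, the factor $(1-\tau\sqrt{C_0/c_0})^{-1}$ does not arise in your route; you get $(1-\tau^2)^{-1/2}$ instead.
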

It remains to verify \eqref{eq:theta-consist} under the conditions in \Cref{thm:decouple-rate} holds with $\zeta=(d_1d_2)^{-10}+e^{-n\vee (d_1+d_2)}$, which is implied by \Cref{thm:main} such that with probability at least $1-O((n\vee p))^{-10}$,
\begin{align}
        \bfro{\hat\bTheta - \bTheta^\star } \leq  \bfro{\hat\bPi\hat\bTheta^\top - \bPi^\star\bTheta^{\star\top} }\le C\sigma\sqrt{\brac{nK+\sum_{k=1}^K r_k(d_1+d_2)}\log \brac{n\vee s_\star}}, 
\end{align}
where the first inequality holds due to ${\hat\bPi}_{S^\star,:} = \bPi^\star_{S^\star,:} = \I_K$. We use the remaining conditions from (\ref{large-network}) to obtain the result.

\subsection[Proof of product lower-bound theorem]{Proof of \Cref{thm:product-lb}}
\paragraph{Lower bound for $\bTheta$}
Without loss of generality we assume $r_k=r$ for all $k\in[K]$, and assume $n$ is a multiple of $K$, so $n_k = m = n/K$. Notice that it suffices to show that 
\begin{align*}
   \inf_{\hat\W}\sup_{\W^* \in \calP_s}\EE\fro{\hat\W-\W^\star} \ge \sigma\sqrt{cKrd_1},
\end{align*}
then obtain the analogous bound with $d_1$ replaced by $d_2$ by transposition.
Let $\bPi_0 \in \{0,1\}^{n \times K}$ be the block-diagonal indicator matrix:$$\bPi_0 = \begin{bmatrix}
\mathbf{1}_{m} & \mathbf{0} & \dots & \mathbf{0} \\
\mathbf{0} & \mathbf{1}_{m} & \dots & \mathbf{0} \\
\vdots & \vdots & \ddots & \vdots \\
\mathbf{0} & \mathbf{0} & \dots & \mathbf{1}_{m}
\end{bmatrix}.$$
This matrix satisfies the simplex constraint  and its columns are orthogonal $$\bPi_0^\top \bPi_0 = m \I_K = \frac{n}{K} \I_K.$$

Next, we fix some $\U_0\in \mathbb{O}_{d_1,r}$ and consider the ball centered at $\U_0$ with radius of $\epsilon\in (0,\sqrt{2r}]$ under the chordal Frobenius-norm metric $\textsf{dist}(\U_1,\U_2):=\min_{\O\in\mathbb{O}_r}\fro{\U_1- \U_2\O}$:
$$B_\epsilon(\U_0):=\{\U:\textsf{dist}(\U,\U_0)\le \epsilon\}$$
By Lemma 1 in \cite{cai2013sparse} and the equivalence between $\textsf{dist}(\cdot,\cdot)$ and $\|\sin\Theta(\cdot,\cdot)\|$, we have for any $\alpha\in(0,1)$, there exists $\{\U_i^\prime\}_{i=1}^m$, a packing of $B_\epsilon(\U_0)$ such that for some absolute constant $c_0>0$:
$$m\ge \left(\frac{c_0}{\alpha}\right)^{r(d_1-r)},\quad \min_{i< j}\textsf{dist}(\U_i^\prime,\U_j^\prime)\ge \alpha \epsilon$$
Denote $\O_i=
\arg\min_{\O\in\mathbb{O}_r}\fro{\U_i^\prime- \U_0\O}$. Fix $\bSigma= \text{diag}(\lambda_1,\cdots,\lambda_r)$ with $\lambda_1= \cdots = \lambda_r=\lambda$ and $\V\in\mathbb{O}_{d_2,r}$, we can construct ${\tilde\M}_{(i)}= \U_i^\prime \O_i^\top\bSigma \V^\top $ for $i=1,\cdots, m$. Notice that
\begin{align*}
\fro{\tilde\M_{(i)}-\tilde\M_{(j)}}&=\fro{\U_i^\prime \O_i^\top\bSigma \V^\top  -\U_j^\prime \O_j^\top\bSigma \V^\top }= \lambda \fro{\U_i^\prime \O_i^\top -\U_j^\prime \O_j^\top}\\
&\ge \lambda \cdot\textsf{dist}(\U_i^\prime,\U_j^\prime)\ge \lambda \alpha\epsilon.	
\end{align*}
On the other hand,
\begin{align*}
	\fro{\tilde\M_{(i)}-\tilde\M_{(j)}}&= \lambda \fro{\U_i^\prime \O_i^\top -\U_j^\prime \O_j^\top}\le\lambda \brac{\textsf{dist}(\U_i^\prime,\U_0)+\textsf{dist}(\U_j^\prime,\U_0)}\le 2\lambda \alpha\epsilon.	
\end{align*}

Let $\calM=\ebrac{\tilde\M_{(i)},\cdots,\tilde\M_{(j)}}$. Consider $\calN_{\Theta}:=\ebrac{\bTheta:\unvec_{d_1,d_2}\brac{\bTheta_{\cdot,k}}\in\calM,~\forall k\in[K]}$.
Recall $\bTheta = [\vec(\M_1), \dots, \vec(\M_K)]$.  Notice that  for any $\bTheta,\bTheta'\in\calN_{\Theta}$,
\begin{align*}
	\fro{\bTheta- \bTheta'}^2 = \sum_{k=1}^K \fro{\M_k - \M_k'}^2\asymp\lambda^2\alpha^2\epsilon^2K
\end{align*}
Let $\calN:=\ebrac{\bPi_0\bTheta^\top:\unvec_{d_1,d_2}\brac{\bTheta_{\cdot,k}}\in\calM,~\forall k\in[K]}$, we have $\log\ab{\calN}\asymp Krd_1$. Then for any $\W,\W'\in\calN$
\begin{align*}
\fro{\W - \W'}^2 = \frac{n}{K} \tr\left( (\bTheta - \bTheta') (\bTheta - \bTheta')^\top \right) = n\lambda^2\alpha^2\epsilon^2.
\end{align*}
By Fano's inequality, the minimax risk is lower bounded by $c Krd_1$ by choosing $\epsilon$ such that $	n\lambda^2\alpha^2\epsilon^2/\sigma^2= c Krd_1$ for some universal small constant $c>0$. 

\paragraph{Lower bound for $\bPi$}
Let $\bTheta_0 \in \RR^{p \times K}$ be fixed such that $\bTheta_0^\top \bTheta_0 = s^2 \I_K$. Let $\c = (1/K, \dots, 1/K)^\top$ be the center of $\Delta_K$. We define a perturbation magnitude $\alpha > 0$ (to be determined later). 
Let $\U \in \RR^{K \times (K-1)}$ be a matrix where the columns $\u_1, \dots, \u_{K-1}$ form an orthonormal basis for the subspace $\{\x : \mathbf{1}^\top \x = 0\}$. Note that $\mathbf{1}^\top \U = \mathbf{0}$. For any binary vector $\bome \in \{0,1\}^{K-1}$, we define
\begin{align*}
	\bpi(\bome) = \c + \alpha\U\brac{\bome - 0.5\mathbf{1}_{K-1}},
\end{align*}
where $\alpha$ is to be determined later.  To ensure $\bpi\in \Delta_K$, we require $\bpi\ge 0$ (sum-to-one is satisfied by $\U^\top \mathbf{1}=0$). Since $c_k = 1/K$, it suffices that $\|\alpha \U (\bome - 0.5\mathbf{1}_{K-1})\|_\infty \le 1/K$. This condition is satisfied if $\alpha \lesssim K^{-3/2}$.
\begin{comment} Then we have
\begin{align*}
	\op{\bpi(\bome) - \bpi(\bome')}^2 = \op{ (\c + \A\bome) - (\c + \A\bome') }^2=\alpha^2 \op{ \bome - \bome' }^2\asymp \alpha^2K.
\end{align*}
Here we have used that $\op{ \bome - \bome' }^2=d_{H}(\bome,\bome^\prime)\ge (K-1)/4$ and $\op{ \bome - \bome' }^2=d_{H}(\bome,\bome^\prime)\le K-1$. 
\end{comment}

By Varshamov-Gilbert Lemma, there exists a subset $\boldsymbol{\Omega} \subset \{0,1\}^{n(K-1)}$ such that (i) $|\boldsymbol{\Omega}| \ge 2^{n(K-1)/8}$, and (ii) for any distinct $\bar\bome, \bar\bome' \in \boldsymbol{\Omega}$, the Hamming distance $d_H(\bar\bome, \bar\bome')$ satisfies $d_H(\bar\bome, \bar\bome') := \sum_{j=1}^{n(K-1)} \mathbb{I}(\bar\omega_j \neq \bar\omega'_j) \ge n(K-1)/4$. Next, we define the mapping from a single long binary vector $\bar\bome \in \boldsymbol{\Omega}$ to the matrix $\mathbf{\Pi}$ by chopping $\bar\bome$ into $n$ chunks. Let $\bar\bome := (\bome_1, \bome_2, \dots, \bome_n)$ where each chunk $\bome_i \in \{0,1\}^{K-1}$. The $i$-th row of the matrix $\mathbf{\Pi}(\bar\bome)$ is generated by the $i$-th chunk as $\bpi_i:=\bpi(\bome_i)$.  Note that for distinct $\bar\bome, \bar\bome' \in \boldsymbol{\Omega}$:
 \begin{align*}
 	\fro{\W(\bar\bome) - \W(\bar\bome^\prime)}^2 &= \tr((\bPi(\bar\bome) - \bPi'(\bar\bome^\prime)) \bTheta_0^\top \bTheta_0 (\bPi(\bar\bome) - \bPi'(\bar\bome^\prime))^\top)\\
 	& = s^2 \fro{\bPi(\bar\bome) - \bPi'(\bar\bome^\prime)}^2\asymp s^2\alpha^2nK
 \end{align*}
 Let $\bar\calP:=\ebrac{\W(\bar\bome)=\bPi(\bar\bome)\bTheta_0^\top:\bar\bome\in\boldsymbol{\Omega}}$, then $\log |\bar\calP| \asymp nK$. It suffices to consider the parameter set $\bar\calP$ as $\bar\calP\subset\calP_s$, thus
 \begin{align*}
	\inf_{\hat\W}\sup_{\W^\star\in \calP_s}\EE\fro{\hat\W-\W^\star}^2\ge \inf_{\hat\W}\sup_{\W^\star\in \bar\calP}\EE\fro{\hat\W-\W^\star}^2.
\end{align*}
Under the Gaussian model $\bY \sim N(\vec(\W), \sigma^2 \I_{np})$, the KL divergence is given by 
$$D_{KL}(\mathbb{P}_{\bar\bome} \| \mathbb{P}_{\bar\bome'}) = \frac{1}{2\sigma^2} \fro{\W(\bar\bome) - \W(\bar\bome')}^2\asymp s^2\alpha^2nK.$$
By Fano's inequality, the minimax risk is lower bounded by $s^2\alpha^2nK$ up to constant provided that $	s^2 \alpha^2 nK/\sigma^2 \le c nK$ for some universal small constant $c>0$, which is implied by $\alpha\le \sqrt{c}\sigma/s$. By choosing $\alpha = \sqrt{c}\sigma / s$, we require $\alpha$ is small enough to stay in the simplex provided $s \gtrsim \sigma K^{3/2}$. 

\subsection[Proof of pure-subject recovery proposition]{Proof of \Cref{prop:est-Shat}}

Let $\hat{S} = (\hat{S}_1, ..., \hat{S}_K)$ be the estimated pure subject indices identified by SPA applied on the rows of left singular subspace $\hat{\U}$ obtained from rank-K truncated SVD of $\Y$. \citep{CHG24} estimates $\bPi^*$ as $\hat{\bPi} = \hat{\U} \hat{\U}_{\hat{S},:}^{-1}$. As a result of Theorem 2 of \citep{CHG24}, $\exists$ a constant $C$ such that with probability at least $1 - O((n \vee p)^{-10})$, \begin{align*}
    \| \hat{\bPi} - \bPi^* \P \|_{2,\infty} \leq C \kappa^2(\bPi^*) \sigma_1(\bPi^*) \xi_1.
\end{align*}
As long as $\text{min}_{i \in S^*, j \not\in S^*} ||\bpi^*_i - \bpi^*_j||_2 > C\kappa^2(\bPi^*)\sigma_1(\bPi^*)\xi_1$, then the only estimated membership vectors closer than $C\kappa^2(\bPi^*)\sigma_1(\bPi^*)\xi_1$ to a standard basis vector in $l_2$ norm are those belonging to pure subjects. Observe that $\hat{\bPi}_{\hat{S},:} = \I_K$. Hence, $\hat{S}$ recovers the pure subhect indices up to permutation.

\section{Proofs of Lemmas}\label{sec:ProofsLemmas}

\subsection[Proof of rank-size lemma]{Proof of \Cref{lem:rank-size}}

Let $\U_k^* \bSigma_k^* \V_k^{*\top}$ denote the singular value decomposition of $\M^*_k$, and let $\bcalM^*$ denote the tensor satisfying $\bcalM^*(:,:,k) = \M^*_k$. Then $\EE[\bcalX]$ can be represented as 
\begin{align}\label{tens-tucker}
    \EE[\bcalX] &= \bcalM^* \times_3 \bPi^* = \bcalC^* \times_1 \U^*_{\cup} \times_2 \V^*_{\cup} \times_3 \bPi^*,
\end{align}
where $\U^*_{\cup} = (\U^*_1 | ... | \U^*_K) \in \RR^{d_1 \times \mathring{r}}$ and $\V^*_{\cup} = (\V^*_1 | ... | \V^*_K) \in \RR^{d_2 \times \mathring{r}}$ are the column-wise concatenations of the $K$ distinct left and right singular subspaces respectively, $\mathring{r} = \sum_{k=1}^K r_k$, and $\bcalC^*$ is the $\mathring{r} \times \mathring{r} \times K$ tensor satisfying $\bcalC^*(:,:,k) = \text{diag}(0_{r_1}, ..., 0_{r_{k-1}}, \bSigma_k, 0_{r_{k+1}}, ..., 0_{r_K})$. Take the mode-1 and mode-2 unfoldings of (\ref{tens-tucker}) to observe that $r_U, r_V \leq \mathring{r}$.

\subsection[Proof of model-selection lemma]{Proof of \Cref{lem:model-selection}}

\begin{gather*}
    \EE[\underline{\bcalX}] = \underline{\bcalC}^* \times \underline{\U}_\cup^* \times_2 \underline{\V}_\cup^* \times_3 \bPi^*, \text{ where } \\
    \underline{\U}_\cup^* = (\U^{(1)*}_\cup | ... | \U^{(M)*}_\cup) \qquad \U^{(m)*}_\cup = (\U_1^{(m)*} | ... | \U_K^{(m)*}) \\ 
    \underline{\V}_\cup^* = (\V^{(1)*}_\cup | ... | \V^{(M)*}_\cup) \qquad \V^{(m)*}_\cup = (\V_1^{(m)*} | ... | \V_K^{(m)*}) \\
    \underline{\bcalC}^*(:,:,k,m) = \text{diag}(\mathbf{0}_{\mathring{r}^{(1)}},...,\mathbf{0}_{\mathring{r}^{(m-1)}}, \mathbf{0}_{r^{(m)}_1},...,\mathbf{0}_{r^{(m)}_{k-1}},  \bSigma_k^{(m)*}, \mathbf{0}_{r^{(m)}_{k+1}}, ..., \mathbf{0}_{r^{(m)}_K}, \mathbf{0}_{\mathring{r}^{(m+1)}}, ..., \mathbf{0}_{\mathring{r}^{(M)}})\\
    \M_k^{(m)*} = \U_k^{(m)*}\bSigma_k^{(m)*}\V_k^{(m)*\top},
\end{gather*}
and $\mathring{r}^{(m)} = \sum_{k=1}^K r_k^{(m)}$.

\subsection[Proof of decoupling-rate lemma]{Proof of \Cref{lem:decouple-rate}}
We start with the following the decomposition:
 \begin{align*}
	\bDelta_W = \hat\bPi\hat\bTheta^\top-\bPi^\star\bTheta^{\star\top}
= \bDelta_{\Pi}\bTheta^{\star\top}  +\bPi^\star\bDelta_{\Theta}^\top + \bDelta_{\Pi}\bDelta_{\Theta}^\top.
 \end{align*}
To lower bound $\fro{\bDelta_W}$, we need the following lemma.
  \begin{lemma}\label{lem:overall-sum-lb}
	Suppose \Cref{ass:snr} and \ref{ass:align}  hold and  $\tau_\rho<\sqrt{c_0/C_0}$, we then have
\begin{align*}
	\fro{\bDelta_\Pi\bTheta^{\star\top}+\bPi^\star\bDelta_{\Theta}^\top }^2\ge\frac{c^2_0}{4C^2_0K} \brac{1-\tau_\rho\sqrt{\frac{C_0}{c_0}}}\brac{\fro{\bDelta_\Pi\bTheta^{\star\top}}^2+\fro{\bPi^\star\bDelta_{\Theta}^\top }^2}.
\end{align*}
\end{lemma}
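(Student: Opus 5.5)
The plan is to expand the square and show that the cross term cannot cancel much of the two diagonal terms. Write $\A:=\bDelta_\Pi\bTheta^{\star\top}$ and $\B:=\bPi^\star\bDelta_\Theta^\top$, so that
\begin{align*}
\fro{\A+\B}^2=\fro{\A}^2+\fro{\B}^2+2\langle \A,\B\rangle .
\end{align*}
It suffices to prove $|\langle\A,\B\rangle|\le \gamma\,\fro{\A}\fro{\B}$ for some $\gamma<1$ with $1-\gamma\gtrsim (c_0^2/(C_0^2K))(1-\tau_\rho\sqrt{C_0/c_0})$. Then $\fro{\A+\B}^2\ge(1-\gamma)(\fro{\A}^2+\fro{\B}^2)$, which is the claimed inequality.

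\textbf{Step 1 (orthogonal splitting).} Use the decomposition $\bDelta_\Theta=\bTheta^\star\D+\H$ introduced before \Cref{ass:align}, and split $\H=\calP_{\calS_\theta^\star}(\H)+\H_\perp$.
\begin{itemize}
\item Every row of $\A$ is a combination of the $\btheta_k^{\star\top}$, so it lies in $\calS_\theta^\star$.
\item Every row of $\bPi^\star\H_\perp^\top$ lies in $\calS_\theta^{\star\perp}$.
\end{itemize}
Hence $\langle \A,\bPi^\star\H_\perp^\top\rangle=0$. Writing $\calP_{\calS_\theta^\star}(\H)=\bTheta^\star\G$ for some $K\times K$ matrix $\G$, the cross term reduces to
\begin{align*}
\langle\A,\B\rangle=\big\langle \bDelta_\Pi\bTheta^{\star\top},\ \bPi^\star(\D+\G)^{\top}\bTheta^{\star\top}\big\rangle .
\end{align*}
So all interaction lives in the $K$-dimensional coordinates: with $\E:=(\D+\G)^\top$, the two objects are $\bDelta_\Pi$ and $\bPi^\star\E$, both paired through $\bTheta^{\star\top}\bTheta^\star$.

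\textbf{Step 2 (the part of $\B$ that cannot interact).} By \Cref{ass:align}, $\fro{\H_\perp}^2\ge(1-\tau_\rho^2)\fro{\H}^2$. Combining this with \eqref{membership-spread}, the part of $\B$ orthogonal to every possible $\A$ carries norm
\begin{align*}
\fro{\bPi^\star\H_\perp^\top}^2\ge \frac{c_0 n}{K}(1-\tau_\rho^2)\fro{\H}^2 .
\end{align*}
The remaining in-span part of $\B$ has norm at most $\sqrt{C_0n/K}\,\sqrt{C_0}s_\star\fro{\E}$. Its $\G$-component is controlled by $\tau_\rho\fro{\H}$, and converting that uses \eqref{basis-spread}. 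This is where the factor $\tau_\rho\sqrt{C_0/c_0}$ enters.

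\textbf{Step 3 (the $\D$-direction, the main obstacle).} The component $\bPi^\star\D\bTheta^{\star\top}$ genuinely lies in the same space as $\A$, and $\bDelta_\Pi$ can try to cancel it. Here I will use the constraints defining the oracle class.
\begin{itemize}
\item The pure rows vanish: $(\bDelta_\Pi)_{S^\star,:}=0$, because $\hat\bPi_{S^\star,:}=\bPi^\star_{S^\star,:}=\I_K$.
\item Rows of $\bDelta_\Pi$ sum to zero.
\item $\bPi^\star_{S^\star,:}=\I_K$.
\end{itemize}
On the $K$ pure rows the combined matrix $\bDelta_\Pi+\bPi^\star\E$ equals $\E$ exactly. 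Cancellation therefore forces $\E$ to be visible directly in $\fro{\A+\B}$, through the term $\fro{\E\bTheta^{\star\top}}^2\ge c_0s_\star^2\fro{\E}^2$.

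I would do a case split on whether $\fro{\bPi^\star\E\bTheta^{\star\top}}\le\frac12\fro{\A}$.
\begin{itemize}
\item If yes, the triangle inequality gives the bound with a constant factor.
\item If not, I lower bound via the pure rows together with the eigenvalue bounds \eqref{membership-spread}--\eqref{basis-spread}. This loses the conditioning factors $c_0/C_0$ twice; the $1/K$ in the target constant must absorb the ratio between the $K$ pure rows and the full spread $\bPi^{\star\top}\bPi^\star$.
\end{itemize}

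I expect this second case to be the hardest step. Naively the pure-row argument loses a factor $n/K$ rather than $K$. If it does, I would fall back on bounding $\langle\bDelta_\Pi\bTheta^{\star\top},\bPi^\star\D\bTheta^{\star\top}\rangle$ directly by Cauchy--Schwarz in the $\bTheta^{\star\top}\bTheta^\star$-weighted inner product. I would then exploit that $\D$ is diagonal, while $\bDelta_\Pi$ has zero row sums and is supported off $S^\star$, to get a strict-angle bound with constant $1-c_0/(C_0K)$.

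Finally I combine the cases and collect the constants into $\frac{c_0^2}{4C_0^2K}\big(1-\tau_\rho\sqrt{C_0/c_0}\big)$.
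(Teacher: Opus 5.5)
Your reduction to a cosine bound $|\inp{\A}{\B}|\le\gamma\fro{\A}\fro{\B}$ and the orthogonality in Step 1 are fine. The gap is in Step 3: the step that carries the lemma is ruling out cancellation between $\bDelta_\Pi\bTheta^{\star\top}$ and $\bPi^\star\D\bTheta^{\star\top}$, and you do not supply it. The route you lead with cannot supply it. Restricting to the pure rows gives only $\fro{\A+\B}^2\ge\fro{(\A+\B)_{S^\star,:}}^2=\fro{\bDelta_\Theta}^2$, while $\fro{\B}^2$ can be as large as $(C_0n/K)\fro{\bDelta_\Theta}^2$. The $K$ pure rows out of $n$ see only a $K/n$ fraction of the mass, and no case split recovers that factor; the paper never uses $S^\star$ here.

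The mechanism that works is the row-sum functional. You have $\bDelta_\Pi\mathbf 1_K=0$, whereas $\bPi^\star\D\mathbf 1_K=\bPi^\star\d$, with $\d$ the diagonal of $\D$. So the distance from $\bPi^\star\D$ to the zero-row-sum subspace $\calD$ is $K^{-1/2}\|\bPi^\star\d\|\ge K^{-1}\sqrt{c_0n}\,\|\d\|$, while $\fro{\bPi^\star\D}\le\sqrt{C_0n/K}\,\|\d\|$. This gives $\sin^2\ge c_0/(C_0K)$ in the unweighted $n\times K$ geometry, hence $\fro{\bDelta_\Pi+\bPi^\star\D}^2\ge\frac{c_0}{2C_0K}\brac{\fro{\bDelta_\Pi}^2+\fro{\bPi^\star\D}^2}$. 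Only then do you multiply by $\bTheta^{\star\top}$, sandwiching with $c_0s_\star^2\I_K\preceq\bTheta^{\star\top}\bTheta^\star\preceq C_0s_\star^2\I_K$. This costs a second factor $c_0/C_0$ and yields $\beta:=c_0^2/(2C_0^2K)$. Your fallback gestures at this, but the constant $1-c_0/(C_0K)$ you claim in the weighted inner product is not justified, and the support condition off $S^\star$ plays no role.

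The second gap is how the $\D$-part combines with $\H$. Bundling $\D$ and your $\mathbf G$ into $\E=\D+\mathbf G^\top$ destroys the diagonal structure the angle argument needs. Whenever $\E\mathbf 1_K=0$ you have $\bPi^\star\E\in\calD$, and the linear constraints on $\bDelta_\Pi$ then leave only the pure rows to obstruct cancellation — the lossy mechanism above.

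Suppose instead you keep the pieces separate and bound the cross term piecewise:
$|\inp{\A}{\B}|\le(1-\beta)\fro{\A}\fro{\bPi^\star\D\bTheta^{\star\top}}+\tau_\rho\sqrt{C_0/c_0}\,\fro{\A}\fro{\bPi^\star\H^\top}$.
This alone does not give $\gamma\fro{\A}\fro{\B}$ with $\gamma<1$ unless $\tau_\rho\sqrt{C_0/c_0}\lesssim\beta$, because the two summands of $\B$ can themselves partially cancel.

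The paper avoids this by regrouping. The sum $\bDelta_\Pi\bTheta^{\star\top}+\bPi^\star\D\bTheta^{\star\top}$ has rows in $\calS_\theta^\star$. So by \Cref{ass:align}, its inner product with all of $\bPi^\star\H^\top$ is at most $\tau_\rho\kappa(\bPi^\star)\le\tau_\rho\sqrt{C_0/c_0}$ times the product of norms, which gives the factor $1-\tau_\rho\sqrt{C_0/c_0}$. The row-sum bound then splits the first group. Finally, $\fro{\bPi^\star\D\bTheta^{\star\top}}^2+\fro{\bPi^\star\H^\top}^2\ge\frac12\fro{\bPi^\star\bDelta_\Theta^\top}^2$ reassembles $\B$, producing exactly $\frac{c_0^2}{4C_0^2K}\brac{1-\tau_\rho\sqrt{C_0/c_0}}$.
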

 Denote $\delta_W :=\fro{\bDelta_W} $, $\delta_\Pi :=\fro{\bDelta_\Pi} $, $\delta_\Theta :=\fro{\bDelta_\Theta} $,  $ \delta_D:= \fro{\bDelta_{\Pi}\bDelta_{\Theta}^\top}$. By \Cref{lem:overall-sum-lb} and the triangle inequality, we have
\begin{align*}
\sqrt{\eta}\fro{\bDelta_{\Pi}\bTheta^{\star\top}}\le \delta_W+\delta_\Pi\delta_\Theta,\qquad \sqrt{\eta}\fro{\bPi^\star\bDelta_{\Theta}^\top }\le \delta_W+\delta_\Pi\delta_\Theta,
\end{align*}
where $\eta:=\frac{c^2_0}{4C^2_0K} \brac{1-\tau_\rho\sqrt{\frac{C_0}{c_0}}}$.
Combined with \Cref{ass:snr}, we obtain
\begin{align}\label{eq:pi-theta-ineq}	
\delta_\Pi \le \frac{\delta_W+\delta_\Pi\delta_\Theta}{\mathfrak{C}_p },\qquad  \delta_\Theta\le \frac{\delta_W+\delta_\Pi\delta_\Theta}{\mathfrak{C}_n },
\end{align}
where $\mathfrak{C}_p := \sqrt{c_0\eta s_\star^2}$ and $\mathfrak{C}_n := \sqrt{c_0\eta n/K}$. Note that the event in \eqref{eq:theta-consist} is equivalent to $\delta_\Theta\le  \mathfrak{C}_p/2$. On the event that \eqref{eq:theta-consist} holds, we have 
\begin{align*}
	\fro{\bDelta_\Pi} \le \frac{2}{\mathfrak{C}_p}\fro{\bDelta_W}.
\end{align*}
Combined with \Cref{thm:main} and a union bound argument, we get with probability at least $1-\zeta-e^{-n\vee (d_1+d_2)}$,
\begin{align}\label{eq:pi-rate}
	\frac{1}{\sqrt{nK}}\fro{\hat\bPi-\bPi^\star} \le C_1\sqrt{ \left( \frac{K}{s_\star^2}+\frac{(d_1+d_2)\sum_{k=1}^K r_k}{n s_\star^2} \right) \log(n \vee s_\star)},
\end{align}
where the constant $C_1>0$ depends on $c_0$ and $C_0$.
On the other hand, \eqref{eq:pi-rate} also implies that $\delta_\Pi\le \mathfrak{C}_n/2$ provided that 
% \begin{align*}
% 	\frac{K}{d_1d_2}+\frac{(d_1+d_2)\sum_{k=1}^K r_k}{n d_1d_2}\le \frac{c_0^2}{4CK^3}\brac{1-\tau_\rho\sqrt{\frac{C_0}{c_0}}}.
% \end{align*}
\begin{gather*}
	\frac{1}{K^3 \log(n \vee s_\star)}\min \left\{ \frac{s_\star^2}{K}, \frac{ns_\star^2}{(d_1+d_2)(\sum_{k=1}^{K}r_k)} \right\} \ge CC^2_0c_0^{-3}\brac{1-\tau_\rho\sqrt{\frac{C_0}{c_0}}}^{-1}
\end{gather*}
for some universal constant $C>0$. Thus, we can similarly deduce that  with probability at least $1-\zeta-e^{-n\vee (d_1+d_2)}$,
\begin{align*}
	\frac{1}{\sqrt{d_1d_2K}}\fro{\hat\bTheta-\bTheta^\star} \le C_2 \sqrt{\left( \frac{K^2}{d_1d_2}+\frac{(d_1+d_2)K\sum_{k=1}^K r_k}{n d_1d_2} \right) \log(n \vee s_\star)},
\end{align*}
where the constant $C_2>0$ depends on $c_0$ and $C_0$.

\subsection[Proof of tau-bound lemma]{Proof of \Cref{lem:tau-bound}}
Fix $k\in[K]$. With slight abuse of notation we write $\H$ and $\D$ instead of $\H(\bTheta)$ and $\D(\bTheta)$. We start by noticing that 
\begin{align}\label{eq:proj-hk-bound}
	\|\calP_{\calS_\bTheta^\star}(\h_k)\|^2=(\bTheta^{\star\top}\h_k)^\top(\bTheta^{\star\top}\bTheta^\star)^{-1}(\bTheta^{\star\top}\h_k)
\le \frac{1}{c_0p}\|\bTheta^{\star\top}\h_k\|^2.
\end{align}
In addition, we have $(\bTheta^{\star\top}\h_k)_j=\langle \btheta_j^\star,\h_k\rangle=\langle\M_j^\star,\H_k\rangle$ for $j\in[K]$. By definition,  $(\bTheta^{\star  \top}\h_k)_k=0$. Let $M_k:=\unvec_{d_1,d_2}(\bTheta_{\cdot,k})$.
Denote 
\begin{align*}
	\H_{k,11}:=\U_k^\top \H_k \V_k,\quad
\H_{k,12}:=\U_k^\top \H_k \V_{k,\perp},\quad
\H_{k,21}:=\U_{k,\perp}^\top \H_k \V_k,\quad
\H_{k,22}:=\U_{k,\perp}^\top \H_k \V_{k,\perp}.
\end{align*}
Since $\bTheta\in\calM(\rho)$,  $\rho<s_{\min}/2$ and $\rank(\M_k)\le r_k$, \rev{by a standard Schur-complement argument for fixed-rank perturbations} we get 
\begin{align*}
	\H_{k,22}= \H_{k,21}\H_{k,11}^{-1}\H_{k,12}.
\end{align*}
Moreover, we have $\bop{\H_{k,11}^{-1}}\le 2/\sigma_{r_k}(\M_k^\star)\le 2/s_{\min}$, and hence
\begin{align*}
	\fro{\H_{k,22}}\le \frac{2}{s_{\min}}\op{\H_{k,21}}\op{\H_{k,12}}\le \frac{1}{s_{\min}}\fro{\H_{k}}^2\le \frac{\rho}{s_{\min}}\fro{\H_{k}}.
\end{align*}
Therefore, for $j\neq k$, we have
\begin{align*}
	|\langle \M_j^\star,\H_k\rangle| &\le |\
\langle \U_j\bSigma_j\V_j^\top,\U_k\H_{k,11}\V_k^\top\rangle| + |\langle \U_j\bSigma_j\V_j^\top,\U_k\H_{k,12}\V_{k,\perp}^\top\rangle| \\
&\quad + |\langle \U_j\bSigma_j\V_j^\top,\U_{k,\perp}\H_{k,21}\V_k^\top\rangle| + |\langle \U_j\bSigma_j\V_j^\top,\U_{k,\perp}\H_{k,22}\V_{k,\perp}^\top\rangle| \\
&\le \fro{\M_j^\star}\Big(\bop{\U_j^\top \U_k}\bop{\V_{j}^\top \V_k}\fro{\H_{k,11}} + \bop{\U_j^\top \U_{k}}\fro{\H_{k,12}}\\
&\hspace{2cm}+\bop{\V_{j}^\top \V_k}\fro{\H_{k,21}} +\fro{\H_{k,22}}\Big) \\
&\le  \fro{\M_j^\star}\brac{\varepsilon_U\varepsilon_V + \varepsilon_U + \varepsilon_V+\frac{\rho}{s_{\min}}}\fro{\H_k}.
\end{align*}
Squaring and summing over $j\in[K]$, we obtain
\begin{align*}
	\bop{\bTheta^{\star\top}\h_k}^2
=\sum_{j\neq k} |(\bTheta^{\star\top}\h_k)_j|^2
\le
\Big(\varepsilon_U\varepsilon_V+\varepsilon_U+\varepsilon_V+\frac{\rho}{s_{\min}}\Big)^2
\op{\h_k}^2\sum_{j\neq k}\op{\btheta_j^\star}^2.
\end{align*}
Combined with \Cref{ass:snr} and \eqref{eq:proj-hk-bound}, we get
\begin{align*}
	\|\calP_{\calS_\bTheta^\star}(\h_k)\|^2
\le \frac{C_0K}{c_0}\Big(\varepsilon_U\varepsilon_V+\varepsilon_U+\varepsilon_V+\frac{\rho}{s_{\min}}\Big)^2 \op{\h_k}^2.
\end{align*}
The proof is completed by noting that $\bfro{\calP_{\calS_\bTheta^\star}(\H)}^2=\sum_{k=1}^K\bop{\calP_{\calS_\bTheta^\star}(\h_k)}^2$.

\subsection[Proof of overall sum lower-bound lemma]{Proof of \Cref{lem:overall-sum-lb}}
For simplicity we drop the dependence  on $\bTheta$ for $\H(\cdot)$ and $\D(\cdot)$ and simply write as $\H$ and $\D$. By definition, we have  $\bDelta_{\Theta} =\bTheta^\star\D+\H$. Notice that  $\bPi^\star\bDelta_{\Theta}^\top= \bPi^\star\D\bTheta^{\star\top}+\bPi^\star\H^\top$, we can decompose 
\begin{align}\label{eq:sum-decomp}
	\bDelta_\Pi\bTheta^{\star\top}+\bPi^\star\bDelta_{\Theta}^\top =\bDelta_\Pi\bTheta^{\star\top}+\bPi^\star\D\bTheta^{\star\top}+\bPi^\star\H^\top.
\end{align}
We have the following two lemmas:
\begin{lemma}\label{lem:sum-lb}
	Suppose \Cref{ass:snr} holds. We have
	\begin{align*}
	\fro{\bDelta_\Pi\bTheta^{\star\top}+\bPi^\star\D\bTheta^{\star\top}}^2\ge \frac{c^2_0}{2C^2_0K}\brac{\fro{\bDelta_\Pi\bTheta^{\star\top}}^2+\fro{\bPi^\star\D\bTheta^{\star\top}}^2}.
\end{align*}
\end{lemma}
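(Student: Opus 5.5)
The plan is to strip off $\bTheta^\star$ using \eqref{basis-spread}, and then exploit one structural fact about $\bDelta_\Pi$: every row of $\bDelta_\Pi=\hat\bPi-\bPi^\star$ sums to zero, because both $\hat\bPi$ and $\bPi^\star$ have rows in $\Delta_K$. This fact prevents $\bDelta_\Pi$ from cancelling the diagonal rescaling term $\bPi^\star\D$.

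\emph{Step 1 (removing $\bTheta^\star$).} For any $\A\in\RR^{n\times K}$ we have $\fro{\A\bTheta^{\star\top}}^2=\tr(\A\,\bTheta^{\star\top}\bTheta^\star\A^\top)$. By \eqref{basis-spread} this lies between $c_0s_\star^2\fro{\A}^2$ and $C_0s_\star^2\fro{\A}^2$. Set $\A:=\bDelta_\Pi+\bPi^\star\D$. It then suffices to show
\begin{align*}
\fro{\A}^2\ \ge\ \frac{c_0}{2C_0K}\brac{\fro{\bDelta_\Pi}^2+\fro{\bPi^\star\D}^2}.
\end{align*}
Indeed, the left side of the lemma is at least $c_0s_\star^2\fro{\A}^2$, while the right side of the lemma is at most $\frac{c_0^2}{2C_0^2K}\,C_0s_\star^2\brac{\fro{\bDelta_\Pi}^2+\fro{\bPi^\star\D}^2}$. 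Combining these two bounds with the display gives the lemma.

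\emph{Step 2 (row sums).} Write $\d=(d_1,\ldots,d_K)^\top$ and $\mathbf 1=\mathbf 1_K$. Since $\bDelta_\Pi\mathbf 1=\mathbf 0$, we get $\A\mathbf 1=\bPi^\star\D\mathbf 1=\bPi^\star\d$. Decompose $\A$ orthogonally with $\P_1:=\mathbf 1\mathbf 1^\top/K$. This gives
\begin{align*}
\fro{\A}^2\ \ge\ \fro{\A\P_1}^2=\frac{1}{K}\|\bPi^\star\d\|^2\ \ge\ \frac{c_0n}{K^2}\|\d\|^2,
\end{align*}
where the last inequality uses \eqref{membership-spread}. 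In the other direction, $\fro{\bPi^\star\D}^2\le \|\bPi^\star\|^2\fro{\D}^2\le \frac{C_0n}{K}\|\d\|^2$. Together these yield $\fro{\A}^2\ge \frac{c_0}{C_0K}\fro{\bPi^\star\D}^2$. So the $\D$-part cannot be hidden by $\bDelta_\Pi$.

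\emph{Step 3 (controlling $\bDelta_\Pi$).} Use the triangle inequality $\fro{\bDelta_\Pi}\le\fro{\A}+\fro{\bPi^\star\D}$. To get the stated constant rather than a worse one, split along $\P_1$ and $\I-\P_1$. Since $\bDelta_\Pi\P_1=0$, we have $\A(\I-\P_1)=\bDelta_\Pi+\bPi^\star\D(\I-\P_1)$. Then bound $\fro{\bDelta_\Pi}^2\le 2\fro{\A(\I-\P_1)}^2+2\fro{\bPi^\star\D(\I-\P_1)}^2$. The second term is at most $2\fro{\bPi^\star\D}^2\le\frac{2C_0K}{c_0}\fro{\A}^2$ by Step 2. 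Adding the bound from Step 2 gives $\fro{\bDelta_\Pi}^2+\fro{\bPi^\star\D}^2\lesssim \frac{C_0K}{c_0}\fro{\A}^2$.

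The main obstacle is pinning down the numerical constant $\tfrac12$ in this last step. The crude triangle-inequality route gives roughly $\tfrac15$ instead. If the crude route does not close, I would exploit the pure-subject rows instead: on $\{\hat S=S^\star\}$ the rows of $\bDelta_\Pi$ indexed by $S^\star$ vanish. This gives the additional bound $\fro{\A}^2\ge\|\d\|^2$, which can be weighed against the bound from Step 2. Only the order $c_0^2/(C_0^2K)$ matters downstream, since the constant is absorbed into $\eta$ in \Cref{lem:overall-sum-lb}.
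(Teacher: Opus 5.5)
Your Steps 1 and 2 coincide with the paper's argument. You strip $\bTheta^\star$ via \eqref{basis-spread}, use that $\bDelta_\Pi$ has zero row sums, and project onto $\mathbf 1$ to get $\fro{\C\P_1}^2\ge\alpha\fro{\C}^2$, where $\C:=\bPi^\star\D$ and $\alpha:=c_0/(C_0K)$.

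\textbf{The gap is in Step 3.} You bound $\fro{\bDelta_\Pi}^2\le 2\fro{\A(\I-\P_1)}^2+2\fro{\C(\I-\P_1)}^2$ and then use $\fro{\C}^2\le\alpha^{-1}\fro{\A}^2$. This gives only $\fro{\bDelta_\Pi}^2+\fro{\C}^2\le(2+3/\alpha)\fro{\A}^2$, which is the lemma with $\frac{c_0^2}{5C_0^2K}$ in place of $\frac{c_0^2}{2C_0^2K}$. So the statement as written is not proved.

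Your fallback does not repair this, for three reasons.
\begin{itemize}
\item The bound $\fro{\A}^2\ge\|\d\|^2$ from the pure-subject rows is weaker than Step 2's $\fro{\A}^2\ge\frac{c_0n}{K^2}\|\d\|^2$ as soon as $n>K^2/c_0$.
\item It only re-bounds $\|\d\|$, and says nothing about the cross term $\inp{\C}{\bDelta_\Pi}$, which is where the loss occurs.
\item It needs the event $\{\hat S=S^\star\}$, which is not a hypothesis of the lemma.
\end{itemize}
Your remark that only the order matters holds for \Cref{thm:decouple-rate}, but not for the statement at hand. The explicit $\frac{c_0^2}{4C_0^2K}$ in \Cref{lem:overall-sum-lb}, and hence $\eta$ and the threshold in \eqref{eq:theta-consist}, are derived from this $\tfrac12$.

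\textbf{The missing step} is to turn your Step 2 into an angle bound, rather than splitting with $(u+v)^2\le 2u^2+2v^2$. The set $\calD=\{\X:\X\mathbf 1=\mathbf 0\}$ is a subspace with orthogonal projector $\X\mapsto\X(\I-\P_1)$, and $\bDelta_\Pi\in\calD$. Hence $\inp{\C}{\bDelta_\Pi}=\inp{\C(\I-\P_1)}{\bDelta_\Pi}$. Also $\fro{\C(\I-\P_1)}^2=\fro{\C}^2-\fro{\C\P_1}^2\le(1-\alpha)\fro{\C}^2$. Therefore $\ab{\inp{\C}{\bDelta_\Pi}}\le t\fro{\C}\fro{\bDelta_\Pi}$ with $t:=\sqrt{1-\alpha}$, and
\[
\fro{\A}^2\ge\fro{\C}^2+\fro{\bDelta_\Pi}^2-2t\fro{\C}\fro{\bDelta_\Pi}\ge(1-t)\brac{\fro{\C}^2+\fro{\bDelta_\Pi}^2}\ge\frac{\alpha}{2}\brac{\fro{\C}^2+\fro{\bDelta_\Pi}^2},
\]
using $\sqrt{1-\alpha}\le 1-\alpha/2$. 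This is exactly how the paper concludes, and combined with your Step 1 it yields the stated constant.

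Alternatively, you can stay with your own Step 3 and make two changes:
\begin{itemize}
\item replace the factor $2$ by Young's inequality with an optimized weight;
\item use $\fro{\C(\I-\P_1)}^2\le(1-\alpha)\fro{\C}^2$ instead of $\le\fro{\C}^2$.
\end{itemize}
This also lands exactly on $1-\sqrt{1-\alpha}$.
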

\begin{lemma}\label{lem:cross-ub}
	Suppose \Cref{ass:align} holds. For any $\X\in\RR^{n\times p}$ whose rows lie in $\calS_\bTheta^\star$, 
	\begin{align*}
		\ab{\inp{\X}{\bPi^\star\H^\top}}\le \tau \sqrt{\frac{C_0}{c_0}}\fro{\X}\fro{\bPi^\star\H^\top}.
	\end{align*}
\end{lemma}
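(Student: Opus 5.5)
The idea is to move the whole inner product onto the $p\times K$ side, where the rows of $\X$ lying in $\calS_\theta^\star$ let us replace $\H$ by its projection $\calP_{\calS_\theta^\star}(\H)$. Assumption~\ref{ass:align} then controls that projection, and the two-sided spectral bound \eqref{membership-spread} on $\bPi^\star$ converts norms of $\H$ into norms of $\bPi^\star\H^\top$. Throughout, $\H=\H(\bTheta)$ for some $\bTheta\in\calM(\rho)$ (in the application, $\bTheta=\hat\bTheta$ on the event where it lies in $\calM(\rho)$), so that $\fro{\calP_{\calS_\theta^\star}(\H)}\le\tau\fro{\H}$.

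\textbf{Step 1: rewrite the inner product.} By the trace identity,
\begin{align*}
\inp{\X}{\bPi^\star\H^\top}=\tr\brac{\X^\top\bPi^\star\H^\top}=\inp{\X^\top\bPi^\star}{\H},
\end{align*}
where both $\X^\top\bPi^\star$ and $\H$ are $p\times K$ matrices. Each column of $\X^\top\bPi^\star$ is a linear combination of the rows of $\X$, so it lies in $\calS_\theta^\star$. Hence the columnwise projection onto $\calS_\theta^\star$ leaves $\X^\top\bPi^\star$ unchanged. Using self-adjointness of the projection column by column, we get
\begin{align*}
\inp{\X^\top\bPi^\star}{\H}=\inp{\X^\top\bPi^\star}{\calP_{\calS_\theta^\star}(\H)}.
\end{align*}

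\textbf{Step 2: Cauchy--Schwarz and Assumption~\ref{ass:align}.} Applying Cauchy--Schwarz and then the assumption gives
\begin{align*}
\ab{\inp{\X}{\bPi^\star\H^\top}}\le\fro{\X^\top\bPi^\star}\,\fro{\calP_{\calS_\theta^\star}(\H)}\le\op{\bPi^\star}\fro{\X}\,\tau\fro{\H}.
\end{align*}

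\textbf{Step 3: return to $\bPi^\star\H^\top$.} We have
\begin{align*}
\fro{\bPi^\star\H^\top}^2=\tr\brac{\H\,\bPi^{\star\top}\bPi^\star\H^\top}\ge\sigma_K^2(\bPi^\star)\fro{\H}^2,
\end{align*}
so $\fro{\H}\le\fro{\bPi^\star\H^\top}/\sigma_K(\bPi^\star)$. Combining this with Step 2 gives the bound with constant $\tau\,\sigma_1(\bPi^\star)/\sigma_K(\bPi^\star)$. By \eqref{membership-spread}, $\sigma_1^2(\bPi^\star)\le C_0n/K$ and $\sigma_K^2(\bPi^\star)\ge c_0n/K$, so this ratio is at most $\sqrt{C_0/c_0}$, which is the claimed constant.

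There is no real obstacle here; the only points needing care are these:
\begin{itemize}
\item The columns of $\X^\top\bPi^\star$ (not merely the rows of $\X$) must lie in $\calS_\theta^\star$, since that is what justifies the columnwise projection in Step 1.
\item Assumption~\ref{ass:align} is only available for $\bTheta\in\calM(\rho)$. When the lemma is invoked inside Lemma~\ref{lem:overall-sum-lb}, this is guaranteed by the localization from Theorem~\ref{thm:main} together with the choice of $\rho$ in Theorem~\ref{thm:decouple-rate}.
\end{itemize}
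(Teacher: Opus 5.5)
Your proof is correct and follows the paper's argument essentially line by line. The paper writes the inner product as $\langle \bPi^{\star\top}\X, \H^\top\rangle$, which is the transpose of your $\langle \X^\top\bPi^\star, \H\rangle$, and uses that its rows lie in $\calS_\theta^\star$ to replace $\H$ by $\calP_{\calS_\theta^\star}(\H)$; it then combines $\|\bPi^\star\|$ with $\fro{\bPi^\star\H^\top}\ge\sigma_K(\bPi^\star)\fro{\H}$. You are also right to invoke \eqref{membership-spread} to get $\sigma_1(\bPi^\star)/\sigma_K(\bPi^\star)\le\sqrt{C_0/c_0}$; the paper attributes this last step to \Cref{ass:align} ``again,'' which is a slip.
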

By \Cref{lem:cross-ub}, we get
\begin{align*}
	\ab{\inp{\bDelta_\Pi\bTheta^{\star\top}+\bPi^\star\D\bTheta^{\star\top}}{\bPi^\star\H^\top}}\le \tau\sqrt{\frac{C_0}{c_0}}\fro{\bDelta_\Pi\bTheta^{\star\top}+\bPi^\star\D\bTheta^{\star\top}}\fro{\bPi^\star\H^\top}.
\end{align*}
We then proceed as
\begin{align*}
	\fro{\bDelta_\Pi\bTheta^{\star\top}+\bPi^\star\bDelta_{\Theta}^\top }^2\ge &\fro{\bDelta_\Pi\bTheta^{\star\top}+\bPi^\star\D\bTheta^{\star\top}}^2+\fro{\bPi^\star\H^\top}^2\\
	&-2 \tau \sqrt{\frac{C_0}{c_0}}\fro{\bDelta_\Pi\bTheta^{\star\top}+\bPi^\star\D\bTheta^{\star\top}}\fro{\bPi^\star\H^\top}\\
	\ge & \brac{1-\tau\sqrt{\frac{C_0}{c_0}}}\brac{\fro{\bDelta_\Pi\bTheta^{\star\top}+\bPi^\star\D\bTheta^{\star\top}}^2+\fro{\bPi^\star\H^\top}^2}.
\end{align*}
where in the last inequality we've used $\tau<\sqrt{c_0/C_0}$ and  $x^2+y^2-2t xy\ge (1-t)(x^2+y^2)$ for $t\in[0,1)$.  Using $\fro{\A}^2+\fro{\B}^2\ge \fro{\A+\B}^2/2$ and \Cref{lem:sum-lb}, we arrive at the desired result. 
\subsection[Proof of sum lower-bound lemma]{Proof of \Cref{lem:sum-lb}}

\begin{proof}
	Notice that 
	\begin{align*}
		\fro{\bDelta_\Pi\bTheta^{\star\top}+\bPi^\star\D\bTheta^{\star\top}}^2= \tr\brac{(\bDelta_\Pi+\bPi^\star\D)\bTheta^{\star\top}\bTheta^{\star}(\bDelta_\Pi+\bPi^\star\D)^\top }\ge c_0s_\star^2\fro{\bDelta_\Pi+\bPi^\star\D}^2.
	\end{align*}
	Define $\calD := \{\X\in\mathbb{R}^{n\times K}: \X\mathbf 1_K=0\ \text{rowwise}\} $ and $\calC_0 := \{\bPi^\star \D: \D\ \text{diagonal}\}$, and note that $\bDelta_{\Pi}\in\calD$ and $\bPi^\star\D\in\calC_0$. Firstly, we show that 
	\begin{align*}
		\fro{\C+\D}^2\ge \frac{c_0}{2C_0K}\brac{\fro{\C}^2+\fro{\D}^2},\qquad \forall \C\in\calC_0,\quad \D\in\calD.
	\end{align*}
	By definition, $\calP_{\calD^\perp}(\A)=\frac{1}{K} \A \mathbf{1}_K\mathbf{1}_K^\top$ for any $\A$. Fix any $\C=\bPi^\star\D \in\calC_0$ and let $\d:=(D_{11},\cdots,D_{KK})$, we get
	\begin{align*}
		\text{dist}(\C,\calD)^2&=\fro{\calP_{\calD^\perp}(\C)}^2=\frac{1}{K^2}\sum_{i=1}^n\op{\C_{i,\cdot}\mathbf{1}_K\mathbf{1}_K^\top}^2\\
		&= \frac{1}{K}\sum_{i=1}^n\brac{(\bPi^\star\d)_{i}}^2=\frac{1}{K}\op{\bPi^\star\d}^2\ge \frac{c_0 n}{K^2}\op{\d}^2.
	\end{align*} 
	On the other hand, we have
	\begin{align*}% \fro{\C}^2=\fro{\bPi^\star\D}^2=\sum_{k=1}^KD_{kk}^2\op{\bPi_{\cdot,k}}^2\le \frac{C_0 n}{K}\op{\d}^2.
        \fro{\C}^2=\fro{\bPi^\star\D}^2\le \frac{C_0 n}{K}\op{\d}^2.
	\end{align*}
	We thus conclude that 
	\begin{align*}
		\frac{\mathrm{dist}(\C,\calD)^2}{\fro{\C}^2}\ge\frac{c_0}{C_0 K}=:\alpha.
	\end{align*}
	Write $\C=\calP_{\calD}(\C)+\calP_{\calD^\perp_0}(\C)$, then we have
	\begin{align*}
		\fro{\calP_{\calD}(\C)}^2=\fro{\C}^2-\fro{\calP_{\calD^\perp}(\C)}^2\le (1-\alpha)\fro{\C}^2.
	\end{align*}
	Define $t:=\sqrt{1-\alpha}\in(0,1)$. Then for any $\C\in \calC_0$ and $\D\in\calD$, we have
	\begin{align*}
		\inp{\C}{\D}=\inp{\calP_{\calD}(\C)+\calP_{\calD^\perp_0}(\C)}{\D}\le  \op{\calP_{\calD}(\C)}_F\op{\D}_F\le t\op{\C}_F\op{\D}_F.
	\end{align*}
	We thus arrive at
	\begin{align*}
		\fro{\C+\D}^2&=\fro{\C}^2+\fro{\D}^2+2\inp{\C}{\D}\\
		&\ge \fro{\C}^2+\fro{\D}^2-2t\op{\C}_F\op{\D}_F\\
		&\ge (1-t)\brac{\fro{\C}^2+\fro{\D}^2}.
	\end{align*}
	Use $1-\sqrt{1-\alpha}\ge \alpha/2$ for small $\alpha$ to obtain the desired inequality.

    From here, we have $\fro{\bDelta_\Pi\bTheta^{\star\top}+\bPi^\star\D\bTheta^{\star\top}}^2 \geq c_0 s_\star^2 \frac{c_0}{2C_0K}(\fro{\bDelta_\Pi}^2 + \fro{\bPi^*\D}^2)$. To conclude, we use $\fro{\bDelta_\Pi\bTheta^{*\top}}^2 \leq C_0 s_\star^2 \fro{\bDelta_\Pi}^2$ and $\fro{\bPi^* \D}^2 \leq C_0 s_\star^2 \fro{\bPi^* \D \bTheta^{*\top}}^2$.
\end{proof}

\subsection[Proof of cross upper-bound lemma]{Proof of \Cref{lem:cross-ub}}
Notice that $\inp{\X}{\bPi^\star\H^\top}=\inp{\bPi^{\star\top}\X}{\H^\top}$. Every row of $\bPi^{\star\top}\X$ lies in $\calS_\bTheta^\star$, and hence orthogonal to $\calP_{\S_\bTheta^{\star\perp}}(\H)$. By \Cref{ass:align}, we get
\begin{align*}
	\ab{\inp{\bPi^{\star\top}\X}{\H^\top}}\le \fro{\bPi^{\star\top}\X}\fro{\calP_{\S_\bTheta^\star}(\H)}\le \tau\op{\bPi^{\star}}\fro{\H}\fro{\X}.
\end{align*}
On the other hand, $\fro{\bPi^\star\H^\top}\ge \sigma_{\min}(\bPi^\star)\fro{\H}$. We thus complete the proof by using \Cref{ass:align} again.

\subsection{Proof of Lemma \ref{lem:verify-gen-model}}
Note that $\sqrt{\mu_1 K/n} = \|\U^*\|_{2,\infty}$ and $\sqrt{\mu_2 K/p} = \|\V^*\|_{2,\infty}$. We observe that:
\begin{align*}
    \|\U^*\|_{2, \infty} &= \|\U^*_{S^*,:}\|_{2, \infty} \leq \|\U^*\| = 1/\sigma_K(\bPi^*) \lesssim 1/\sqrt{n} \\
    \|\V^*\|_{2,\infty} &= \|\bTheta^* \U_{S^*,:}^{-\top} \bSigma^{-1}\|_{2,\infty} \leq \|\bTheta^* \|_{2,\infty} \| \U_{S^*,:}^{-\top} \bSigma^{-1}\| \leq \|\bTheta^* \|_{2,\infty} \| \U_{S^*,:}^{-1} \| /\sigma_K(\bPi^* \bTheta^{*\top}) \\
    &= \|\bTheta^* \|_{2,\infty} \sigma_1(\bPi^*) /\sigma_K(\bPi^* \bTheta^{*\top}) \leq \|\bTheta^* \|_{2,\infty} \kappa(\bPi^*)/\sigma_K(\bTheta^*) \lesssim \|\bTheta^*\|_{2,\infty}/\sqrt{p}.
\end{align*}
Use $d=n/5$ to obtain the result.

\section{Additional Theory \& Algorithm Details}\label{sec:AdditionalTheory}

\subsection{Identifiability of Multimodal Model}\label{sec:ident-mm}

Here, we present the generalization of our definition of identifiability for multimodal datasets, which is used in Proposition \ref{prop:mm-exp-id}. A multimodal LrMMM model given by parameter sets $(\bpi^*_i)_{i=1}^n$ and $((\M^*_{k,m})_{k=1}^K)_{m=1}^M$ is said to be identifiable if for any other parameter sets $(\bpi'_i)_{i=1}^n$ and $((\M'_{k,m})_{k=1}^K)_{m=1}^M$ satisfying $\sum_{k=1}^K \bpi^*_{ik} \M^*_{k,m} = \sum_{k=1}^K \bpi'_{ik} \M'_{k,m}$ for all $i \in [n]$ and $m \in [M]$, there exists a permutation $\sigma \in S_K$ such that $\bpi^*_{ik} = \bpi'_{i\sigma(k)}$ for all $i \in [n]$ and $\M^*_{k, m} = \M'_{\sigma(k), m}$ for all $k \in [K], m \in [M]$.

\subsection{Choice of HOOI in Initialization}\label{sec:ChoiceOfHOOI}

To estimate $\U^*$ in \eqref{spectral-unfolded}, instead of using truncated SVD of $\Y$, we use a tensor decomposition method that exploits the tensor structure of our dataset. In particular, let $\bcalX$ denote the $d_1 \times d_2 \times n$ tensor satisfying $\bcalX(:,:,i) = \X_i$.
\begin{lemma}\label{lem:rank-size}
    Suppose \Cref{ass:pure-subjects} holds and $\text{rank}(\bTheta^*)=K$, and let $r_U := \text{rank}(M_1(\EE[\bcalX]))$ and $r_V := \text{rank}(M_2(\EE[\bcalX]))$. Then $r_U\vee r_V \leq \sum_k r_k$.
\end{lemma}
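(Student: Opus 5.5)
The plan is to write $\EE[\bcalX]$ explicitly as a Tucker-type product whose mode-1 and mode-2 factor matrices have $\sum_k r_k$ columns. The rank bound then follows because each unfolding factors through a matrix of that width.

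First, take the compact SVD of each basis matrix, $\M^*_k = \U_k^*\bSigma_k^*\V_k^{*\top}$ with $\U_k^*\in\OO_{d_1,r_k}$ and $\V_k^*\in\OO_{d_2,r_k}$. Set $\mathring r := \sum_k r_k$ and concatenate the factors columnwise:
\[
\U^*_\cup = (\U^*_1|\cdots|\U^*_K)\in\RR^{d_1\times\mathring r},\qquad
\V^*_\cup = (\V^*_1|\cdots|\V^*_K)\in\RR^{d_2\times\mathring r}.
\]
Next, define a core tensor $\bcalC^*\in\RR^{\mathring r\times\mathring r\times K}$. Its $k$-th frontal slice is block diagonal, with $\bSigma_k^*$ in the $k$-th diagonal block and zeros in every other block. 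With this choice,
\[
\M_k^* = \U^*_\cup\,\bcalC^*(:,:,k)\,\V^{*\top}_\cup,
\]
so the tensor $\bcalM^*$ with slices $\M^*_k$ satisfies $\bcalM^* = \bcalC^*\times_1\U^*_\cup\times_2\V^*_\cup$.

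By model \eqref{model}, $\EE[\X_i] = \sum_k \bpi^*_{ik}\M_k^*$. Hence
\[
\EE[\bcalX] = \bcalM^*\times_3\bPi^* = \bcalC^*\times_1\U^*_\cup\times_2\V^*_\cup\times_3\bPi^*.
\]

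Now use the standard unfolding identity for Tucker products:
\[
M_1(\EE[\bcalX]) = \U^*_\cup\, M_1(\bcalC^*)\,(\text{Kronecker product of } \bPi^* \text{ and } \V^*_\cup)^\top.
\]
This shows that the column space of $M_1(\EE[\bcalX])$ lies in the column space of $\U^*_\cup$, so $r_U\le\mathring r$. Alternatively, one can see this directly: every column of the mode-1 unfolding is a column of some $\EE[\X_i]$, which lies in $\mathrm{span}(\U^*_1,\dots,\U^*_K)$. Applying the same argument to mode 2 with $\V^*_\cup$ gives $r_V\le\mathring r$.

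There is no real obstacle in this argument. The only care needed is in the bookkeeping of the block-diagonal core and in the ordering of the Kronecker product in the unfolding convention. The column-space version of the argument avoids that ordering issue entirely, so I will present it as the main line.

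Note that Assumption \ref{ass:pure-subjects} and the condition $\rank(\bTheta^*)=K$ are not needed for the upper bound. They only guarantee that the mode-3 rank is exactly $K$.
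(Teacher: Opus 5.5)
Your proposal is correct and matches the paper's proof. The paper also concatenates the SVD factors into $\U^*_\cup$ and $\V^*_\cup$, builds the block-diagonal core $\bcalC^*$, writes $\EE[\bcalX]=\bcalC^*\times_1\U^*_\cup\times_2\V^*_\cup\times_3\bPi^*$, and reads off $r_U,r_V\le\sum_k r_k$ from the mode-1 and mode-2 unfoldings; your column-space phrasing is a convention-free restatement of that step, and your remark that the pure-subject and full-rank assumptions are not needed for the upper bound is accurate.
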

\noindent By Lemma \ref{lem:rank-size}, $\EE[\bcalX]$ can be written as $\EE[\bcalX] = \bcalC^* \times_1 \W^*_1 \times_2 \W^*_2 \times_3 \W^*_3$,
where $\bcalC \in \RR^{r_U \times r_V \times K}, \W^*_1 \in \OO^{d_1 \times r_U}$, $\W^*_2 \in \OO^{d_2 \times r_V}$, and $\W^*_3 \in \OO^{n \times K}$.
Accordingly, the mode-3 unfolding of this tensor satisfies \begin{equation}
    \label{eq:tensor-unfolded}
    \EE[\Y] = M_3(\E[\bcalX]) = \W^*_3 M_3(\bcalC^*) [\W^*_1 \otimes \W^*_2]^\top.
\end{equation}
Thus, the mode-3 subspace $\W^*_3$ from (\ref{eq:tensor-unfolded}) and the left singular subspace $\U^*$ from (\ref{spectral-unfolded}) represent the same singular subspace up to rotational ambiguity. Therefore, the tensor decomposition method Higher-Order Orthogonal Iteration (HOOI) with ranks $(r_U, r_V, K)$ may also be used to estimate $\U^*$. As a final remark, we note that in practice it is necessary to estimate $r_U, r_V$. This can be done using a scree plot method on the mode-wise unfoldings of $\bcalX$.

\subsection{Multimodal Algorithm Details}\label{sec:MultimodalAlgorithmDetails}

Here, we discuss in detail our multimodal algorithm: Algorithm \ref{alg:Algorithm-MM}. The algorithm is largely the same in principal as Algorithm \ref{alg:Algorithm}, with a few key differences. Firstly, we combine initializations done separately for each modality as follows. An HOOI-based initial estimate $\tilde{\bPi}_m^{(0)}$ of the membership matrix is done separately for each modality according to the same steps as in Algorithm \ref{alg:Algorithm}. Then, we align each modality's membership matrix estimate with the membership matrix estimate of modality 1 by right-multiplying by the permutation matrix that brings $\tilde{\bPi}_m^{(0)}$ closest to $\tilde{\bPi}_1^{(0)}$ in squared Frobenius norm. We then average these aligned matrices and apply Algorithm \ref{alg:GoM} to the average to obtain our final initialized membership matrix $\hat{\bPi}^{(0)}$.

In the iterative portion of the algorithm, estimation of basis matrices proceeds separately for each modality. That is, let $\bcalX_m$ be the $d_{1,m} \times d_{2,m} \times n$ tensor satisfying $\bcalX_m(:,:,i) = \X_{i,m}$, and let $\Y_m$ be its mode-3 unfolding. We regress $\Y_m$ on $\hat{\bPi}^{(t-1)}$ to obtain initial estimates of the basis matrices for modality $m$, and then refine those estimates via SVD as before. For estimation of the membership matrix, the initial estimate of membership scores is obtained for each subject by minimizing the weighted sum of $M$ separate objective functions using pre-specified nonnegative weights $\gamma=(\gamma_1,...,\gamma_M)$, that is, we solve $\text{argmin}_{\pi \in \Delta^{K-1}} \sum_{m=1}^M \gamma_m \| \X_{i,m} - \sum_k \pi_k \hat{\M}_{k,m}^{(t)}\|_F^2$. This incorporation of modality weights allows a practitioner to ensure the data from each modality is contributing appropriately to the estimation of the membership matrix according to its importance and dataset size. The re-estimation of pure subjects proceeds as in Algorithm \ref{alg:Algorithm}, since we are left with only one membership matrix representing all subjects. 

\subsection{Multimodal Datasets with Vector- and Matrix-valued Data}\label{sec:multimodal-vector}

Suppose that for each subject $i$, we have access to $M$ matrix-valued observations $\{ \X_{i,m} \}_{m=1}^M$, with $\X_{i,m} \in \RR^{d_{1,m} \times d_{2,m}}$, and $V$ vector-valued observations $\{ \x_{i,m} \}_{m=M+1}^{M+V}$, with $\x_{i,m} \in \RR^{d_m}$. We obtain an initial membership estimate $\tilde{\bPi}_m^{(0)}$ separately for each of the $M+V$ modalities in the following manner. For $m \leq M$, apply the initialization specified in steps \ref{alg-mm:form-tensor}, \ref{alg-mm:HOOI}, and \ref{alg-mm:GoM} of Algorithm \ref{alg:Algorithm-MM} to $\{ \X_{i,m} \}_{i=1}^n$, and for $m\geq M+1$, we apply the method in \cite{ChenGu24}, which consists of forming the $n \times d_m$ matrix $\R_m$ whose $i$-th row is $\x_{i,m}$, obtaining $\hat{\U}_m$ the left-singular subspace of $\R_m$ using rank-$K$ SVD, and applying Algorithm \ref{alg:GoM} to $\hat{\U}_m$. We align and aggregate our estimates according to steps \ref{alg-mm:align} and \ref{alg-mm:renormalize-init}, this time using all $M+V$ membership matrix estimates. In the iterative part of the algorithm, we follow the same steps for the basis matrix estimation for matrix-valued modalities, and for vector-valued modalities, we regress $\R_m$ on $\hat{\bPi}^{(t-1)}$ to obtain a $d_m \times K$ matrix whose $k$-th column is \textit{basis vector} estimate $\hat{\btheta}_{k,m}^{(t)}$. We forgo SVD of this vector, since we have no assumption that this vector has the structure of a low-rank matrix. The membership estimation step proceeds similarly, now minimizing the objective function \begin{align*}
    \sum_{m=1}^M \gamma_m \|\text{vec}(\X_{i,m}) - \sum_k \pi_k (\hat{\bTheta}_m^{(t)})_{:,k}\|_2^2 + \sum_{m=M+1}^{M+V} \gamma_m \| \x_{i,m} - \sum_k \pi_k \hat{\btheta}_{k,m}^{(t)}\|_2^2.
\end{align*}
Our output consists of basis matrices $(( \hat{\M}_{k,m} )_{k=1}^K)_{m=1}^M$, basis vectors $(( \hat{\btheta}_{k,m} )_{k=1}^K)_{m=M+1}^{M+V}$, and membership matrix $\hat{\bPi}$. The $k$-th basis vector for modality $m\geq M+1$ is regarded as the prototypical pattern of the $k$-th extreme latent profile within modality $m$.

\subsection[Consistency without the alignment condition]{Consistency when \Cref{ass:align} does not hold}\label{sec:still-consist}

\begin{proposition}
    Provided Assumptions \ref{ass:noise}, \ref{ass:pure-subjects}, \ref{ass:snr}, and \ref{ass:separation} hold, there exists a permutation matrix $\P$ such that the following bounds hold for the two step estimator \eqref{eq:estimator-twostep} with probability at least $1 - O((n \vee p)^{-10})$: 
    \begin{align}
    \begin{split}
        &\frac{1}{\sqrt{d_1d_2K}} ||\hat{\bTheta} - \bTheta^*\P ||_F \leq C \sigma \frac{\sqrt{nK + \sum_k r_k (d_1 + d_2)}}{\sqrt{Kd_1d_2}}\sqrt{\log(n \vee s_\star)} \\
        &\frac{1}{\sqrt{nK}}||\hat{\bPi} - \bPi^*\P||_F \leq C' \sigma \frac{\sqrt{n + (\sum_k r_k )(d_1 + d_2)/K}}{\sqrt{s_\star^2}}\sqrt{\log(n\vee s_\star)}.
    \end{split}
    \end{align}
    If it is also true that 
    \begin{align}
        \frac{d_1d_2 \wedge s_\star^2}{n} \gg \log(n \vee s_\star) \qquad K(d_1d_2 \wedge s_\star^2) \gg (\sum_k r_k)(d_1 + d_2)\log(n \vee s_\star),
    \end{align}
    then $\frac{1}{\sqrt{d_1d_2K}} ||\hat{\bTheta} - \bTheta^*\P ||_F \overset{P}{\to} 0$ and $\frac{1}{\sqrt{nK}}||\hat{\bPi} - \bPi^*\P||_F \overset{P}{\to} 0$.
\end{proposition}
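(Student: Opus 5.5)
We work on the event $\{\hat S = S^*\}$ (up to the relabeling $\P$ from \Cref{prop:est-Shat}), intersected with the event of \Cref{thm:main}. Both hold with probability at least $1-O((n\vee p)^{-10})$. On this event the reconstruction error $\fro{\hat\bPi\hat\bTheta^\top-\bPi^*\bTheta^{*\top}}$ is at most $\delta_W:=C\sigma\sqrt{(nK+\sum_k r_k(d_1+d_2))\log(n\vee s_\star)}$. The plan is to convert this product bound into separate factor bounds by crude arguments that avoid the alignment condition of \Cref{ass:align}, exploiting only the pure-subject rows.

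For the basis bound, restrict the product to the rows in $S^*$. Since $\hat\bPi_{S^*,:}=\bPi^*_{S^*,:}\P=\I_K$ after relabeling, the restricted rows give exactly $\hat\bTheta^\top-(\bTheta^*\P)^\top$. Hence
\[
\fro{\hat\bTheta-\bTheta^*\P}\le\fro{\hat\bPi\hat\bTheta^\top-\bPi^*\bTheta^{*\top}}\le\delta_W,
\]
which is the same step already used in the proof of \Cref{thm:decouple-rate}. Dividing by $\sqrt{d_1d_2K}$ gives the first bound.

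For the membership bound, decompose
\[
(\hat\bPi-\bPi^*\P)(\bTheta^*\P)^\top=\hat\bPi\hat\bTheta^\top-\bPi^*\bTheta^{*\top}-\hat\bPi(\hat\bTheta-\bTheta^*\P)^\top .
\]
By \eqref{basis-spread}, the left side has Frobenius norm at least $\sqrt{c_0}\,s_\star\fro{\hat\bPi-\bPi^*\P}$. For the last term, $\op{\hat\bPi}\le\fro{\hat\bPi}\le\sqrt n$ because every row lies in the simplex, so its norm is at most $\sqrt n\,\delta_W$. This yields $\fro{\hat\bPi-\bPi^*\P}\lesssim(1+\sqrt n)\delta_W/s_\star$. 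Dividing by $\sqrt{nK}$ gives $\sigma\sqrt{(nK+\sum_k r_k(d_1+d_2))/K}\sqrt{\log}/s_\star$, which is the stated rate $C'\sigma\sqrt{n+(\sum_k r_k)(d_1+d_2)/K}\sqrt{\log}/s_\star$.

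The main obstacle is this $\sqrt n$ loss in bounding $\hat\bPi$ through its operator norm. One could try to sharpen it to $\op{\hat\bPi}\lesssim\sqrt{n/K}$ via a perturbation of \eqref{membership-spread}, but that is circular without already controlling $\hat\bPi-\bPi^*\P$, so I would accept the crude bound since it matches the statement.

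For consistency, plug the displayed conditions into the two bounds. The basis bound squared is of order
\[
\frac{n\log}{d_1d_2}+\frac{(\sum_k r_k)(d_1+d_2)\log}{K d_1d_2},
\]
and both terms vanish under the assumed growth conditions with $d_1d_2\ge d_1d_2\wedge s_\star^2$. The membership bound squared is of order
\[
\frac{n\log}{s_\star^2}+\frac{(\sum_k r_k)(d_1+d_2)\log}{K s_\star^2},
\]
which vanishes by the same conditions applied to $s_\star^2$. Since the events have probability tending to one, convergence in probability follows.
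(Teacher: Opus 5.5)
Your proposal is correct and follows essentially the same route as the paper. On the intersection of the events of \Cref{prop:est-Shat} and \Cref{thm:main}, you restrict to the pure-subject rows to get $\fro{\hat\bTheta-\bTheta^*\P}\le\delta_W$. You then lower-bound $\fro{(\hat\bPi-\bPi^*\P)\P^\top\bTheta^{*\top}}$ via $\sigma_K(\bTheta^*)\gtrsim s_\star$ and close with the triangle inequality and $\op{\hat\bPi}\le\sqrt n$. The only cosmetic difference is that you bound $\op{\hat\bPi}$ by $\fro{\hat\bPi}$, while the paper uses $\sqrt{\|\hat\bPi\|_1\|\hat\bPi\|_\infty}$; both give $\sqrt n$.
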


\begin{proof} We perform our analysis under event that both \Cref{prop:est-Shat} and \Cref{thm:main} hold, which is true with probability at least $1 - O((n \vee p)^{-10})$. Under this event, we have that $(\hat{\bPi}, \hat{\bTheta})$ satisfies
    \begin{align}
        ||\hat{\bTheta} - \bTheta^* \P||_F \leq ||\hat{\bPi}\hat{\bTheta}^T - \bPi^* \P \P^T \bTheta^{*T} ||_F \leq C \sigma \sqrt{(nK + \sum_k r_k (d_1 + d_2))\log(n \vee s_\star)}
    \end{align}
    since $\hat{\bPi}_{\hat{S},:} = \bPi^*_{\hat{S},:}\P = \I_K$. Thus, 
    \begin{align}
        \frac{1}{\sqrt{d_1d_2K}} ||\hat{\bTheta} - \bTheta^*\P ||_F \leq C \sigma \frac{\sqrt{nK + \sum_k r_k (d_1 + d_2)}}{\sqrt{Kd_1d_2}}\sqrt{\log(n \vee s_\star)}.
    \end{align}
    As for consistency of $\hat{\bPi}$, we observe that \begin{align}
    \begin{split}
        ||\hat{\bPi} - \bPi^*\P||_F \sigma_{\text{min}}(\bTheta^*) &\leq ||(\hat{\bPi} - \bPi^* \P) \P^T \bTheta^{*T} ||_F \\
        &\leq C\sigma \sqrt{(nK + \sum_k r_k(d_1 + d_2))\log(n\vee s_\star)} + ||\hat{\bPi}(\hat{\bTheta}- \bTheta^*\P)^T||_F \\
        &\leq C\sigma \sqrt{(nK + \sum_k r_k(d_1 + d_2))\log(n\vee s_\star)} + \sigma_{\text{max}}(\hat{\bPi}) || \hat{\bTheta} - \bTheta^*\P||_F
    \end{split}
    \end{align}
    where the second inequality comes from the triangle inequality and \Cref{thm:main}. To conclude, we use the assumption that $\sigma_{\text{min}}(\bTheta^*) \asymp s_\star$, and we observe that the simplex constraint on the rows of $\hat{\bPi}$ implies $\sigma_{\text{max}}(\hat{\bPi}) \leq \sqrt{||\hat{\bPi}||_1 ||\hat{\bPi}||_\infty } = \sqrt{1 \cdot n}$. Hence, we arrive at \begin{align}
        \frac{1}{\sqrt{nK}}||\hat{\bPi} - \bPi^*\P||_F \leq C' \sigma \frac{\sqrt{n + (\sum_k r_k )(d_1 + d_2)/K}}{\sqrt{s_\star^2}}\sqrt{\log(n\vee s_\star)}.
    \end{align}
\end{proof}

\section{Additional Simulations and Simulation Details}\label{sec:AdditionalSimulations}

\subsection{Generative Models}\label{sec:GenerativeModels}

As mentioned in Section \ref{sec:Simulations}, we use separate generative models for Bernoulli and Normal data. In specifying the form of these generative models, we take inspiration from \cite{LX25}. For both models, the membership scores are generated in the following way to enforce the pure subjects condition in Assumption \ref{ass:pure-subjects} and the separation condition in Assumption \ref{ass:separation}:
\begin{equation}
\begin{split}\label{generate-pi}
    \bpi_i^* &= \e_i \text{ for } i = 1,...,K;\quad
    \bpi^*_i \overset{i.i.d.}{\sim} \text{Dirichlet}(\mathbf{1}_K) \mathbbm{1}(\min_{k\in[K]} \| \bpi^*_i - \e_k\|_2>0.05 ) \text{ for } K+1 \leq i \leq n
\end{split}
\end{equation}

The data we generate for the Bernoulli model uses stochastic block models as basis matrices. This is a natural choice since the stochastic block model is a common low-rank model for network structures, and networks are often modeled using the Bernoulli distribution. For both models, we set $d_1=d_2=d$ and $r_k = r$, and we remark that the basis matrices in this Bernoulli model will be symmetric. We regard basis matrix $\M_k^*$ as a prototypical network consisting of $d$ nodes belonging to $r \ll d$ communities. We set $\M_k^* = \Z_k \B_k \Z_k^\top$, where $\Z_k$ is a $d \times r$ matrix encoding node community membership with rows satisfying $(\Z_k)_{i,:} = \e_{s_{ik}}$ and $s_{ik} \sim \text{Unif}([r])$. We let $\xi$ be a $K$-dimensional vector of evenly-spaced real numbers from $(\text{maxp} \times \text{gapratio})$ to $\text{maxp}$ representing the strength of the signal of each of the $K$ basis matrices, and we set $\text{gapratio} = 0.6$ for all experiments. We let $\B_k = (\I_{r} + {\boldsymbol 1}{\boldsymbol 1}^\top)\xi_k/2$ be an $r \times r$ matrix representing edge probabilities between nodes of each of the $r$ communities. Finally, we generate $\X_i(j_1, j_2) \overset{iid}{\sim} \text{Bern}(\sum_{k=1}^K \bpi^*_{ik} \M^*_k(j_1, j_2))$.

The normal model is generated as follows. To obtain the basis matrix $\M^*_k$, generate two $d \times r$ matrices $\U_k$ and $\V_k$ independently, each being the matrix of the left singular vectors of a $d \times r$ matrix of i.i.d. $N(0,1)$ entries. $\underline{\bSigma}$ is set to be the diagonal matrix of the decreasing sequence of $r$ evenly-spaced values from $3$ to $1$. We let our unscaled basis matrix be $\tilde{\M}_k^* = \U_k \tilde{\bSigma} \V_k^\top$, and combine basis matrices to form $\tilde{\bTheta}$ as $\tilde{\bTheta}_{:,k} = \text{vec}(\tilde{\M}^*_k)$ for $k=1,...,K$. Next, we rescale all basis matrices by the same scalar to ensure the model satisfies the signal-to-noise conditions in Assumption \ref{ass:snr}. In particular, obtain minimum singular value $\sigma_K(\tilde{\bTheta})$ and rescale $\bSigma = (d/\sigma_K(\tilde{\bTheta}))\tilde{\bSigma}$ to get $\M^*_k = \U_k \bSigma \V_k^\top$. Another way to see the necessity of this scaling condition is that without it, the magnitude of the entries tends to $0$, leading to an extremely low SNR. Finally, we generate $\X_i(j_1, j_2) \overset{iid}{\sim} \text{Norm}(\sum_{k=1}^K \bpi^*_{ik} \M^*_k(j_1,j_2), \sigma^2)$. Section \ref{sec:ModelCheck} contains a Lemma and simulation that demonstrate that these generative models satisfy the SNR conditions of Assumption \ref{ass:snr}.

\subsection{Consistency Simulation Details}

For each of $n \in \{600, 800, 1000, 1200, 1400 \}$, we simulate a multimodal dataset with $M=5$ modalities, with $d=n/5$. We do this by generating membership scores according to (\ref{generate-pi}) and generating basis matrices independently for each of the $5$ modalities with $K=3$ and $r=3$, and then generating each subject's $5$ $d \times d$ matrices independently using the basis matrices from the corresponding modality. We vary the sparsity and noise values, with $\sigma \in \{ 0.5, 1.0, 1.5, 2.0 \}$ and $\text{maxp} \in \{ 0.15, 0.25, 0.35, 0.45 \}$. In order to compare Algorithm \ref{alg:Algorithm} with Algorithm \ref{alg:Algorithm-MM}, we fit Algorithm \ref{alg:Algorithm} on the dataset consisting only of matrices from the first modality of all subjects, while Algorithm \ref{alg:Algorithm-MM} is fit on the dataset consisting of matrices from all five modalities for all subjects. We run each algorithm for 10 iterations, using $\gamma = (1,...,1)$ and $r_{U,m} = r_{V,m}= r_U = r_V = r \times K = 9$. We compute membership MSE $\min_{\P} \frac{1}{nK}\|\hat{\bPi} - \bPi^* \P \|_F^2$ for both algorithms. For Algorithm \ref{alg:Algorithm}, we compute basis MSE on only the first modality $\min_{\P} \frac{1}{Kd^2}\|\hat{\bTheta}_1 - \bTheta_1^* \P \|_F^2$, while for Algorithm \ref{alg:Algorithm-MM}, we average basis MSE over all modalities $\min_{\P} \frac{1}{Kd^2M}\sum_{m=1}^M \|\hat{\bTheta}_m - \bTheta_m^* \P \|_F^2$, with all errors minimized over all permutation matrices $\P$. For each $n$, noise/sparsity value, and error distribution, we average the results over $30$ simulations. The results are displayed in Figure \ref{fig:Consistency}. Both algorithms are consistent. For each (model, error-type) pair, the unimodal and multimodal results have been plotted on the same scale to facilitate comparison. 

\subsection{Generative models satisfy SNR Conditions}\label{sec:ModelCheck}

Figure \ref{fig:ModelCheck} provides evidence that the generative models used in our simulation studies and specified in full detail in section \ref{sec:GenerativeModels} satisfy Assumption \ref{ass:snr} when $K$, $\sigma$, and $r$ are fixed and $d=n/5$. The first two columns of this figure indicate that $\sigma_K(\bPi^*)$ and $\sigma_1(\bPi^*)$ scale like $\sqrt{n}$, demonstrating that (\ref{membership-spread}) holds. The next two columns indicate $\sigma_K(\bTheta^*)$ and $\sigma_1(\bTheta^*)$ scale like $\sqrt{p}$, demonstrating that (\ref{basis-spread}) holds when $s_\star = \sqrt{p}$. The last column indicates that $\|\bTheta^*\|_{2,\infty}$ is $O(1)$ for the Bernoulli model and $O(\log(p))$ for the normal model. Lemma \ref{lem:verify-gen-model} demonstrates how these results altogether show that condition (\ref{snr-incoherence}) holds. Lastly, condition (\ref{snr-sub-Gaussian}) holds trivially. Hence, the generative models chosen for our simulations satisfy Assumptions \ref{ass:noise}, \ref{ass:pure-subjects}, \ref{ass:snr}, and \ref{ass:separation}. 

\begin{figure}[htbp]
    \centering
    \begin{subfigure}[t]{0.8\linewidth}
        \centering
        \includegraphics[width=\linewidth]{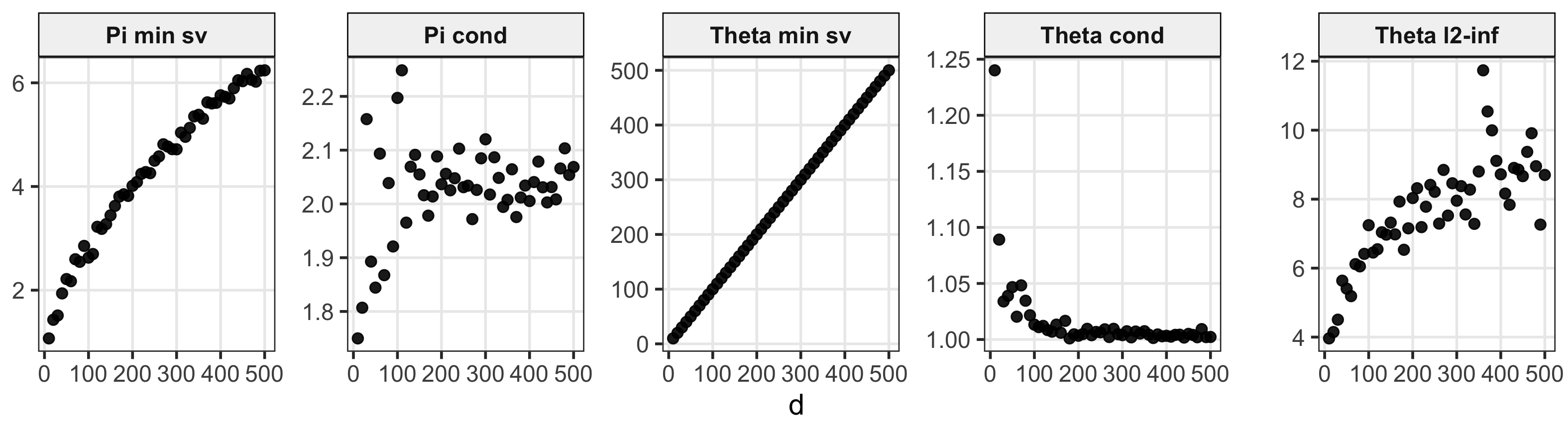}
        \subcaption{Normal}
    \end{subfigure}
    \begin{subfigure}[t]{0.8\linewidth}
        \centering
        \includegraphics[width=\linewidth]{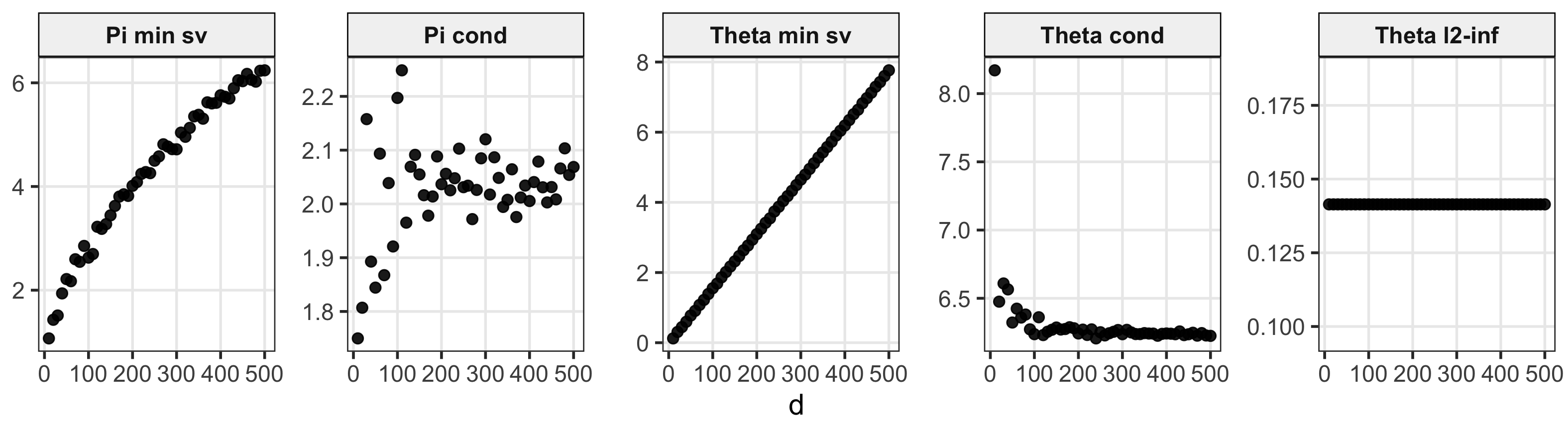}
        \subcaption{Bernoulli}
    \end{subfigure}

    \caption{Verification that our simulated data satisfies Assumption \ref{ass:snr}. Minimum singular value $\sigma_K(\bPi^*)$ and condition number $\kappa(\bPi^*)$ of $\bPi^*$, minimum singular value $\sigma_K(\bTheta^*)$ and condition number $\kappa(\bTheta^*)$ of $\bTheta^*$, and l2-infinity norm $\|\bTheta^*\|_{2,\infty}$ of $\bTheta^*$ for many values of $d$, where $d=n/5$.}
    \label{fig:ModelCheck}
\end{figure}

\begin{lemma}\label{lem:verify-gen-model}
    Fix $K$, $r$, and let $d = n/5$. Provided $\sigma_K(\bPi^*) \gtrsim \sqrt{n}$, $\kappa(\bPi^*) \lesssim 1$, $\sigma_K(\bTheta^*) \gtrsim \sqrt{p}$, and $\|\bTheta^*\|_{2,\infty} \lesssim \sqrt{p}\log^{-11}(p)$, condition (\ref{snr-incoherence}) holds. 
\end{lemma}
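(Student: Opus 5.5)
The plan is to bound the two incoherence quantities directly. We use $\sqrt{\mu_1K/n}=\|\U^*\|_{2,\infty}$ and $\sqrt{\mu_2K/p}=\|\V^*\|_{2,\infty}$, and show that each is $\lesssim(\log(n\vee p))^{-11}$. Throughout, $K$ and $r$ are fixed, $d=n/5$ and $p=d^2\asymp n^2$, so that $\log(n\vee p)\asymp\log p\asymp\log n$.

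\textbf{Bound on $\U^*$.} By \Cref{lem:convex-hull}, $\U^*=\bPi^*\U^*_{S^*,:}$, and each row of $\bPi^*$ lies in $\Delta_K$. Hence every row of $\U^*$ is a convex combination of the rows of $\U^*_{S^*,:}$, which gives
\[
\|\U^*\|_{2,\infty}\le\|\U^*_{S^*,:}\|_{2,\infty}\le\|\U^*_{S^*,:}\|.
\]
The identity $\U^*=\bPi^*\U^*_{S^*,:}$ also gives $\I_K=\U^{*\top}\U^*=\U^{*\top}_{S^*,:}\bPi^{*\top}\bPi^*\U^*_{S^*,:}$. It follows that $\sigma_K(\bPi^*)\,\sigma_K(\U^*_{S^*,:})\le1$ and $\sigma_1(\bPi^*)\,\sigma_K(\U^*_{S^*,:})\ge1$, and dually $\|\U^*_{S^*,:}\|\le1/\sigma_K(\bPi^*)$. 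Therefore $\|\U^*\|_{2,\infty}\lesssim n^{-1/2}$, which is far below $(\log n)^{-11}$.

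\textbf{Bound on $\V^*$.} Write $\EE[\Y]=\bPi^*\bTheta^{*\top}=\U^*\bSigma^*\V^{*\top}$. Multiplying on the left by $\U^{*\top}$ gives $\V^*=\bTheta^*\bPi^{*\top}\U^*\bSigma^{*-1}$. Substituting $\U^{*\top}=\U^{*\top}_{S^*,:}\bPi^{*\top}$ shows this equals $\bTheta^*(\U^{*}_{S^*,:})^{-\top}\bSigma^{*-1}$, using that $\U^*_{S^*,:}$ is invertible when $\rank(\bTheta^*)=K$. Therefore
\[
\|\V^*\|_{2,\infty}\le\|\bTheta^*\|_{2,\infty}\,\|\U^{*-1}_{S^*,:}\|\,/\sigma_K(\bPi^*\bTheta^{*\top}).
\]
From the display above, $\|\U^{*-1}_{S^*,:}\|=1/\sigma_K(\U^*_{S^*,:})\le\sigma_1(\bPi^*)$. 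Also $\sigma_K(\bPi^*\bTheta^{*\top})\ge\sigma_K(\bPi^*)\sigma_K(\bTheta^*)$. Combining these,
\[
\|\V^*\|_{2,\infty}\le\kappa(\bPi^*)\,\|\bTheta^*\|_{2,\infty}/\sigma_K(\bTheta^*)\lesssim\|\bTheta^*\|_{2,\infty}/\sqrt p\lesssim\log^{-11}(p).
\]

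\textbf{Conclusion and main obstacle.} Since $p\asymp n^2$, we have $\log^{-11}(p)\asymp(\log(n\vee p))^{-11}$, so both terms in (\ref{snr-incoherence}) are controlled. The main point needing care is the bookkeeping of the singular values of $\U^*_{S^*,:}$. The argument needs both $\|\U^*_{S^*,:}\|\le1/\sigma_K(\bPi^*)$ and $\|\U^{*-1}_{S^*,:}\|\le\sigma_1(\bPi^*)$. Both come from $\U^{*\top}_{S^*,:}(\bPi^{*\top}\bPi^*)\U^*_{S^*,:}=\I_K$: it shows that $\U^*_{S^*,:}$ has singular values between $1/\sigma_1(\bPi^*)$ and $1/\sigma_K(\bPi^*)$.
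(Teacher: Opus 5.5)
Your proposal is correct and follows the paper's proof essentially step by step. Both arguments use the convex-hull bound $\|\U^*\|_{2,\infty}\le\|\U^*_{S^*,:}\|\le 1/\sigma_K(\bPi^*)$ for the first term. For the second term, both use the representation $\V^*=\bTheta^*\U^{*-\top}_{S^*,:}\bSigma^{*-1}$ together with $\|\U^{*-1}_{S^*,:}\|\le\sigma_1(\bPi^*)$ and $\sigma_K(\bPi^*\bTheta^{*\top})\ge\sigma_K(\bPi^*)\sigma_K(\bTheta^*)$. One difference is worth keeping: you derive the singular values of $\U^*_{S^*,:}$ from $\U^{*\top}_{S^*,:}\bPi^{*\top}\bPi^*\U^*_{S^*,:}=\I_K$, which the paper never justifies, and its displayed chain even writes $\|\U^*\|=1/\sigma_K(\bPi^*)$ where $\|\U^*_{S^*,:}\|$ is meant.
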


\subsection{Rate Verification for Algorithm \ref{alg:Algorithm}}

This simulation empirically verifies that Algorithm \ref{alg:Algorithm} attains the same estimation error rates obtained in Theorem \ref{thm:decouple-rate} for the two-step estimator \eqref{eq:estimator-twostep}. Using the same generative models described in section \ref{sec:GenerativeModels}, we run two sets of simulations. In the first, we vary $d\in\{ 100, 120, ..., 300 \}$ while $n=800$ and $r=3$ are fixed. In the second, we vary $n \in \{ 500, 550, ..., 1000 \}$ while $d=200$ and $r=3$ are fixed. We run each simulation $100$ times and average MSE over all simulations. Results are displayed in Figure \ref{fig:RateVerification}, with membership error plotted vs $1/d$ and basis error plotted vs $1/n$. Based on Theorem \ref{thm:decouple-rate}, the error of the two-step estimator \eqref{eq:estimator-twostep} is expected to scale like $O(1/d)$ when $r$ and $n$ are fixed, and like $O(1+1/n)$ when $d$ and $r$ are fixed, modulo the slowly-growing logarithmic term $\log(n \vee s_\star)$ which is disregarded for the purpose of this simulation. The plots indicate that Algorithm \ref{alg:Algorithm} attains these same rates. We note that in all cases the rates are sharp except for the case of membership estimation as a function of $d$, in which case the rate holds but is not sharp. A potential explanation of this phenomenon may lie in the difference between how the two-step estimator and Algorithm \ref{alg:Algorithm} estimate the pure subjects. In the regime where $d$ grows but $n$ is fixed, condition (\ref{snr-incoherence}) of Assumption \ref{ass:snr} no longer holds, and consequently, Proposition \ref{prop:est-Shat} no longer holds. Accordingly, the estimated pure subjects used by the two-step estimator are no longer accurate with the same high probability. In contrast, Algorithm \ref{alg:Algorithm} iteratively re-estimates the pure-subjects, which may facilitate improved estimation of the pure subjects and consequently a faster rate of estimation of membership parameters.

\begin{figure}[htbp]
    \centering
    \begin{subfigure}[t]{0.9\linewidth}
        \centering
        \includegraphics[width=\linewidth]{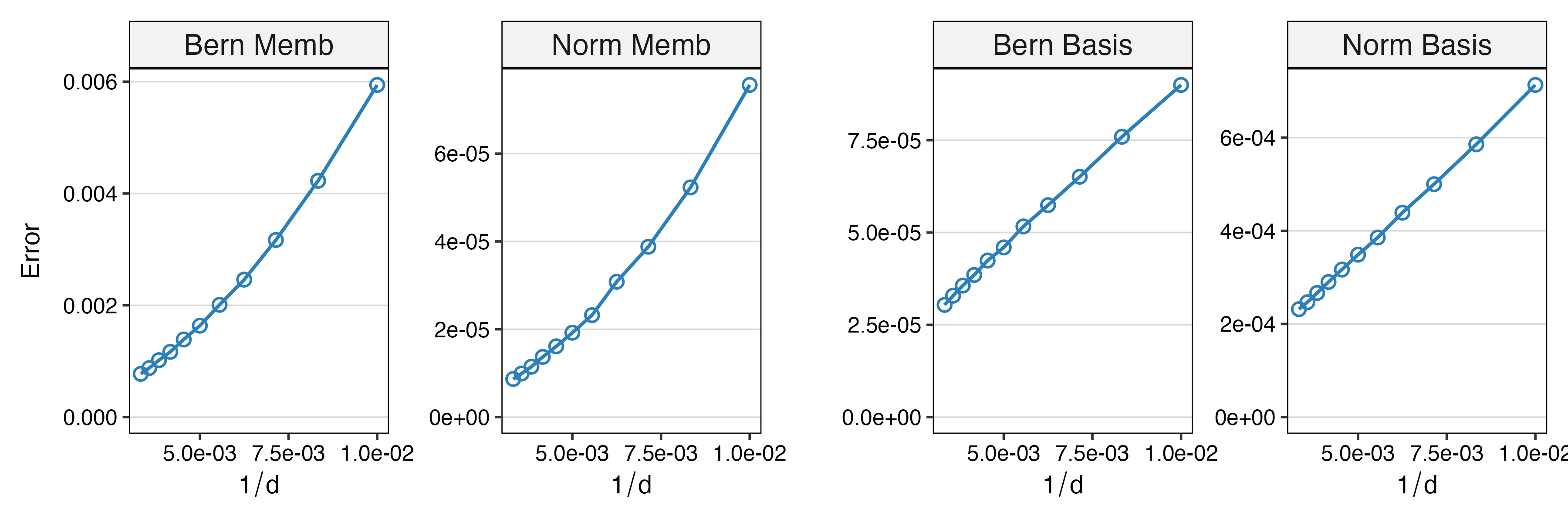}
        \subcaption{Rate for $d$}
    \end{subfigure}
    \begin{subfigure}[t]{0.9\linewidth}
        \centering
        \includegraphics[width=\linewidth]{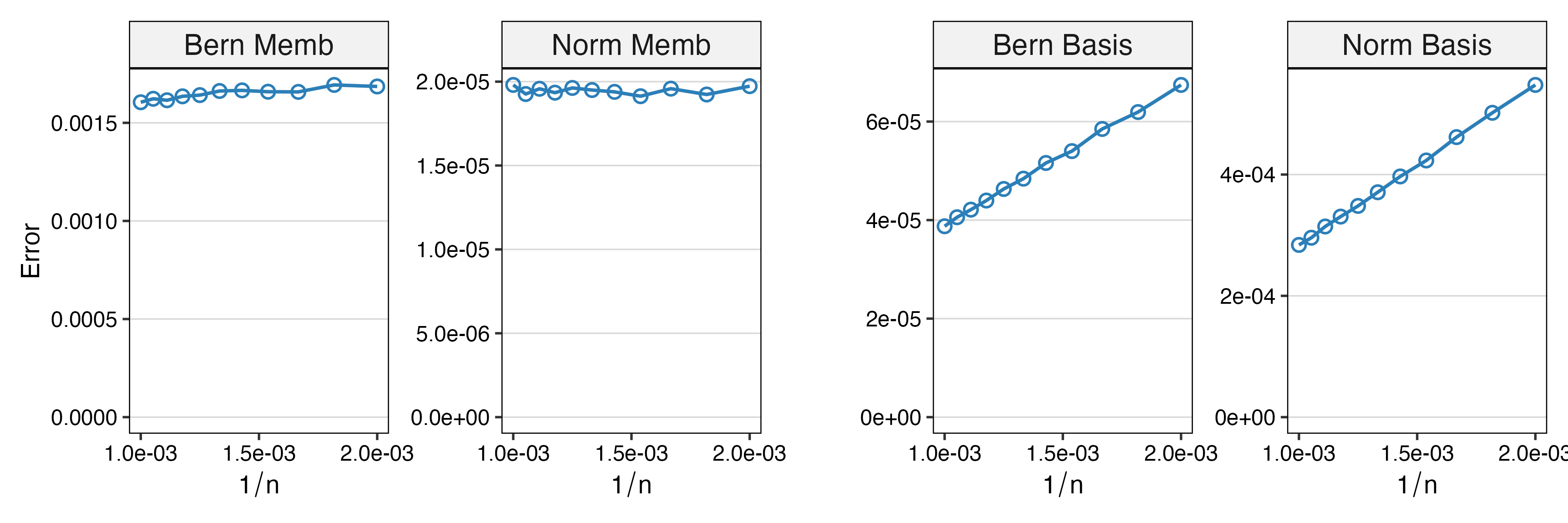}
        \subcaption{Rate for $n$}
    \end{subfigure}

    \caption{Empirical verification that Algorithm \ref{alg:Algorithm} attains the rates obtained in Theorem \ref{thm:decouple-rate}}
    \label{fig:RateVerification}
\end{figure}

\subsection{Simulated Neuroscience Application}

\begin{figure}[htbp]
    \centering
    \begin{subfigure}[t]{0.9\linewidth}
        \centering
        \includegraphics[width=\linewidth]{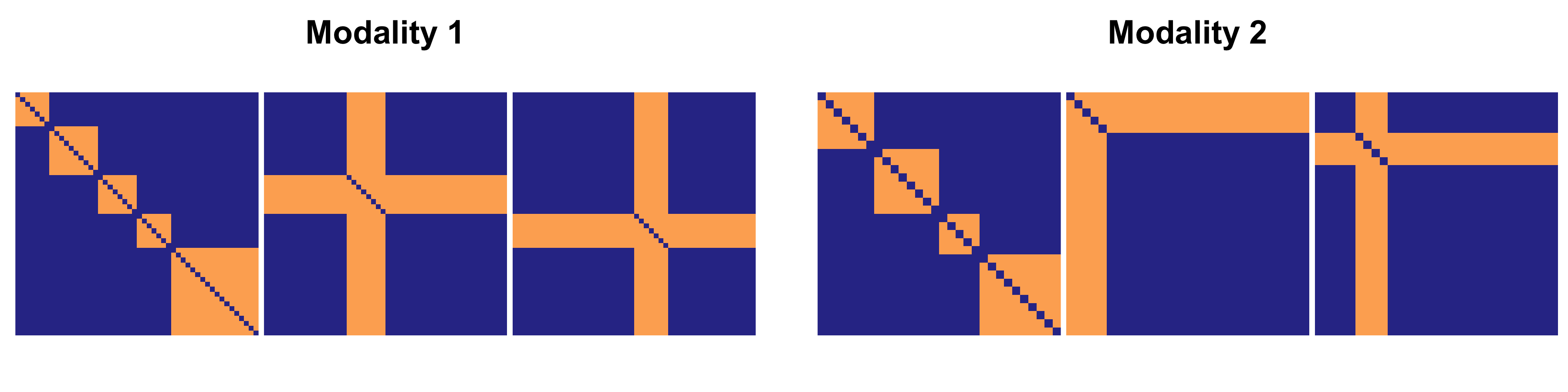}
        \subcaption{Signals}
    \end{subfigure}
    \begin{subfigure}[t]{0.9\linewidth}
        \centering
        \includegraphics[width=\linewidth]{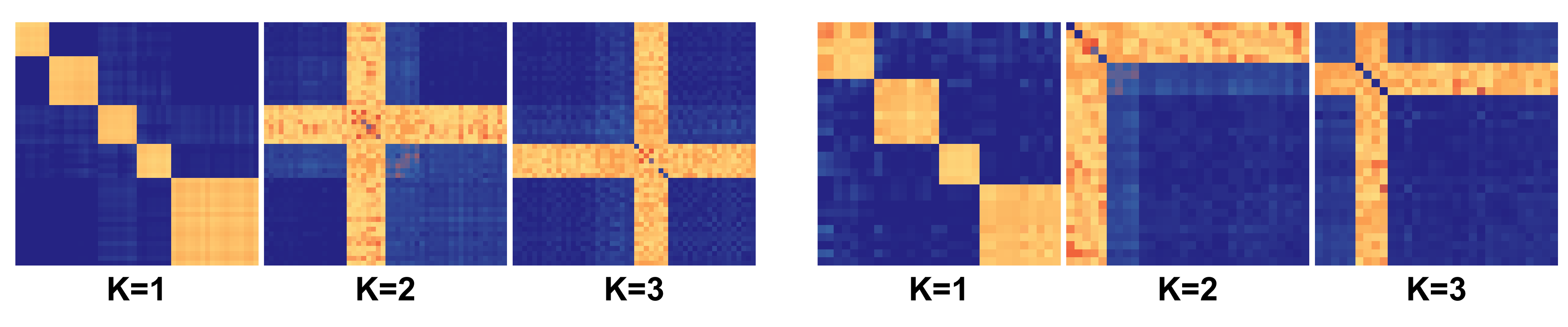}
        \subcaption{Estimates}
    \end{subfigure}

    \caption{Simulated application of multimodal algorithm to a bimodal dataset where each modality has different dimension. Estimated basis matrices accurately estimate the signal basis matrices used to simulate the data. Orange is 1 and dark blue is 0.}
    \label{fig:NeuroSim}
\end{figure}

We provide a visual demonstration of Algorithm \ref{alg:Algorithm-MM} on a simulated bimodal dataset designed to mimic a neuroscience application. Each individual in the simulated dataset supplies two matrices. The first matrix is a $50 \times 50$ matrix corresponding to the first modality, and the second matrix is a $30 \times 30$ matrix corresponding to the second modality. With $K=3$, we manually set the true ``signal" basis matrices to be the binary matrices pictured in the top row of Figure \ref{fig:NeuroSim}(a), with the three on the left belonging to the first modality, and the three on the right belonging to the second modality. Visually, orange corresponds to $1$, and dark blue corresponds to $0$. We generate the membership scores for $n=200$ subjects as Dirichlet(3,2,1.5), representing uneven mixed membership among the extreme latent profiles with no strict enforcement of the separation condition or pure subjects condition. We apply Algorithm \ref{alg:Algorithm-MM} with $r_{U,1}=r_{U,2}=r_{V,1}=r_{V,2}=10$, $r=5$, $K=3$, and $\gamma_1=\gamma_2=1$. The estimated basis matrices are pictured in the bottom row of Figure \ref{fig:NeuroSim} after permutation of extreme latent profiles to visually match with their corresponding signals. Despite the unevenness and the lack of pure subjects and lack of separation of the true membership scores, the basis matrix estimation is accurate.

\section{Additional Data Analysis Details}\label{sec:AdditionalData}

\subsection{Preprocessing Details}\label{sec:PreProcessing}
Preliminary preprocessing steps are provided in \citep{VanEssen13, Glasser2013MinimalPreprocessing}. After this, we applied the same additional preprocessing steps used in \cite{Sripada19} to obtain time series of 264 ``regions of interest" (ROIs) within the brain, these regions being initially proposed in \cite{Power11}. For each (subject, modality) pair, we obtained the absolute value of correlation between all $\binom{264}{2}$ pairs of ROIs across the corresponding time series, providing a measurement of ``functional connectivity" between these regions. Then, we collected these data into a binary matrix by assigning entry $(j_1,j_2)$ of the $264 \times 264$ matrix $\X_{i,m}$ to be $1$ if the absolute value of the correlation between ROI $j_1$ and $j_2$ in the time series of subject $i$ in modality $m$ is greater than threshold 0.2. If $j_1 = j_2$, the entry was assigned to be $0$. Hence, the processed data for each subject and modality comes in the form of an adjacency matrix in which a value of 1 indicates the presence of functional connectivity between two regions. The HCP also includes several auxiliary phenotypic covariates measuring performance on various cognitive tests, which we use for downstream interpretation of our estimated latent structures. After removing subjects with erroneous data according to the preprocessing steps in \cite{Sripada19} and additionally removing subjects with missing modalities and a covariate with few subject responses, we are left with a dataset of $n=913$ subjects and $W=24$ covariates.

\subsection{Selection of Parameters}\label{sec:ModelSelection}

In order to apply our method, we must choose parameters $K, ((r_{k,m})_{k=1}^K)_{m=1}^M$. To select $K$, we perform a tensor-informed scree plot method. To justify our scree plot method, we introduce the following notation. Firstly, we denote by $\underline{\bcalX}$ the $d_1 \times d_2 \times n \times M$ tensor that satisfies $\underline{\bcalX}(:,:,:,m) = \bcalX_m$, where $\bcalX_m$ is the $d_1 \times d_2 \times n$ tensor that satisfies $\bcalX_m(:,:,i) = \X_{i,m}$. Denote by $\underline{\bcalM}^*$ the $d_1 \times d_2 \times K \times M$ tensor satisfying $\underline{\bcalM}^*(:,:,k,m) = \M_{k,m}^*$, and denote by $\underline{\Y}$ the $n \times (d_1d_2M)$ matrix given by the mode-3 unfolding of $\underline{\bcalX}$. We analyze the singular values of $\underline{\Y}$ since the rank of its expectation $\EE[\underline{\Y}] = \bPi^* M_3(\underline{\bcalM}^*) = \underline{\U}^* \underline{\bSigma}^* \underline{\V}^{*\top}$ is $K$. To do this, we need to estimate $\underline{\bSigma}^*$, and we can account for the tensor structure of our dataset in this estimation by using the following Lemma. \begin{lemma}\label{lem:model-selection}
    $\text{rank}(M_j(\EE[\underline{\bcalX}])) = \underline{r}_j$ for $j=1,2,3,4$, where $\underline{r}_1, \underline{r}_2 \leq \sum_{m=1}^M \sum_{k=1}^K r_{k,m}$, $\underline{r}_3 \leq K$, $\underline{r}_4 \leq M$.
\end{lemma}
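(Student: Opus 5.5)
We will prove Lemma~\ref{lem:model-selection} by writing $\EE[\underline{\bcalX}]$ explicitly as a Tucker product, mirroring the proof of Lemma~\ref{lem:rank-size}, and then reading off the mode-wise ranks from the dimensions of the Tucker factors. For each $m\in[M]$ and $k\in[K]$, take the compact SVD $\M_{k,m}^* = \U_k^{(m)*}\bSigma_k^{(m)*}\V_k^{(m)*\top}$. This decomposition is meaningful when $d_{1,m}=d_1$ and $d_{2,m}=d_2$ for all $m$, which is needed for $\underline{\bcalX}$ to be a single four-way array. Form the concatenations
\[
\U^{(m)*}_\cup = (\U_1^{(m)*}|\cdots|\U_K^{(m)*}), \qquad \underline{\U}^*_\cup = (\U^{(1)*}_\cup|\cdots|\U^{(M)*}_\cup)\in\RR^{d_1\times \underline{\mathring r}},
\]
where $\underline{\mathring r}=\sum_m\sum_k r_{k,m}$. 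Define $\underline{\V}^*_\cup$ in the same way. Let $\underline{\bcalC}^*\in\RR^{\underline{\mathring r}\times\underline{\mathring r}\times K\times M}$ be the core whose $(k,m)$ slice is block-diagonal, with $\bSigma_k^{(m)*}$ in the block indexed by $(m,k)$ and zeros elsewhere.

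The key step is to verify the identity
\[
\EE[\underline{\bcalX}] = \underline{\bcalC}^*\times_1\underline{\U}^*_\cup\times_2\underline{\V}^*_\cup\times_3\bPi^*\times_4\I_M.
\]
We check this entrywise on each slice. Fix $(i,m)$. The slice equals $\sum_k \bPi^*_{ik}\,\underline{\U}^*_\cup\,\underline{\bcalC}^*(:,:,k,m)\,\underline{\V}^{*\top}_\cup$. The block structure of the core selects exactly $\U_k^{(m)*}\bSigma_k^{(m)*}\V_k^{(m)*\top}=\M^*_{k,m}$, so the slice is $\sum_k\bPi^*_{ik}\M^*_{k,m}=\EE[\X_{i,m}]$, as required.

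Next we unfold the identity along each mode. For a Tucker product, the mode-$j$ unfolding has the form $\A_j\,M_j(\underline{\bcalC}^*)\,(\text{Kronecker product of the other factors})^\top$, where $\A_j$ is the mode-$j$ factor. Its rank is therefore at most the number of columns of $\A_j$. This gives $\underline r_1,\underline r_2\le\underline{\mathring r}$, $\underline r_3\le K$ from $\bPi^*\in\RR^{n\times K}$, and $\underline r_4\le M$ from $\I_M$. The bound $\underline r_4\le M$ is trivial in any case, since the fourth mode has dimension $M$.

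The only delicate point is keeping the Kronecker and vectorization ordering consistent with the paper's conventions. Because rank bounds only require a factorization through an inner dimension, this bookkeeping does not affect the conclusion. So the argument reduces to the slice-wise verification of the Tucker identity above.
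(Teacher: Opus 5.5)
Your proposal is correct and matches the paper's proof: the paper also writes $\EE[\underline{\bcalX}]$ as a Tucker product with the concatenated factors $\underline{\U}^*_\cup$ and $\underline{\V}^*_\cup$, the mode-3 factor $\bPi^*$, and the same block-diagonal core $\underline{\bcalC}^*$ that places $\bSigma_k^{(m)*}$ in the $(m,k)$ block. Your slice-wise check of the identity, the explicit mode-4 identity factor, and the unfolding argument for the rank bounds fill in steps that the paper leaves implicit.
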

\noindent $\EE[\underline{\bcalX}]$ can be written as 
% \begin{align*}
    $\bcalS^* \times_1 \underline{\W}_1^* \times_2 \underline{\W}_2^* \times_3 \underline{\W}_3^* \times_4 \underline{\W}_4^*,$ 
% \end{align*}
where $\underline{\W}_j^*$ have orthogonal columns and satisfy $\text{rank}(\underline{\W}_j^*) = \underline{r}_j$ for $j=1,2,3,4$. We know $\EE[\underline{\Y}] = \underline{\U}^* \underline{\bSigma}^* \underline{\V}^{*\top} = \underline{\W}_3^* M_3(\bcalS^*) [\underline{\W}_1^* \otimes \underline{\W} _2^* \otimes \underline{\W}_4^*]^T$. Therefore, we estimate $\underline{\bSigma}^*$ with $\hat{\underline{\bSigma}}$, the diagonal matrix consisting of the singular values of $\hat{\underline{\W}}_3\hat{\underline{\W}}^{\top}_3 \underline{\Y} [\hat{\underline{\W}}_1\hat{\underline{\W}}_1^\top \otimes \hat{\underline{\W}}_2 \hat{\underline{\W}}_2^\top \otimes \hat{\underline{\W}}_4 \hat{\underline{\W}}_4^\top]$, where $(\hat{\underline{\W}}_1, \hat{\underline{\W}}_2, \hat{\underline{\W}}_3, \hat{\underline{\W}}_4)$ are obtained from HOOI on $\underline{\bcalX}$ with ranks $(\underline{r}_1, \underline{r}_2, \underline{r}_3, \underline{r}_4)$. We used $(\underline{r}_1, \underline{r}_2, \underline{r}_3, \underline{r}_4) = (14, 14, 14, 8)$ since these were the elbows of the singular values of the mode-wise unfoldings of $\underline{\bcalX}$. The resulting estimated singular values are presented in Figure \ref{fig:ModelSelection}, with the large first singular value omitted to facilitate visibility of the remaining singular values. There is some ambiguity as to whether $K=3$ or $K=6$ is the elbow. We choose $K=3$ based on the observation that this value leads to a more interpretable fit. Similar choices have become common in the topic modeling literature (\cite{KeWang24}). 

To select $((r_{k,m})_{k=1}^K)_{m=1}^M$, we first make the simplifying assumption that $r_{k,m} = r$ for all $m, k$. We perform steps \ref{alg-mm:first-step} to \ref{alg-mm:renormalize-init} of Algorithm \ref{alg:Algorithm-MM} using $K=3$ and conservative preliminary estimates $r_{U,m}=r_{V,m}=14$ to obtain $\hat{\bPi}^{(0)}$. For each modality $m$, we obtained $\tilde{\bTheta}_m^{(1)}$ as in step \ref{alg-mm:theta-tilde}, and then obtained the singular values of the matricization of each column of $\tilde{\bTheta}_m^{(1)}$, denoted by $\sigma_{r,k,m}$. Denote by $\sigma_r$ the sum $\sum_{m=1}^M \sum_{k=1}^K \sigma_{r,k,m}$. We looked for an elbow in the plot $(\sigma_r)_{r}$, since if all matrices $\M_{k,m}^*$ are rank $r^*$, then we anticipate the plot of $(\sigma_r)_r$ to have an elbow at $r^*$ provided appropriate signal-to-noise conditions hold and the initial estimation $\tilde{\bTheta}_m^{(1)}$ is sufficiently accurate. Using this method, we identify an elbow at $r=3$ as shown in Figure \ref{fig:ModelSelection}. Hence, we use $K=3, r=3$ when fitting our algorithm. We also use $r_{U,m} = r_{V,m} = r \times K = 9$, since by \Cref{lem:rank-size}, this is a conservative upper bound. 

\begin{figure}[htbp]
    \centering
    \begin{subfigure}[t]{0.3\linewidth}
        \centering
        \includegraphics[width=\linewidth]{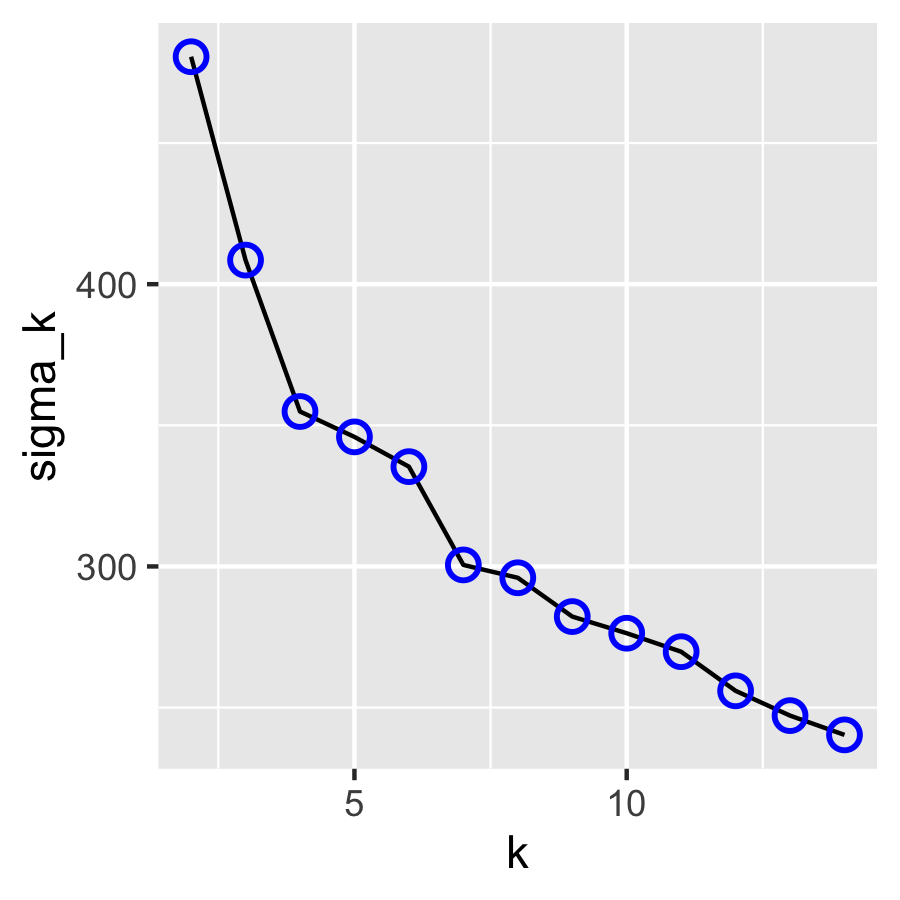}
        \subcaption{Selection of $K$}
    \end{subfigure}
    \qquad
    \begin{subfigure}[t]{0.3\linewidth}
        \centering
        \includegraphics[width=\linewidth]{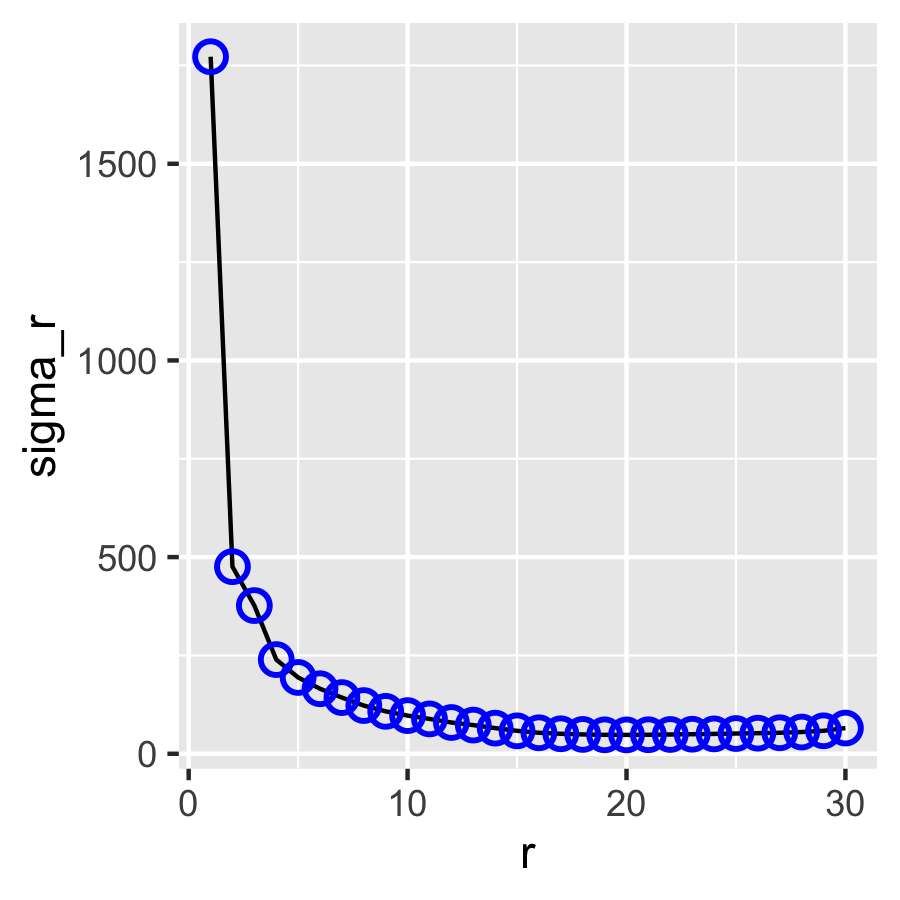}
        \subcaption{Selection of $r$}
    \end{subfigure}

    \caption{Scree plot methods for selection of $K$ and $r$. The first singular value is omitted from plot (a) in order to facilitate observation of the remaining smaller singular values.}
    \label{fig:ModelSelection}
\end{figure}

\subsection{Obtaining basis matrices from Tensor Mixed Membership Blockmodel}

The Tensor Mixed Membership Blockmodel models the $n \times d_1 \times d_2$ data tensor $\bcalX$ whose $i$-th slice is $\bcalX(:,:,i) = \X_i$ via Tucker decomposition as \begin{align}\label{eq:TMMBM}
\begin{split}
    \EE \bcalX &= \bcalC \times_1 \U_1 \times_2 \U_2 \times_3 \U_3 \\
    &= \bcalS \times_1 \bPi_1 \times_2 \bPi_2 \times_3 \bPi_3,
\end{split}
\end{align}
where $\bPi_l = \U_l (\U_l)_{S_l, :}^{-1}$ for $l=1,2,3$ \citep{AZ25}. Given that in our setting $\bcalX$ is a dataset of matrix-valued observations, a natural interpretation of tensor $\bcalM := \bcalS \times_1 \bPi_1 \times_2 \bPi_3$ is the tensor whose $k$-th slice $\bcalM(:,:,k)$ is the k-th basis matrix; indeed, the above model implicitly expresses the $i$-th subject's data as $\sum_{k=1}^K \bPi_3(i,k) \bcalM(:,:,k)$. Hence, our estimate of the $k$-th basis matrix under TMMBM is given by $\hat{\bcalM}(:,:,k)$, where $\hat{\bcalM} := \hat{\bcalS} \times_1 \hat{\bPi}_1 \times_2 \hat{\bPi}_2$ and $\hat{\bcalS} := \hat{\bcalC} \times_1 (\hat{\U}_1)_{\hat{S}_1, :} \times_2 (\hat{\U}_2)_{\hat{S}_2, :} \times_3 (\hat{\U}_3)_{\hat{S}_3, :}$, and $\hat{\bPi}_l = \hat{\U}_l (\hat{\U}_l)_{\hat{S}_l,:}^{-1}$. Here, $(\hat{\bcalC}, \hat{\U}_1, \hat{\U}_2, \hat{\U}_3)$ are the parameters estimated by HOOI on $\bcalX$, and $\hat{S}_l$ are the pure subject indices estimated by the Successive Projection Algorithm on the rows of $\hat{\U}_l$.

\subsection{Measuring prediction error}

Here, we present in full detail our procedure for obtaining out-of-sample prediction error on all $n=913$ observations for prediction of phenotypic covariate using each method's estimated embedding as regressors. We randomly divide the $913$ observations into $B=10$ folds. For each fold, we regard that fold as the ``test set" and the remaining folds as the ``training set." We fit each method on the training set, fit an OLS model regressing covariate on embedding using the training set, project the observations in the test set onto the membership/score space of the training set, and then calculate squared error of the OLS model for each observation in the test set. In order to be precise, we must specify how we project the observations in the test set onto the membership/score space of the training set. The manner in which we do this projection is necessarily different for each of our considered methods. For LrLloyd, we assign each observation in the test set to the class corresponding to the basis matrix that is closest to the observation in Frobenius norm distance. Similarly, for uLrMMM, we assign observation $i$ in the test set to the membership vector given by $\text{argmin}_{\bpi \in \Delta_k} || \X_i - \sum_{k=1}^K \bpi_{ik} \hat{\M}_k||_F^2$, where $(\hat{\M}_k)$ are the basis matrices estimated on the training set. And for mLrMMM, we assign observation $i$ in the test set to the membership vector given by $\text{argmin}_{\bpi \in \Delta_k} \sum_m \lambda_m || \X_{i,m} - \sum_{k=1}^K \bpi_{ik} \hat{\M}_{k,m}||_F^2$, where $(\hat{\M}_{k,m})$ are the basis matrices estimated on the training set.

Our projection method for TMMBM differs substantially from the previous methods due to differences in TMMBM's modeling structure. In contrast to the previously mentioned methods, TMMBM estimates its membership embedding not by finding membership vectors that minimize $|| \X_i - \sum_{k=1}^K \bpi_{ik} \hat{\M}_k||_F^2$ for basis matrices $(\hat{\M}_k)$, but rather using the Successive Projection Algorithm (SPA) on the singular subspaces estimated by Higher Order Orthogonal Iteration (HOOI) on the data tensor $\bcalX$. We incorporate this subspace-based estimation into our projection procedure as follows. Supposing we have $n_{tr}$ observations in the training set and $n_{te}$ in the test set, denote by $\bcalX^{(tr)}$ the $n_{tr} \times d_1 \times d_2$ tensor collecting the training data, and denote by $\bcalX^{(te)}$ the $n_{te} \times d_1 \times d_2$ tensor collecting the test data. According to the TMMBM model structure expressed in \eqref{eq:TMMBM}, the expectation of the training dataset is estimated as $M_3(\bcalX^{(tr)}) \approx \U^{(tr)}_3 M_3(\bcalC^{(tr)}) [\U^{(tr)}_1 \otimes \U^{(tr)}_2]^\top$, where $\bcalC^{(tr)}, \U^{(tr)}_1, \U^{(tr)}_2, \U^{(tr)}_3$ are the parameters estimated by HOOI on $\bcalX^{(tr)}$. Noting that the mode-3 subspace $\U^{(tr)}_3$ provides an embedding of the observations in the training set, we extend this embedding to the observations in the test set via $\U^{(te)}_3 = \text{argmin}_{\U \in \R^{n_{te} \times K}} || M_3(\bcalX^{(te)}) - \U M_3( \bcalC^{(tr)} ) [\U^{(tr)}_1 \otimes \U^{(tr)}_3]^\top ||_F^2$. We then assign to the $i$-th subject in the test set the membership vector given by the projection of the $i$-th row of the following matrix onto $\bDelta_K$: $\U^{(te)}_3 [(\U^{(tr)}_3)(S^{(tr)}_3, :)]^{-1}$.

As for SS-TPCA, starting with $k=1$, for each observation $i$ in the test set, we obtain $\hat{s}_{i,1} = \text{argmin}_{s \in \RR} || \X_i - s \hat{\M}_1||_F^2$, where $\hat{\M}_1$ is the first factor loading matrix estimated on the training set. We then obtain the residual $\mathring{\X}_i := \X_i - \hat{s}_{i,1} \hat{\M}_1$ and estimate $\hat{s}_{i,2} = \text{argmin}_{s \in \RR} || \mathring{\X}_i - s \hat{\M}_2||_F^2$, where $\hat{\M}_2$ is the second factor loading matrix estimated on the training set. Since we are using $K=2$ for SS-TPCA and gJisstPCA as remarked in Section \ref{sec:ModelComparison}, we are finished; $\hat{s}_i = (\hat{s}_{i,1}, \hat{s}_{i,2})$ are the estimated scores of subject $i$. 

As for gJisstPCA, we follow a procedure similar to SS-TPCA. Starting with $k=1$, for each observation $i$ in the test set, we obtain $\hat{s}_{i,1} = \text{argmin}_{s \in \RR} \sum_m \lambda_m || \X_{i,m} - s \hat{\M}_{1,m}||_F^2$, where $(\hat{\M}_{1,m})$ are the first factor loading matrices for the different modalities estimated on the training set. We then obtain the residual $\mathring{\X}_{i,m} := \X_{i,m} - \hat{s}_{i,1} \hat{\M}_{1,m}$ and estimate $\hat{s}_{i,2} = \text{argmin}_{s \in \RR} || \mathring{\X}_{i,m} - s \hat{\M}_{2,m}||_F^2$, where $(\hat{\M}_{2,m})$ is the second factor loading matrix estimated on the training set. $\hat{s}_i = (\hat{s}_{i,1}, \hat{s}_{i,2})$ are the estimated scores of subject $i$. 

Our estimate of relative mean squared error is defined as follows. Let $c_{i,j}$ denote the true value of the $j$-th covariate for subject $i$, and for each method, let $\hat{c}^{(\text{method})}_{i,j}$ denote the prediction of the covariate using the OLS model that takes the embedding parameters as its regressors. The relative mean squared error is defined as \begin{align*}
    \text{RelMSE}(j, \text{method}) = \frac{\sum_{i=1}^n (\hat{c}^{(\text{method})}_{i,j} - c_{i,j})^2}{\sum_{i=1}^n (\hat{c}^{(\text{TMMBM})}_{i,j} - c_{i,j})^2}.
\end{align*}

To obtain bootstrap confidence intervals for this relative mean squared error, for $2000$ simulations we sample $913$ subjects with replacement and calculate $\text{RelErr}(j, \text{method})$ using these sampled subjects. We then use the $0.025$- and $0.975$-quantiles of the estimated RelMSE($j$, method) values as the lower and upper bounds of our confidence interval, respectively. 

\subsection{Full results}\label{sec:HCPFullResults}

The $K=3$ estimated basis matrices for all $M=8$ modalities estimated by mLrMMM are pictured in Figures \ref{fig:BasisFirst} and \ref{fig:BasisLast}. The kernel-smoothed ternary plots for all 24 covariates are pictured in Figure \ref{fig:ternary_phenotype}. The complete table of regression results is pictured in Table \ref{tab:findings_table}. Brief descriptions of the phenotypic covariates are provided in Table \ref{tab:hcp_phenotypes_brief}.

\begin{figure}[htbp]
    \centering
    \begin{subfigure}[c]{0.75\textwidth}
      \centering
      \includegraphics[width=\linewidth]{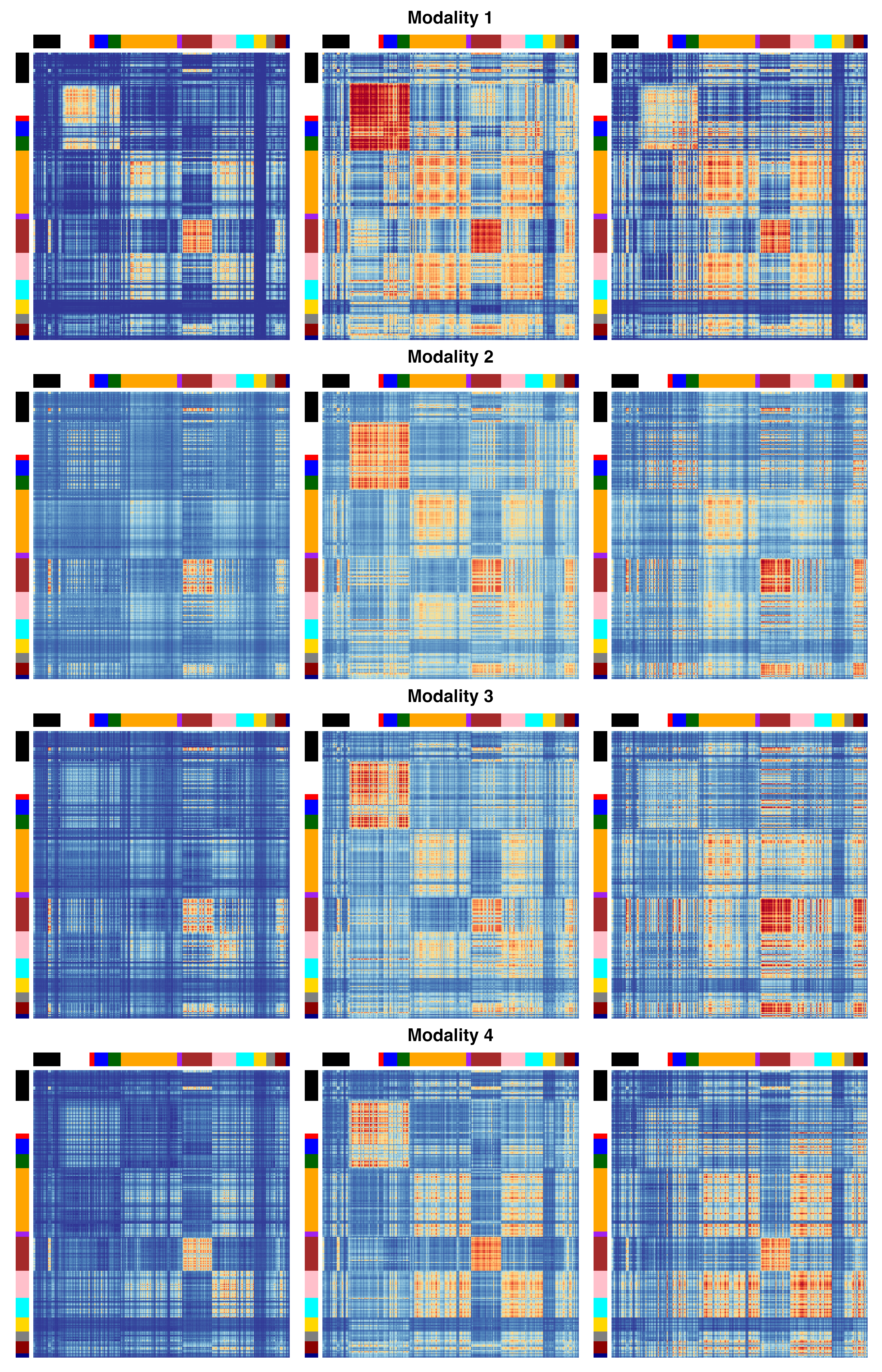}
    \end{subfigure}%
    \begin{subfigure}[c]{0.14\textwidth}
      \centering
      \includegraphics[width=\linewidth]{Figures/Findings/module_key.png}
    \end{subfigure}
    
    \caption{Three basis matrices of HCP dataset for first four modalities: resting, emotion, gambling, and language tasks. Values represent a measurement of functional connectivity, and warmer color indicates higher estimated value. Results are plotted alongside functional modules from domain literature, classified as follows: UN=Uncertain, SMH=Sensory/Somatomotor Hand, SMM=Sensor/Somatomotor Mouth, CO=Cingulo-opercular Task Control, AD=Auditory, DM=Default Mode, MR=Memory Retrieval, VS=Visual, FP=Fronto-parietal Task control, SA=Salience, SC=Subcortical, VAT=Ventral Attention, DAT=Dorsal Attention, CRB=Cerebellar.}
    \label{fig:BasisFirst}
\end{figure}

\begin{figure}[htbp]
    \centering
    \begin{subfigure}[c]{0.75\textwidth}
      \centering
      \includegraphics[width=\linewidth]{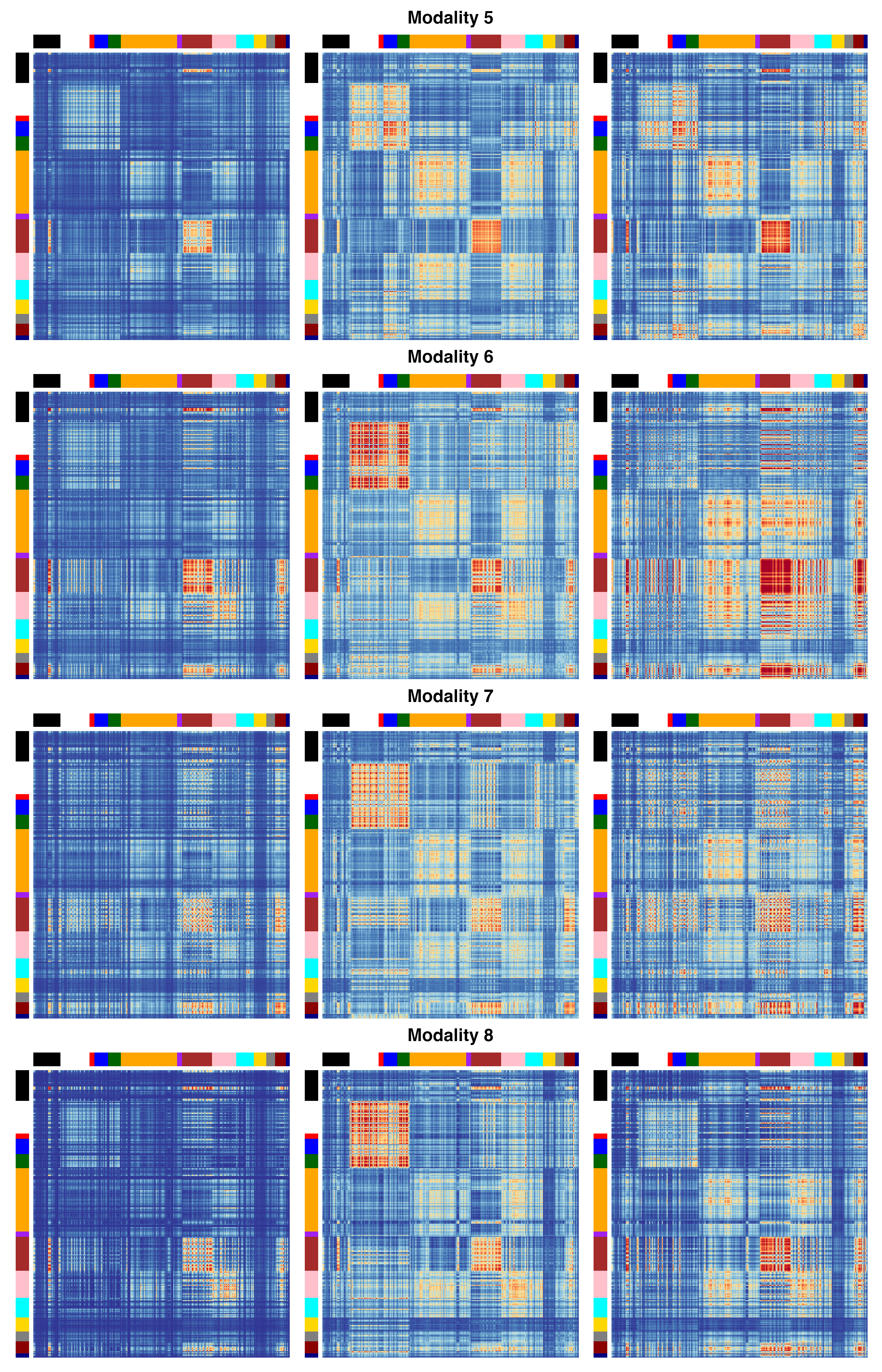}
    \end{subfigure}%
    \begin{subfigure}[c]{0.14\textwidth}
      \centering
      \includegraphics[width=\linewidth]{Figures/Findings/module_key.png}
    \end{subfigure}
    
    \caption{Three basis matrices of HCP dataset for last four modalities: motor, relational, social, and working memory tasks. Values represent a measurement of functional connectivity, and warmer color indicates higher estimated value. Results are plotted alongside functional modules from domain literature, classified as follows: UN=Uncertain, SMH=Sensory/Somatomotor Hand, SMM=Sensor/Somatomotor Mouth, CO=Cingulo-opercular Task Control, AD=Auditory, DM=Default Mode, MR=Memory Retrieval, VS=Visual, FP=Fronto-parietal Task control, SA=Salience, SC=Subcortical, VAT=Ventral Attention, DAT=Dorsal Attention, CRB=Cerebellar.}
    \label{fig:BasisLast}
\end{figure}

\begin{figure}
    \centering
    \includegraphics[width=0.75\linewidth]{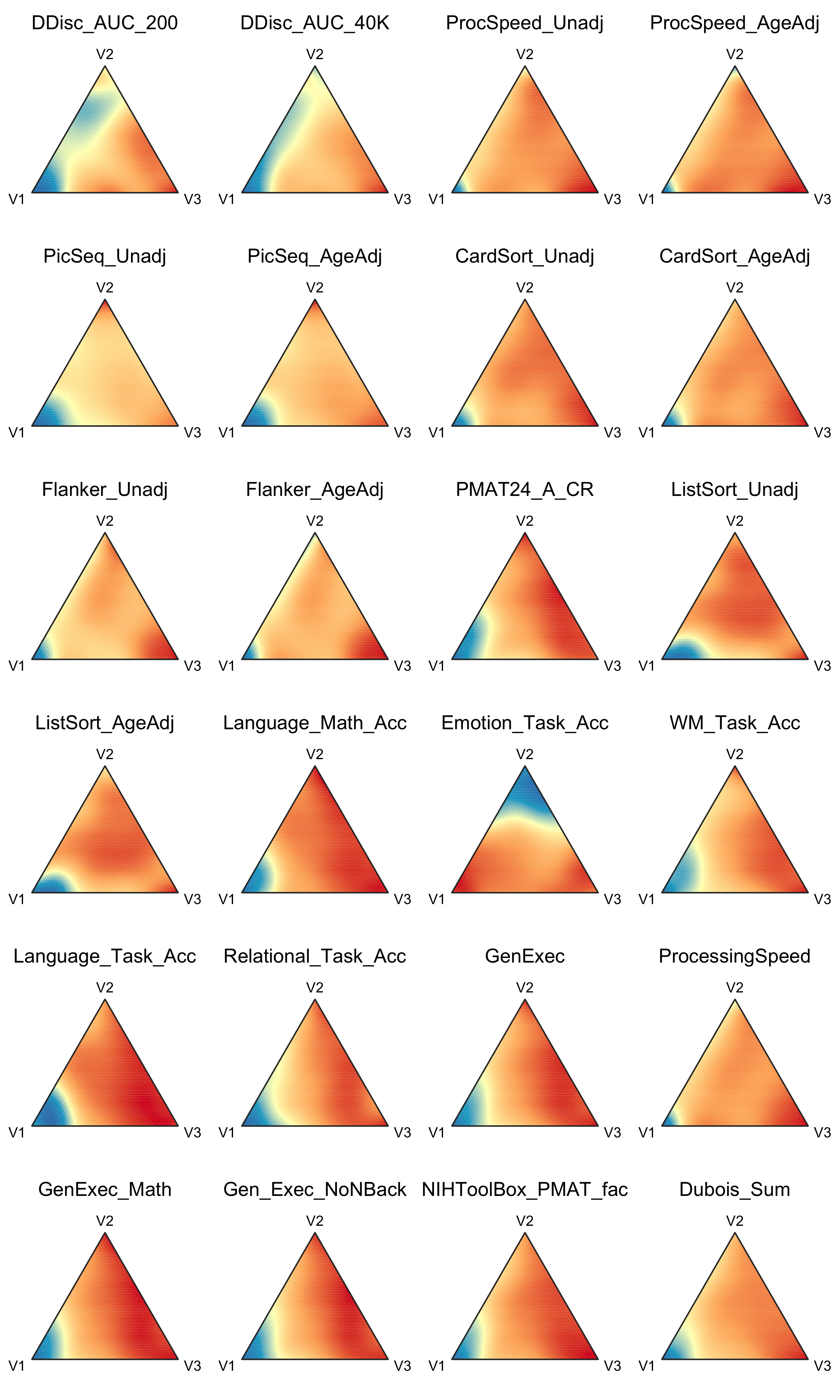}
    \caption{Estimated membership matrix from HCP Dataset colored by kernel-smoothed unseen phenotypic covariates. V1, V2, and V3 represent latent extreme profiles 1, 2, and 3 respectively. Warmer color represents higher value. Higher membership in latent extreme profile 3 is strongly associated with improved performance on a variety of cognitive tests, including tests of working memory, language, and mathematics.}
    \label{fig:ternary_phenotype}
\end{figure}

\begin{table}[h!]
\centering
\begin{tabular}{lrrlrrl}
   & \multicolumn{3}{c}{Membership vector 2} &  \multicolumn{3}{c}{Membership vector 3} \\ 
 \hline
Covariate & coef & fdr\_pval & sig & coef & fdr\_pval & sig \\ 
  \hline
DDisc\_AUC\_200 & 0.01 & 0.835 &  & 0.17 & 0.000 & x \\ 
  DDisc\_AUC\_40K & 0.05 & 0.501 &  & 0.31 & 0.000 & x \\ 
  ProcSpeed\_Unadj & 3.55 & 0.374 &  & 9.06 & 0.005 & x \\ 
  ProcSpeed\_AgeAdj & -0.40 & 0.930 &  & 10.15 & 0.016 & x \\ 
  PicSeq\_Unadj & 5.08 & 0.127 &  & 6.77 & 0.017 & x \\ 
  PicSeq\_AgeAdj & 3.39 & 0.435 &  & 7.96 & 0.023 & x \\ 
  CardSort\_Unadj & 5.48 & 0.030 & x & 7.05 & 0.001 & x \\ 
  CardSort\_AgeAdj & 1.93 & 0.448 &  & 5.90 & 0.005 & x \\ 
  Flanker\_Unadj & 2.89 & 0.264 &  & 5.27 & 0.015 & x \\ 
  Flanker\_AgeAdj & 0.20 & 0.930 &  & 4.51 & 0.033 & x \\ 
  PMAT24\_A\_CR & 4.44 & 0.000 & x & 4.67 & 0.000 & x \\ 
  ListSort\_Unadj & 6.87 & 0.015 & x & 4.55 & 0.061 &  \\ 
  ListSort\_AgeAdj & 4.60 & 0.168 &  & 4.39 & 0.127 &  \\ 
  Language\_Math\_Acc & 7.60 & 0.002 & x & 8.56 & 0.000 & x \\ 
  Emotion\_Task\_Acc & -3.64 & 0.000 & x & -0.42 & 0.606 &  \\ 
  WM\_Task\_Acc & 7.35 & 0.001 & x & 11.23 & 0.000 & x \\ 
  Language\_Task\_Acc & 5.74 & 0.001 & x & 7.66 & 0.000 & x \\ 
  Relational\_Task\_Acc & 9.58 & 0.002 & x & 13.03 & 0.000 & x \\ 
  GenExec & 1.06 & 0.000 & x & 1.38 & 0.000 & x \\ 
  ProcessingSpeed & 0.08 & 0.751 &  & 0.65 & 0.002 & x \\ 
  GenExec\_Math & 1.06 & 0.000 & x & 1.24 & 0.000 & x \\ 
  Gen\_Exec\_NoNBack & 1.01 & 0.000 & x & 1.20 & 0.000 & x \\ 
  NIHToolBox\_PMAT\_fac & 0.73 & 0.001 & x & 1.16 & 0.000 & x \\ 
  Dubois\_Sum & 36.85 & 0.004 & x & 60.08 & 0.000 & x \\ 
   \hline
\end{tabular}
\caption{Regression results for each unseen phenotypic covariate vs membership matrix obtained from multimodal algorithm on HCP Data. Regression is done with intercept.} 
\label{tab:findings_table}
\end{table}

\begin{table}[t]
\centering
\caption{HCP phenotypic covariates (brief descriptions).}
\label{tab:hcp_phenotypes_brief}
\begin{tabular}{ll}
\toprule
Covariate & Brief meaning \\
\midrule
\texttt{DDisc\_AUC\_200}          & Delay discounting AUC (\$200) \\
\texttt{DDisc\_AUC\_40K}          & Delay discounting AUC (\$40K) \\
\texttt{ProcSpeed\_Unadj}         & Processing speed, unadjusted \\
\texttt{ProcSpeed\_AgeAdj}        & Processing speed, age-adjusted \\
\texttt{PicSeq\_Unadj}            & Picture sequence memory, unadjusted \\
\texttt{PicSeq\_AgeAdj}           & Picture sequence memory, age-adjusted \\
\texttt{CardSort\_Unadj}          & Card sort flexibility, unadjusted \\
\texttt{CardSort\_AgeAdj}         & Card sort flexibility, age-adjusted \\
\texttt{Flanker\_Unadj}           & Flanker inhibition/attention, unadjusted \\
\texttt{Flanker\_AgeAdj}          & Flanker inhibition/attention, age-adjusted \\
\texttt{PMAT24\_A\_CR}             & Penn matrices correct responses \\
\texttt{ListSort\_Unadj}          & List sorting working memory, unadjusted \\
\texttt{ListSort\_AgeAdj}         & List sorting working memory, age-adjusted \\
\texttt{Language\_Math\_Acc}       & Language task math accuracy \\
\texttt{Emotion\_Task\_Acc}        & Emotion task accuracy \\
\texttt{WM\_Task\_Acc}             & Working-memory task accuracy \\
\texttt{Language\_Task\_Acc}       & Language task overall accuracy \\
\texttt{Relational\_Task\_Acc}     & Relational reasoning task accuracy \\
\texttt{GenExec}                  & General executive factor score \\
\texttt{ProcessingSpeed}          & Processing speed factor score \\
\texttt{GenExec\_Math}            & Executive factor including math \\
\texttt{Gen\_Exec\_NoNBack}       & Executive factor without n-back \\
\texttt{NIHToolBox\_PMAT\_fac}    & NIH+PMAT cognition factor \\
\texttt{Dubois\_Sum}              & Dubois cognition sum score \\
\bottomrule
\end{tabular}
\end{table}